\numberwithin{equation}{section}
\theoremstyle{plain}
\newtheorem{proposition}{Proposition}[section]
\newtheorem{corollary}[proposition]{Corollary}
\newtheorem{coro}[proposition]{Corollary}
\newtheorem{lemma}[proposition]{Lemma}
\newtheorem{theorem}[proposition]{Theorem}
\newtheorem{property}[proposition]{Property}
\theoremstyle{definition}
\newtheorem{remark}[proposition]{Remark}
\newcommand{\R}{\mathbf{R}}
\begin{document}

\title[Isospectral flows related to FST polynomials]{Isospectral flows related to Frobenius-Stickelberger-Thiele polynomials}
\author{Xiang-Ke Chang}
\address{ LSEC, ICMSEC, Academy of Mathematics and Systems Science, Chinese Academy of Sciences, P.O.Box 2719, Beijing 100190, PR China; and School of Mathematical Sciences, University of Chinese Academy of Sciences, Beijing 100049, PR China.}
\email{changxk@lsec.cc.ac.cn}
\author{Xing-Biao Hu}
\address{ LSEC, ICMSEC, Academy of Mathematics and Systems Science, Chinese Academy of Sciences, P.O.Box 2719, Beijing 100190, PR China; and School of Mathematical Sciences, University of Chinese Academy of Sciences, Beijing 100049, PR China.}
\email{hxb@lsec.cc.ac.cn}

\author{Jacek Szmigielski}
\address{Department of Mathematics and Statistics, University of Saskatchewan, 106 Wiggins Road, Saskatoon, Saskatchewan, S7N 5E6, Canada.}
\email{szmigiel@math.usask.ca}

\author{Alexei Zhedanov}
\address{School of Mathematics, Renmin University of China, Beijing 100872, China.}
\email{zhedanov@ruc.edu.cn}

\subjclass[2010]{37K10,  35Q51, 15A15}
\date{}

\dedicatory{}

\keywords{Frobenius-Stickelberger-Thiele polynomials, modified Camassa-Holm equation, peakons, Toda lattice}

\begin{abstract} 
The isospectral deformations of the Frobenius-Stickelberger-Thiele (FST) polynomials introduced in \cite{spiridonov2007integrable} (Spiridonov et al. Commun. Math. Phys. 272:139--165, 2007 ) are studied. For a specific choice of the deformation of the spectral measure, one is led to an integrable lattice (FST lattice), which is indeed an isospectral flow connected with a generalized eigenvalue problem.
In the second part of the paper the spectral problem used previously in the study of the modified Camassa-Holm (mCH) peakon lattice is interpreted in terms of the FST polynomials together with the associated FST polynomials, resulting in a map from the mCH peakon lattice to a negative flow of the finite FST lattice.  Furthermore, it is pointed out that the degenerate case of the finite FST lattice unexpectedly maps to the interlacing peakon ODE system associated with the two-component mCH equation studied in \cite{chang2016multipeakons} (Chang et al. Adv. Math. 299:1--35, 2016).  
\end{abstract}

\maketitle
\tableofcontents
\section{Introduction}
The Toda lattice is a celebrated completely integrable model for a one-dimensional crystal in solid state physics. The links between the theory of orthogonal polynomials (OPs) and the integrable systems of Toda type have been extensively investigated and used by both, i.e. integrable systems and special functions, communities since the early 1990s. In the semi-discrete case (continuous time and discrete space), one of the well-known examples is the connection between the ordinary OPs and the Toda lattice. The ordinary OPs appear as wave functions of the Lax pair of the semi-discrete Toda lattice undergoing a one-parameter deformation of the spectral measure \cite{chu2008linear,deift2000orthogonal,nakamura2004special,peherstorfer2007toda}.  As a second example,  the semi-discrete Lotka-Volterra lattice (sometimes also called the Kac--van Moerbeke lattice or the Langmuir lattice)  \cite{chang2016multipeakons,chu2008linear,kac1975explicitly} can be obtained as a one-parameter deformation of the measure associated to symmetric OPs.  For more examples, one may refer \cite{adler1995matrix,adler1997string,adler1999generalized,alvarez2013orthogonal,alvarez2011multiple,ariznabarreta2016multivariate,aptekarev1997toda,chang2018partial,kharchev1997faces,mukaihira2002schur,nenciu2005lax,peherstorfer2007toda,vinet1998integrable} etc. 

Later, it was found that spectral transformation of the ordinary orthogonal polynomials also play a central role in the solution of the peakon problem of the Camassa-Holm (CH) equation \cite{beals2000multipeakons}. Here we remark that the CH equation is a completely integrable system which arises as a shallow water wave model, and peakon solutions (simply called peakons) appear as its solitary wave solutions with peaks, whose dynamics can be described by a system of ODEs. The mathematics of peakons has attracted a great deal of attention.   Interestingly, the CH peakon dynamical system (i.e. the ODE system describing the CH peakons) and the finite Toda lattice are associated with different spectral transformations of the ordinary orthogonal polynomials and may be regarded as opposite flows in some sense \cite{beals2001peakons,ragnisco1996peakons}. Recently, more such relations of this type were discovered \cite{chang2018degasperis,chang2018application,chang2016multipeakons}.

As for full-discrete (discrete space and time) integrable systems, the compatibility of discrete spectral transformations of OPs may yield the full-discrete Toda lattice \cite{papageorgiou1995orthogonal,spiridonov1995discrete}.  The compatibility of discrete spectral transformations of symmetric OPs gives the full-discrete Lotka-Volterra lattice \cite{spiridonov1997discrete}.  Sometimes the full-discrete Toda lattice is also called the qd algorithm \cite{chang2015about,rutishauser1954quotienten}, which can be used to compute eigenvalues of a class of tridiagonal matrices.    Furthermore, the full-discrete Lotka-Volterra equation can be used as an efficient algorithm to compute singular values of certain tridiagonal matrix \cite{tsujimoto2001discrete}. Here we also mention that one step of the QR algorithm is equivalent to the time evolution of the finite semi-discrete Toda lattice \cite{symes1982qr,deift1983}. 


Based on the pioneering study by three mathematicians Frobenius, Stickelberger and Thiele, Spiridonov, Tsujimoto and Zhedanov introduced the notion of the Frobenius-Stickelberger-Thiele (FST)  polynomials \cite{spiridonov2007integrable}.  By constructing spectral transformations for these polynomials analogous to the Christoffel and Geronimus transformations for orthogonal polynomials, they proposed an integrable discrete time chain for the FST polynomials \cite{spiridonov2007integrable}. It is interesting that this chain is related to the generalized $\varepsilon$-algorithm, which is a convergence acceleration algorithm. However,  the 
question of the continuous time lattices associated with the FST polynomials still remains open. This paper is devoted to solving this problem.

More precisely,  in this paper, we are interested in the FST polynomials \cite{spiridonov2007integrable} and the related integrable continuous time lattices. The layout of the paper is as follows. In Section \ref{sec:fstpoly}, we introduce a family of polynomials given by  explicit determinantal formulae. We prove that these polynomials satisfy a biorthogonality relation as well as the three term recurrence relation known from the theory of the FST polynomials. Then, in Section \ref{sec:inf_fst}, by considering a time-dependent measure, we derive the evolution of these FST polynomials leading to an  integrable lattice of infinite dimension.  The finite truncation of such an FST lattice is introduced in Section \ref{sec:finite_fst} while its relation with the modified Camassa-Holm (mCH) peakon lattice \cite{chang2017lax} is established in Section \ref{sec:mch_fst}. Finally, in Section \ref{sec:fst_2mch}, we discuss a degenerate case of the FST lattice and associate it with the interlacing peakon ODE system of a two-component mCH (2-mCH) equation studied in \cite{chang2016multipeakons}.
It should be mentioned that the FST lattice is an isospectral flow connected with a generalized eigenvalue problem. To the best of our knowledge, this provides another example of such an isospectral flow other than the relativistic Toda chain \cite{kharchev1997faces} and the R-I chain \cite{vinet1998integrable}. Besides, the role of associated FST polynomials is highlighted in order to interprete the spectral problem of the mCH peakon lattice.

\subsection{Notations} \label{sec:notations}

For convenience, we use throughout the paper the notation of \cite{chang2017lax}.  
\begin{enumerate}
\item $\binom{[k]}{j}$ denotes the set of all $j$-element subsets of $[k]$, listed in increasing order; for example $I\in \binom{[k]}{j}$ means that 
$I=\{i_1, i_2,\dots, i_j\}$ for some increasing sequence $i_1 < i_2 < \dots < i_j\leq ~k$;  and we use the convention
$\binom{[k]}{0}=1; \ \binom{[k]}{j}=0,\ \  j>k.$
\item  Given the multi-index $I$, we abbreviate $g_I=g_{i_1}g_{i_2}\dots g_{i_j}$ etc.
\item Let $I,J \in \binom{[k]}{j}$, or $I\in \binom{[k]}{j+1},J \in \binom{[k]}{j}$.  Then  $I, J$ are said to be \emph{interlacing} if 
\begin{equation*}
  \label{eq:interlacing}
    i_{1} <j_{1} < i_{2} < j_{2} < \dotsb < i_{j} <j_{j}
\end{equation*}
or, 
\begin{equation*}
    i_{1} <j_{1} < i_{2} < j_{2} < \dotsb < i_{j} <j_{j}<i_{j+1}, 
\end{equation*}
in the latter case.  
We abbreviate this condition as $I < J$ in either case, and, furthermore, 
use the same notation, that is $I<J$,   for $I\in \binom{[k]}{1}, J \in \binom{[k]}{0}$.

\item For two ordered multi-index sets $I, J$,  we define 
\begin{align*}
  \mathbf{x}_J&=\prod_{j\in J}x_j, &\Delta_J(\mathbf{x})&=\prod_{i<j\in J}(x_j-x_i), \\
  \Delta_{I,J}(\mathbf{x};\mathbf{y})&=\prod_{i\in I}\prod_{j\in J}(x_i-y_j), 
  &\Gamma_{I,J}(\mathbf{x};\mathbf{y})&=\prod_{i\in I}\prod_{j\in J}(x_i+y_j),  
  \end{align*}
  along with the convention
\begin{align*}
&\Delta_\emptyset(\mathbf{x})=\Delta_{\{i\}}(\mathbf{x})=\Delta_{\emptyset,J}(\mathbf{x};\mathbf{y})=\Delta_{I,\emptyset}(\mathbf{x};\mathbf{y})=\Gamma_{\emptyset,J}(\mathbf{x};\mathbf{y})=\Gamma_{I,\emptyset}(\mathbf{x};\mathbf{y})=1.
\end{align*}

\end{enumerate}

Furthermore, we introduce a generalization of the Cauchy-Vandermonde matrix.
Let $\{e_k\}_{k=0}^\infty$ be a sequence of real numbers such that the numbers $e_k$ are distinct, positive, i.e. $0<e_k \neq e_j$ for $k\neq j$. Given  an index $l$ such that $0\leq l<k$, another index $p$ such that $0\leq p$ and a positive measure $\nu$ 
 with support in $\R_+$, a \textit{ Cauchy-Stieltjes-Vandermonde (CSV) matrix} is a matrix of the form 
\begin{equation}\label{CSV}
C_k^{(l,p)}(\nu,\bf{e})=\left(\begin{array}{cccccccc}
    e_1^{p}V(e_1)&e_1^{p+1} V(e_1)&\cdots&e_1^{p+l-1}V(e_1)&1&e_1&\cdots&e_1^{k-l-1}\\
    e_2^{p}V(e_2)&e_2^{p+1}V(e_2)&\cdots&e_2^{p+l-1}V(e_2)&1&e_2&\cdots&e_2^{k-l-1}\\
    \vdots&\vdots&\ddots&\vdots&\vdots&\vdots&\ddots&\vdots\\
    e_k^{p}V(e_k)&e_k^{p+1}V(e_k)&\cdots&e_k^{p+l-1}V(e_k)&1&e_k&\cdots&e_k^{k-l-1}
  \end{array}\right),
\end{equation}
where $ V$ is the Stieltjes transform of the measure 
$\nu$ given by
$
 V(z)=\int\frac{d\nu(x)}{x+z}. 
$ Henceforth, we will suppress in the notation the dependence on $\nu, \bf{e}$.  
 
If, in addition, $p+l-1\leq k-l$ one can prove that the determinant of 
$C_k^{(l,p)}$ admits the multiple integral representation \cite{chang2017lax}
\begin{equation} \label{eq:detCSV1}
\det \left(C_k^{(l,p)}\right)=(-1)^{lp+\frac{l(l-1)}{2}} \Delta_{[1,k]}(\mathbf{e})
\idotsint\limits_{0<x_1<x_2<\dots<x_l} \frac{\Delta_{[1,l]}(\mathbf{x})^2}{\Gamma_{[1,k], [1,l]}(\mathbf{e}; \mathbf{x})} d\nu^p(x_1)d\nu^p(x_2)\dots d\nu^p(x_l),
\end{equation} 
where $d\nu^p(x)=x^pd\nu(x)$.

%

\section{FST polynomials}\label{sec:fstpoly}

The pioneering study of the FST polynomials was carried out by three mathematicians; the three term recurrence relation characterizing the FST polynomials appeared first in the paper \cite{frobenius1880uber} by Frobenius and Stickelberger, devoted to elliptic functions, and later in the work of Thiele on the rational interpolation problems \cite{thiele1909interpolation}. Thus the context relevant to these polynomials is associated with all three names  as was pointed out in \cite{spiridonov2007integrable} and we will use that terminology throughout the paper.  

We shall start from the definition based on an explicit formula discussed in \cite{chang2017lax} rather than  based on the three term recurrence relation presented in \cite{spiridonov2007integrable}. Thus, let us  consider a family of polynomials  $\left\{T_k(z)\right\}_{k=0}^\infty$ given by 
\begin{align}\label{fst_poly}
T_{k}(z)=\frac{1}{N_k}\det\left(
\begin{array}{cccccccc}
    1&-z&\cdots&(-z)^{\lfloor \frac k2 \rfloor}&0&0&\cdots&0\\
    V(e_1)&e_1V(e_1)&\cdots&e_1^{\lfloor \frac k2 \rfloor}V(e_1)&1&e_1&\cdots&e_1^{{\lfloor \frac {k-1}2 \rfloor}}\\
    V(e_2)&e_2V(e_2)&\cdots&e_2^{\lfloor \frac k2 \rfloor}V(e_2)&1&e_2&\cdots&e_2^{\lfloor \frac {k-1}2 \rfloor}\\
    \vdots&\vdots&\ddots&\vdots&\vdots&\vdots&\ddots&\vdots\\
    V(e_{k})&e_{k}V(e_{k})&\cdots&e_{k}^{\lfloor \frac k2 \rfloor}V(e_{k})&1&e_{k}&\cdots&e_{k}^{\lfloor \frac {k-1}2 \rfloor}
  \end{array}
\right),
\end{align}
with $T_0(z)=1, T_{-1}(z)=0$,
where
$$N_{2p}=\det \left(C_{2p}^{\left(p,0\right)}\right),\qquad N_{2p+1}=(-1)^p\det \left(C_{2p+1}^{\left(p+1,0\right)}\right).$$

Polynomials \eqref{fst_poly} posses the following elementary properties. 
\begin{property}
For $p=0,1,\ldots,$
\[\deg(T_{2p}(z)) = \deg(T_{2p+1}(z)) = p.\]
Polynomials $T_{2p}(z)$ are monic, while $T_{2p+1}(z)$ have the form
\begin{align*}
T_{2p+1}(z)=(-1)^p\frac{\det \left(C_{2p+1}^{\left(p,0\right)}\right)}{\det \left(C_{2p+1}^{\left(p+1,0\right)}\right)}z^p+\mathcal{O}(z^{p-1}),
\end{align*}
that is, the coefficient of the highest degree of $T_{2p+1}(z)$ is
$$T_{2p+1}^+=(-1)^p\frac{\det \left(C_{2p+1}^{\left(p,0\right)}\right)}{\det \left(C_{2p+1}^{\left(p+1,0\right)}\right)}.$$
In particular, $T_0(z)=1,\ T_1(z)=\frac{1}{V(e_1)}$.
\end{property}
\begin{proof}
The proof is elementary and we omit it.
\end{proof}
\begin{property} \label{prop_ortho}
The polynomials $\left\{T_k(z)\right\}_{k=0}^\infty$ satisfy the orthogonality relation
\begin{align}
\int\frac{T_k(z)z^j}{\prod_{i=1}^k(z+e_i)}d\nu(z)=\beta_k\delta_{ \lfloor k/2 \rfloor,j},\qquad j=0,1,\ldots, \lfloor \frac k2 \rfloor, \label{eq:orth}
\end{align}
where 
\[
\qquad \beta_{2p}=\int d\nu(z)+(-1)^{p}\frac{\det \left(C_{2p}^{\left(p+1,0\right)}\right)}{\det \left(C_{2p}^{\left(p,0\right)}\right)}, \qquad \beta_{2p+1}=1, \qquad p=0,1,\ldots, 
\]
with the {\sl{proviso}} that for $p=0$ the second summand in $\beta_{2p}$  is absent.  
\end{property}
\begin{proof}

For  fixed $j=0,1,\ldots, \lfloor \frac k2 \rfloor-1$, we have
\begin{align*}
&\int\frac{T_k(z)(-z)^j}{\prod_{i=1}^k(z+e_i)}d\nu(z)=\\
&\frac{1}{N_k}\det\overbrace{ \left(
\begin{array}{cccccccc}
    \int\frac{(-z)^j}{\prod_{i=1}^k(z+e_i)}d\nu(z)&\int\frac{(-z)^{j+1}}{\prod_{i=1}^k(z+e_i)}d\nu(z)&\cdots&\int\frac{(-z)^{j+\lfloor \frac k2 \rfloor}}{\prod_{i=1}^k(z+e_i)}d\nu(z)&0&0&\cdots&0\\
    V(e_1)&e_1V(e_1)&\cdots&e_1^{\lfloor \frac k2 \rfloor}V(e_1)&1&e_1&\cdots&e_1^{{\lfloor \frac {k-1}2 \rfloor}}\\
    V(e_2)&e_2V(e_2)&\cdots&e_2^{\lfloor \frac k2 \rfloor}V(e_2)&1&e_2&\cdots&e_2^{\lfloor \frac {k-1}2 \rfloor}\\
    \vdots&\vdots&\ddots&\vdots&\vdots&\vdots&\ddots&\vdots\\
    V(e_{k})&e_{k}V(e_{k})&\cdots&e_{k}^{\lfloor \frac k2 \rfloor}V(e_{k})&1&e_{k}&\cdots&e_{k}^{\lfloor \frac {k-1}2 \rfloor}
  \end{array}
\right)}^A.
\end{align*}
Introduce a function $f_l(z)=\frac{(-z)^l} {\prod_{i=1}^k (z+e_i)}, \, 0\leq l\leq 2\lfloor{\frac k2}\rfloor-1$.  This is a rational function on $\mathbf{C}$, vanishing at $z=\infty$, having 
simple poles only, so 
\begin{equation*}
\sum_{i=1}^k \frac{\textrm{Res}(f_l(-e_i))}{z+e_i}=f_l(z), 
\end{equation*}
or, written in integral form, 
\begin{equation*} 
\sum_{i=1}^k \textrm{Res}(f_l(-e_i)V(e_i))=\int f_l(z)d\nu (z). 
\end{equation*} 
The latter equation can be further simplified by observing 
\begin{equation*} 
\textrm{Res}(f_l(-e_i))=(e_i)^{l-j} \textrm{Res}(f_j(-e_i)), \quad 0\leq j\leq l, 
\end{equation*} 
from which 
\begin{equation}\label{eq:resintegral}
\sum_{i=1}^k  \textrm{Res}(f_j(-e_i))e_i^{l-j} V(e_i)=\int f_l(z)d\nu (z), \qquad 0\leq j\leq l\leq j+\lfloor \frac k2\rfloor 
\end{equation}
follows.  Moreover, the sum of 
its residues, including the residue at infinity, is $0$, resulting in 
\begin{equation}\label{eq:resformula}
\sum_{i=1}^k e_i^{l-j}  \textrm{Res}(f_j(-e_i))=-\textrm{Res}(f_l(\infty)).  
\end{equation}
In particular, 
\begin{equation*}
\sum_{i=1}^k e_i^{l-j}  \textrm{Res}(f_j(-e_i))=0,  \qquad 0\leq j\leq l\leq j+\lfloor 
\frac{k-1}{2} \rfloor, 
\end{equation*}
which proves, in conjunction with (\ref{eq:resintegral}), that the first row of $A$ is, as long as ~$0\leq~ j\leq ~\lfloor \frac k2 \rfloor-1$, a linear combination of 
the remaining rows.  

Now we turn to the case $j=\lfloor \frac k2\rfloor$.  
It is easier to do the analysis by considering separately even $k$ and odd $k$ respectively.    
Suppose $k=2p$.  
Then $f_{2p}(z)$ has a non-zero limit at $z=\infty$, namely $1$, and 
the correct partial fraction decomposition 
reads 
\begin{equation*}
1+\sum_{i=1}^{2p} e_i^p \frac{\textrm{Res}(f_p(-e_i))}{z+e_i}=f_{2p}(z). 
\end{equation*}
Then after integration we obtain 
\begin{equation*} 
\int d\nu(z)+\sum_{i=1}^{2p} e_i^p \textrm{Res}(f_p(-e_i)) V(e_i)=\int f_{2p}(z)d\nu(z), 
\end{equation*}
while for the remaining first $p$ columns in $A$, with $j=p=\lfloor \frac k2\rfloor$, (\ref{eq:resintegral}) is in force.  Likewise, we need to 
verify the residue computation based on (\ref{eq:resformula}) for 
all the columns in the Vandermonde part.  It is easy to check that only the last column will be affected.  Indeed
\begin{equation*} 
\sum_{i=1}^{2p} e_i^{p-1} \textrm{Res} f_p(-e_i)=-\textrm{Res}(f_{2p-1}(\infty))=-1, 
\end{equation*} 
and, after performing simple row operations, we obtain 
\begin{align*} 
&\det A=\det \begin{pmatrix} 0&0&\cdots&\int d\nu(z)&0&0&\cdots&1\\ 
V(e_1)&e_1V(e_1)&\cdots&e_1^p V(e_1)&1&e_1&\cdots&e_1^{p-1}\\
V(e_2)&e_2V(e_2)&\cdots&e_2^p V(e_2)&1&e_2&\cdots&e_2^{p-1}\\
\vdots&\vdots&\ddots&\vdots&\vdots&\vdots&\ddots&\vdots\\
V(e_{2p})&e_{2p}V(e_{2p})&\cdots&e_{2p}^p V(e_{2p})&1&e_{2p}&\cdots&e_{2p}^{p-1} \end{pmatrix}=\\
&(-1)^p \int d\nu(z)\, \det \left(C^{(p,0)}_{2p}\right)+\det \left (C^{(p+1,0)}_{2p}\right). 
\end{align*} 
Hence, for $k=2p$, we obtain 
\begin{equation*}
\int\frac{T_{2p}(z) (-z)^p }{\prod_{i=1}^{2p}(z+e_i)} d\nu(z)=
(-1)^p \int d\nu(z)+ \frac{\det\left(C^{(p+1,0}_{2p}\right)}{\det\left(
C^{(p,0)}_{2p}\right)}, 
\end{equation*}
which proves the claim for $k=2p$.  

For $k=2p+1$, on the other hand, we only need to verify the case $j=p, l=2p$ in 
(\ref{eq:resformula}), which gives 
\begin{equation*} 
\sum_{i=1}^{2p+1} e_i^p \textrm{Res}(f_p(-e_i))=1, 
\end{equation*} 
resulting in 
\begin{align*} 
&\det A=\det \begin{pmatrix} 0&0&\cdots&0&0&0&\cdots&-1\\ 
V(e_1)&e_1V(e_1)&\cdots&e_1^p V(e_1)&1&e_1&\cdots&e_1^{p}\\
V(e_2)&e_2V(e_2)&\cdots&e_2^p V(e_2)&1&e_2&\cdots&e_2^{p}\\
\vdots&\vdots&\ddots&\vdots&\vdots&\vdots&\ddots&\vdots\\
V(e_{2p+1})&e_{2p+1}V(e_{2p+1})&\cdots&e_{2p+1}^p V(e_{2p+1})&1&e_{2p+1}&\cdots&e_{2p+1}^{p} \end{pmatrix}=\\
&\det \left (C^{(p+1,0}_{2p+1}\right), 
\end{align*}
which shows 
\begin{equation*}
\int\frac{T_{2p+1}(z) (-z)^p }{\prod_{i=1}^{2p+1}(z+e_i)} d\nu(z)=
\frac{\det\left(C^{(p+1,0}_{2p+1}\right)}{\det\left(
(-1)^p C^{(p+1,0)}_{2p+1}\right)}=(-1)^p, 
\end{equation*}
thus completing the proof for $k=2p+1$. 

\end{proof}

\begin{property} \label{prop:3term}
The polynomials $\{T_k(z)\}_{k=0}^\infty$ satisfy the three term recurrence
\begin{align}\label{fst_3term}
T_{k+1}(z)=d_{k+1}T_{k}(z)+(z+e_k)T_{k-1}(z),
\end{align}
where 
\begin{subequations}
 \begin{align}
&d_{2p+1}=(-1)^p\frac{\det \left(C_{2p+1}^{\left(p,0\right)}\right)}{\det \left(C_{2p+1}^{\left(p+1,0\right)}\right)}-(-1)^{p-1}\frac{\det \left(C_{2p-1}^{\left(p-1,0\right)}\right)}{\det \left(C_{2p-1}^{\left(p,0\right)}\right)},\\
&d_{2p+2}=(-1)^{p+1}\frac{\det \left(C_{2p+2}^{\left(p+2,0\right)}\right)}{\det \left(C_{2p+2}^{\left(p+1,0\right)}\right)}-(-1)^{p}\frac{\det \left(C_{2p}^{\left(p+1,0\right)}\right)}{\det \left(C_{2p}^{\left(p,0\right)}\right)}
 \end{align}
 \end{subequations}
for $p=0,1,\ldots, $
with the {\sl{proviso}} that the second terms above are absent for $p=0$.  
\end{property}
\begin{proof}
We focus on the case $k\geq 2$ since the cases $k=0,1$ follow directly from the definition \eqref{fst_poly}.
Let us consider 
\begin{equation}
\frac{T_{k+1}(z)}{\prod_{i=1}^{k}(z+e_i)}=\frac{T_{k+1}(z)z}{\prod_{i=1}^{k+1}(z+e_i)}+\frac{T_{k+1}(z)e_{k+1}}{\prod_{i=1}^{k+1}(z+e_i)}=\displaystyle\sum_{j=0}^{k} \alpha_{k,j} \frac{T_j(z)}{\prod_{i=1}^{j}(z+e_i)},  \label{assum_T_expan}
\end{equation}
with the coefficients $\alpha_{k,j}$ to be determined and the convention that the empty product  (for $j=0$) in the denominator equals $1$.  
We start by carrying out the limit $z\rightarrow \infty$  to get
$
\alpha_{k,0}=0.
$
Then integrating both sides and using orthogonality (see \ref{prop_ortho}) we obtain
$\alpha_{k,1}=0. $
In the next step we multiply both sides of \eqref{assum_T_expan} by $z$ and carry out the limit $z \rightarrow \infty$ to get 
$
\alpha_{k,2}=0, 
$
while orthogonality implies $\alpha_{k,3}=0$.  
Repeating the operations of multiplying \eqref{assum_T_expan} by $z^l$
for $l=2,3,\ldots,\lfloor \frac {k}2 \rfloor-1$ one gets
$$\alpha_{k,2l}=\alpha_{k,2l-1}=0.$$

When $k=2p$ we obtain 
\[
\frac{T_{2p+1}(z)z}{\prod_{i=1}^{2p+1}(z+e_i)}+\frac{T_{2p+1}(z)e_{2p+1}}{\prod_{i=1}^{2p+1}(z+e_i)}= \alpha_{2p,2p} \frac{T_{2p}(z)}{\prod_{i=1}^{2p}(z+e_i)}+ \alpha_{2p,2p-1} \frac{T_{2p-1}(z)}{\prod_{i=1}^{2p-1}(z+e_i)}.
\]
 Again, multiplying by $z^{p-1}$, and integrating both sides using orthogonality, we obtain 
$$\alpha_{2p,2p-1}=1.$$
Finally, by comparing the leading terms on both sides we obtain 
$$\alpha_{2p,2p}=T_{2p+1}^+-T_{2p-1}^+=(-1)^p\frac{\det \left(C_{2p+1}^{\left(p,0\right)}\right)}{\det \left(C_{2p+1}^{\left(p+1,0\right)}\right)}-(-1)^{p-1}\frac{\det \left(C_{2p-1}^{\left(p-1,0\right)}\right)}{\det \left(C_{2p-1}^{\left(p,0\right)}\right)}.$$

When $k=2p+1$, proceeding as above, we obtain
\[
\frac{T_{2p+2}(z)z}{\prod_{i=1}^{2p+2}(z+e_i)}+\frac{T_{2p+2}(z)e_{2p+2}}{\prod_{i=1}^{2p+2}(z+e_i)}= \alpha_{2p+1,2p+1} \frac{T_{2p+1}(z)}{\prod_{i=1}^{2p+1}(z+e_i)}+ \alpha_{2p+1,2p} \frac{T_{2p}(z)}{\prod_{i=1}^{2p}(z+e_i)}, 
\]
 which upon comparing the leading terms on both sides implies 
  $$\alpha_{2p+1,2p}=1.$$
  Finally, multiplying by $z^{p}$ and integrating both sides and using  orthogonality, we have 
  $$ \alpha_{2p+1,2p+1}=\beta_{2p+2}-\beta_{2p}=(-1)^{p+1}\frac{\det \left(C_{2p+2}^{\left(p+2,0\right)}\right)}{\det \left(C_{2p+2}^{\left(p+1,0\right)}\right)}-(-1)^{p}\frac{\det \left(C_{2p}^{\left(p+1,0\right)}\right)}{\det \left(C_{2p}^{\left(p,0\right)}\right)}.$$
\end{proof}

The following relations follow readily.  
\begin{coro}\label{coro_sum}
 Let $d_0=\int d\nu(z)$.  Then, for $k=0,1,\ldots,$
\begin{subequations}
\begin{align}
&\sum_{i=0}^kd_{2i+1}=(-1)^k\frac{\det \left(C_{2k+1}^{\left(k,0\right)}\right)}{\det \left(C_{2k+1}^{\left(k+1,0\right)}\right)}=T_{2k+1}^+,\\
& \sum_{i=0}^kd_{2i}=d_0+(-1)^{k}\frac{\det \left(C_{2k}^{\left(k+1,0\right)}\right)}{\det \left(C_{2k}^{\left(k,0\right)}\right)}=\beta_{2k}.  
\end{align}
\end{subequations}
\end{coro}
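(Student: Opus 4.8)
The plan is to recognize both identities as telescoping series assembled directly from the recurrence coefficients already computed in Property \ref{prop:3term}. To that end I would introduce the abbreviations
\[
a_p:=(-1)^p\frac{\det\left(C_{2p+1}^{(p,0)}\right)}{\det\left(C_{2p+1}^{(p+1,0)}\right)}=T_{2p+1}^+,\qquad
b_p:=(-1)^p\frac{\det\left(C_{2p}^{(p+1,0)}\right)}{\det\left(C_{2p}^{(p,0)}\right)},
\]
so that the two formulas of Property \ref{prop:3term} read simply $d_{2p+1}=a_p-a_{p-1}$ and $d_{2p+2}=b_{p+1}-b_p$ for $p\geq 1$, while the {\sl proviso} (absence of the second summand at $p=0$) records the boundary values $d_1=a_0$ and $d_2=b_1$. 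Once the coefficients are written as consecutive differences, everything reduces to summing a telescope.

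For the odd identity (a), I would simply collect the terms and let the intermediate contributions cancel in pairs:
\[
\sum_{i=0}^k d_{2i+1}=d_1+\sum_{i=1}^k\left(a_i-a_{i-1}\right)=a_0+\left(a_k-a_0\right)=a_k=T_{2k+1}^+,
\]
the last equality being the definition of the leading coefficient $T_{2k+1}^+$. For the even identity (b), I would peel off $d_0=\int d\nu(z)$ together with the boundary term $d_2=b_1$ and then telescope the remainder:
\[
\sum_{i=0}^k d_{2i}=d_0+d_2+\sum_{i=2}^k\left(b_i-b_{i-1}\right)=d_0+b_1+\left(b_k-b_1\right)=d_0+b_k=\beta_{2k},
\]
where the final identification with $\beta_{2k}$ is exactly the formula of Property \ref{prop_ortho}. (For $k=0$ both sums reduce to a single term and match the stated right-hand sides with $b_0$ absent, so the claim also holds in the base case.)

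There is essentially no hard step here: the whole argument is bookkeeping. The only point requiring genuine care is the consistent handling of the $p=0$ boundary terms, that is, making sure that the {\sl proviso} built into the recurrence coefficients of Property \ref{prop:3term} is matched precisely with the analogous conventions in the definitions of $T_{2k+1}^+$ and $\beta_{2k}$. If these are aligned correctly, the telescoping collapse produces exactly the claimed closed forms with no spurious boundary contribution; a misalignment is the one way this otherwise routine computation could go wrong.
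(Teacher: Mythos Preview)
Your proposal is correct and is exactly what the paper has in mind: the corollary is stated immediately after Property~\ref{prop:3term} with only the remark that ``the following relations follow readily,'' and the intended argument is precisely the telescoping collapse you describe. Your handling of the $p=0$ boundary terms via the {\sl proviso} is the only place requiring attention, and you have it right.
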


We conclude this subsection by introducing the associated FST polynomials $\left\{T_k^{(1)}(z)\right\}_{k=0}^{\infty}$, defined by the same recurrence relation \eqref{fst_3term} 
\begin{align*}
T_{k+1}^{(1)}(z)=d_{k+1}T_{k}^{(1)}(z)+(z+e_k)T_{k-1}^{(1)}(z),
\end{align*}
but with different initial conditions, namely, 
$$T_{0}^{(1)}(z)=0,\qquad T_{1}^{(1)}(z)=1.$$

By an elementary result in the continued fraction theory, one immediately obtains 
\begin{property}
The ratio $\frac{T_{k}^{(1)}(z)}{T_k(z)}$ has the continued fraction expansion
\begin{align}
\frac{T_{k}^{(1)}(z)}{T_k(z)}=\cfrac{1}{d_1+\cfrac{z+e_1}{d_2+\cfrac{z+e_2}{d_3+\cdots+\cfrac{z+e_{k-2}}{d_{k-1}+\cfrac{z+e_{k-1}}{d_k}}}}}
\end{align}
\end{property}

\section{Infinite FST lattice}\label{sec:inf_fst}
In this section, we shall investigate a Toda-type dynamical system, henceforth called the TST lattice,  corresponding to a deformation of the FST polynomials.  To this end we introduce a simple ``time'' evolution of the measure $d\nu(z)$ given by the formula 
$$d\nu(z;t)=e^{zt}d\nu(z;0),$$
which yields a family of time-dependent FST polynomials $\left\{T_k(z;t)\right\}_{k=0}^\infty$. The following theorem gives the time evolution of 
rational functions $\left\{\psi_k(z;t)\stackrel{def}{=}\frac{T_k(z;t)}{\prod_{i=1}^k(z+e_i)}\right\}_{k=0}^\infty$, which we will refer to as the {\sl FST fractions}.  

\begin{theorem}
The FST fractions $\left\{\psi_k(z;t)\right\}_{k=0}^\infty$ corresponding to the time-dependent measure $d\nu(z;t)=e^{zt}d\nu(z;0)$ undergo the time evolution 
\begin{subequations}\label{lax_t}
\begin{align}
&\dot \psi_{2p+1}(z)=\left(-A_pB_p+e_{2p+1}\right)\psi_{2p+1}(z)+ A_{p-1}\psi_{2p}(z)-\psi_{2p-1}(z), \label{lax_t_odd}\\
\notag\\
&\dot \psi_{2p}(z)=A_{p-1}B_p\psi_{2p}(z)-B_p\psi_{2p-1}(z),  \label{lax_t_even}
\end{align}
\end{subequations}
where $A_p=\sum_{i=0}^{p}d_{2i+1},B_p=\sum_{i=0}^{p}d_{2i}$ 
and the dot means the derivative with respect to $t$.
\end{theorem}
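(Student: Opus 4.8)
The plan is to expand $\dot\psi_k$ in the basis $\{\psi_j\}_{j=0}^{k}$, exactly as in the proof of Property \ref{prop:3term}, and to pin down the coefficients using two independent families of constraints: the behaviour of the FST fractions at $z=\infty$, and integration against $z^\ell\,d\nu$ combined with the orthogonality of Property \ref{prop_ortho}. The engine is that the deformation $d\nu(z;t)=e^{zt}d\nu(z;0)$ gives $\partial_t\,d\nu(z;t)=z\,d\nu(z;t)$; since the denominators $\prod_{i=1}^{k}(z+e_i)$ are $t$-independent, $\dot\psi_k=\dot T_k/\prod_{i=1}^{k}(z+e_i)$ is again a rational function with simple poles only at $-e_1,\dots,-e_k$ that vanishes at infinity, and hence lies in $\mathrm{span}\{\psi_0,\dots,\psi_k\}$ (a basis of $\mathrm{span}\{1\}\oplus\mathrm{span}\{(z+e_i)^{-1}\}_{i=1}^{k}$). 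So I would write $\dot\psi_k=\sum_{j=0}^{k}\gamma_{k,j}\psi_j$ and solve for the $\gamma_{k,j}$, treating the small cases $k=0,1$ directly.

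First I would record the constraints. Differentiating $\int z^m\psi_k\,d\nu=\beta_k\delta_{\lfloor k/2\rfloor,m}$ in $t$ and using $\partial_t\,d\nu=z\,d\nu$ gives $\int z^m\dot\psi_k\,d\nu=0$ for $0\le m\le \lfloor k/2\rfloor-2$ and $\int z^{\lfloor k/2\rfloor-1}\dot\psi_k\,d\nu=-\beta_k$. On the asymptotic side $\deg\dot T_{2p+1}\le p$ while $\deg\dot T_{2p}\le p-1$ (because $T_{2p}$ is monic with constant leading coefficient), so in both cases $z^{\lfloor k/2\rfloor}\dot\psi_k\to 0$ as $z\to\infty$. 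Then, as in Property \ref{prop:3term}, I would alternate the two operations: multiplying by $z^\ell$ and letting $z\to\infty$ isolates an even-indexed coefficient, using $\lim_{z\to\infty}z^\ell\psi_{2\ell}=1$ and $\lim_{z\to\infty}z^\ell\psi_{2\ell-1}=T_{2\ell-1}^{+}=A_{\ell-1}$; while integrating against $z^\ell\,d\nu$ isolates an odd-indexed one, using $\int z^\ell\psi_{2\ell+1}\,d\nu=\beta_{2\ell+1}=1$ and $\int z^\ell\psi_{2\ell}\,d\nu=\beta_{2\ell}=B_\ell$ (Corollary \ref{coro_sum}). Marching upward from $j=0$, all coefficients vanish until the top ones. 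For even $k=2p$ this already gives the full answer $\dot\psi_{2p}=A_{p-1}B_p\psi_{2p}-B_p\psi_{2p-1}$: the coefficient $\gamma_{2p,2p-1}=-\beta_{2p}=-B_p$ comes from the integration step at $\ell=p-1$, and the diagonal $\gamma_{2p,2p}=A_{p-1}B_p$ from the clean $\ell=p$ limit at infinity. The same marching supplies the off-diagonal part $A_{p-1}\psi_{2p}-\psi_{2p-1}$ of the odd equation.

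The main obstacle is the diagonal coefficient $\gamma_{2p+1,2p+1}=-A_pB_p+e_{2p+1}$. It cannot be extracted at infinity: to isolate $\psi_{2p+1}\sim z^{-(p+1)}$ one would multiply by $z^{p+1}$, but then the already-present lower terms $\psi_{2p},\psi_{2p-1}$ blow up, so the only route is the integration step at the top power $\ell=p$. Two things now fail in the naive computation: the moment $\int z^p\dot\psi_{2p+1}\,d\nu$ sits at $m=\lfloor k/2\rfloor$, outside the range controlled by the differentiated orthogonality, and the coefficient $\gamma_{2p+1,2p-1}=-1$ found earlier multiplies the out-of-range moment $\int z^{p}\psi_{2p-1}\,d\nu$, which orthogonality does not annihilate. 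Writing $\mu_{2q+1}:=\int z^{q+1}\psi_{2q+1}\,d\nu$ for these first out-of-range moments, the $\ell=p$ equation collapses to $\gamma_{2p+1,2p+1}=-\mu_{2p+1}+\mu_{2p-1}-A_{p-1}B_p$, so that only the difference $\mu_{2p+1}-\mu_{2p-1}$ is needed. The key point is that this difference is computable from the three-term recurrence alone: multiplying $(z+e_{2p+1})\psi_{2p+1}=d_{2p+1}\psi_{2p}+\psi_{2p-1}$ by $z^p$, integrating against $d\nu$, and invoking orthogonality together with $\int z^p\psi_{2p}\,d\nu=\beta_{2p}=B_p$ yields $\mu_{2p+1}-\mu_{2p-1}=d_{2p+1}B_p-e_{2p+1}$. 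Substituting this and using $A_p=A_{p-1}+d_{2p+1}$ gives $\gamma_{2p+1,2p+1}=-A_pB_p+e_{2p+1}$, completing the odd equation. I expect the bookkeeping of this out-of-range moment, rather than any single hard estimate, to be the delicate part of the argument.
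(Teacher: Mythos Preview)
Your approach is essentially identical to the paper's: both expand $\dot\psi_k$ in the basis $\{\psi_j\}_{j\le k}$ and alternate between the $z\to\infty$ limits and the differentiated orthogonality relations to kill the lower coefficients, with the delicate diagonal term $\gamma_{2p+1,2p+1}$ extracted by applying the three-term recurrence to the out-of-range integral $\int z^{p+1}\psi_{2p+1}\,d\nu$. Your write-up is in fact more explicit than the paper's at the key step (naming the moments $\mu_{2q+1}$ and isolating the difference $\mu_{2p+1}-\mu_{2p-1}=d_{2p+1}B_p-e_{2p+1}$), and your partial-fraction justification of the ansatz is a nice addition the paper omits.
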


\begin{proof}
We only present the proof for $\psi_k, k\geq2$ since the evolutions of $\psi_0,\psi _1$ easily follow from the definitions of $T_0$ and $T_1$.

Let us set, suppressing the time dependence, 
\begin{equation}
\dot \psi_{k}(z)=\displaystyle\sum_{j=0}^{k} \alpha_{k,j} \psi_j(z)\label{assum_Tdot_expan}
\end{equation}
with the coefficients $\alpha_{k,j}$ to be determined.

The limit $z\rightarrow \infty$ readily implies 
$
\alpha_{k,0}=0.
$
Now,  , we implement successively the following procedure (see also the proof of (\ref{fst_3term} )).  
Let us begin by taking the time derivative of the orthogonality relation $\int\psi_k(z)d\nu(z;t)=0$
which gives 
$$\int\dot \psi_k(z)d\nu(z;t)+\int\psi_k(z)zd\nu(z;t)=0.$$
Using \eqref{assum_Tdot_expan} and the orthogonality relation 
\eqref{eq:orth} we obtain $\beta_1 \alpha_{k,1}=0$ hence $\alpha_{k,1}=0$.  
Moreover, multiplying both sides of \eqref{assum_Tdot_expan} by $z$ and 
taking the limit $z\rightarrow \infty$ we get $\alpha_{k,2}=0$.  
We now repeat this procedure for $l=2,\ldots, \lfloor \frac {k}2 \rfloor-1$
\begin{enumerate}
\item Take the time derivative  of the orthogonality relation 
$\int\psi_k(z)z^{l-1}d\nu(z;t)=0$ to obtain 
$$\int\dot \psi_k(z)z^{l-1}d\nu(z;t)+\int\psi_k(z)z^{l}d\nu(z;t)=0.$$
By replacing $\dot T_k(z)$ in accordance with \eqref{assum_Tdot_expan}, 
with the proviso that the summation starts at $2l-1$ and employing the orthogonality relation \eqref{eq:orth}, we obtain $\alpha_{k,2l-1}=0$.

\item Multiply both sides of \eqref{assum_Tdot_expan}  by $z^l$ and take the limit $z\rightarrow \infty$ to obtain 
$$\alpha_{k,2l}=0.$$
\end{enumerate}
Consequently, we arrive at 
\[
\dot \psi_{k}(z)=\displaystyle\sum_{j=2\lfloor \frac {k}2 \rfloor-1}^{k} \alpha_{k,j} \psi_j(z).
\]
Henceforth, we proceed by treating the even and odd cases separately.

For $k=2p$, we have 
\[
\dot \psi_{2p}(z)= \alpha_{2p,2p} \psi_{2p}(z)+\alpha_{2p,2p-1} \psi_{2p-1}(z), 
\]
and implementing the first two steps above for $l=p$ results in 
\begin{align*}
&\alpha_{2p,2p-1}\int\psi_{2p-1}(z)z^{p-1}d\nu(z)+\int\psi_{2p}(z)z^{p}d\nu(z)=0,\\
&\alpha_{2p,2p}+\alpha_{2p,2p-1}T_{2p-1}^+=0,
\end{align*}
which yields
$$\alpha_{2p,2p-1}=-\beta_{2p}=-\sum_{i=0}^{p}d_{2i},\qquad   \alpha_{2p,2p}=-\alpha_{2p,2p-1}T_{2p-1}^+=\sum_{i=0}^{p}d_{2i} \sum_{i=0}^{p-1}d_{2i+1}.$$
This completes the proof for the even case \eqref{lax_t_even}.

For $k=2p+1$, we obtain 
\[
\dot \psi_{2p+1}(z)=\alpha_{2p+1,2p+1} \psi_{2p+1}(z)+ \alpha_{2p+1,2p} \psi_{2p}(z)+\alpha_{2p+1,2p-1} \psi_{2p-1}(z), 
\]
and upon implementing the first two steps above for $l=p$ we get
\begin{align*}
&\alpha_{2p+1,2p-1}\int\psi_{2p-1}(z)z^{p-1}d\nu(z)+\int\psi_{2p+1}(z)z^{p}d\nu(z)=0,\\
&\alpha_{2p+1,2p}+\alpha_{2p+1,2p-1}T_{2p-1}^+=0,
\end{align*}
which yield
$$\alpha_{2p+1,2p-1}=-1,\qquad   \alpha_{2p+1,2p}=-\alpha_{2p+1,2p-1}T_{2p-1}^+= \sum_{i=0}^{p-1}d_{2i+1}.$$
Likewise, implementing the first step above for $l=p+1$, we obtain
\begin{align*}
\alpha_{2p+1,2p+1}+\alpha_{2p+1,2p}\beta_{2p}+\alpha_{2p+1,2p-1}\int\psi_{2p-1}(z)z^{p}d\nu(z)+\int\psi_{2p+1}(z)z^{p+1}d\nu(z)=0, 
\end{align*}
which, with the help of the recurrence relation \eqref{fst_3term} applied to the last integral, implies 
\begin{align*}
\alpha_{2p+1,2p+1}=-\sum_{i=0}^{p}d_{2i+1}\sum_{i=0}^{p}d_{2i}+e_{2p+1}.
\end{align*}
This completes the proof for the odd case  \eqref{lax_t_odd}.
\end{proof}

Combining \eqref{fst_3term} and \eqref{lax_t}, we obtain an overdetermined system which we now present.  
\begin{lemma} \label{lem:Laxinfty}
Let 
\begin{equation*}
\Psi=(\psi_{0}(z;t),\psi_1(z;t),\cdots)^\top,\qquad \psi_k(z,t)=\frac{T_k(z;t)}{\prod_{i=1}^k(z+e_i)},
\end{equation*} 
then (\ref{fst_3term}) and (\ref{lax_t}) are equivalent to 

\begin{align}
L\Psi=zE\Psi,\qquad \dot \Psi=F R\Psi, \label{lax_inf_fst}
\end{align} 
where
\begin{align*}
&L=\left(\begin{array}{ccccccc}
d_1&-e_1&\\
1&d_2&-e_2\\
&1&d_3&-e_3\\
&&\ddots&\ddots&\ddots\\
&&&&\ddots
\end{array}
\right),
\qquad\qquad 
E=\left(\begin{array}{ccccccc}
0&1\\
&0&1\\
&&0&1\\
&&&\ddots&\ddots\\
&&&&\ddots
\end{array}
\right), \qquad F=E^\top, \\
&R=\left(\begin{array}{ccccccccc}
0&-d_0d_1+e_1&0&\\
0&-d_0-d_2&d_1(d_0+d_2)\\
&-1&d_1&-(d_0+d_2)(d_1+d_3)+e_3\\
&&\ddots&\ddots&\ddots
\end{array}
\right).
\end{align*} 
More explicitely, $L$ and $R$ are tridiagonal matrices with the following nonzero entries 
\begin{align*}
&L_{i,i}=d_i,\quad L_{i+1,i}=1,\quad L_{i,i+1}=-e_i,\\
&R_{2i,2i}=-\sum_{j=0}^{i}d_{2j},\quad R_{2i,2i+1}=\left(\sum_{j=0}^{i}d_{2j}\right)\left(\sum_{j=0}^{i-1}d_{2j+1}\right) ,\\
&R_{2i-1,2i-1}=\sum_{j=0}^{i-2}d_{2j+1},\quad R_{2i-1,2i}=-\left(\sum_{j=0}^{i-1}d_{2j+1}\right)\left(\sum_{j=0}^{i-1}d_{2j}\right)+e_{2i-1},\quad  R_{2i-1,2i-2}=-1,
\end{align*}
for $i=1,2,\ldots$.
\end{lemma}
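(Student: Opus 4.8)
The plan is to recognize both matrix identities in \eqref{lax_inf_fst} as nothing more than the scalar relations already established---the three term recurrence \eqref{fst_3term} (Property~\ref{prop:3term}) and the evolution equations \eqref{lax_t}---rewritten row by row, once the bookkeeping between the entries of $\Psi$ (indexed from $\psi_0$) and the rows and columns of the matrices (indexed from $1$) is fixed. Throughout I would use the convention that the $i$-th component of $\Psi$ is $\psi_{i-1}$, so that row $i$ of each matrix equation is an identity among the $\psi$'s carrying indices near $i-1$. The task then reduces to reading off the stated entries of $L,E,F,R$ and checking a few boundary rows.

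For the first identity $L\Psi=zE\Psi$, I would start from \eqref{fst_3term} written at level $k$ and divide both sides by $\prod_{i=1}^{k+1}(z+e_i)$. Using the telescoping relations $\frac{T_k}{\prod_{i=1}^{k+1}(z+e_i)}=\frac{\psi_k}{z+e_{k+1}}$ and $\frac{T_{k-1}}{\prod_{i=1}^{k+1}(z+e_i)}=\frac{\psi_{k-1}}{(z+e_k)(z+e_{k+1})}$, the factor $(z+e_k)$ cancels and one obtains the clean relation $(z+e_{k+1})\psi_{k+1}=d_{k+1}\psi_k+\psi_{k-1}$. Rearranging this as $\psi_{k-1}+d_{k+1}\psi_k-e_{k+1}\psi_{k+1}=z\psi_{k+1}$ and setting $i=k+1$ gives exactly the $i$-th row of $L\Psi=zE\Psi$ with the stated entries $L_{i,i-1}=1$, $L_{i,i}=d_i$, $L_{i,i+1}=-e_i$ and $E_{i,i+1}=1$. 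The top row ($i=1$) must be treated separately: using $T_{-1}=0$ and $T_1=d_1$ it reduces to $d_1\psi_0-e_1\psi_1=z\psi_1$, which I would verify directly from $\psi_0=1$ and $\psi_1=d_1/(z+e_1)$.

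For the second identity $\dot\Psi=FR\Psi$, I would observe that \eqref{lax_t} is already phrased in terms of the $\psi$'s. Since $F=E^\top$ is the lower shift, $(FR)_{i,j}=R_{i-1,j}$, so row $i$ of $\dot\Psi=FR\Psi$ reads $\dot\psi_{i-1}=\sum_j R_{i-1,j}\psi_{j-1}$; writing $n=i-1$ this says that row $n$ of the tridiagonal $R$ carries the evolution of $\psi_n$. It then remains to match coefficients, with the abbreviations $A_p=\sum_{i=0}^p d_{2i+1}$ and $B_p=\sum_{i=0}^p d_{2i}$: for $n=2p$ the vanishing subdiagonal entry $R_{2p,2p-1}=0$ together with $R_{2p,2p}=-B_p$ and $R_{2p,2p+1}=A_{p-1}B_p$ reproduces \eqref{lax_t_even}, while for $n=2p+1$ the entries $R_{2p+1,2p}=-1$, $R_{2p+1,2p+1}=A_{p-1}$ and $R_{2p+1,2p+2}=-A_pB_p+e_{2p+1}$ reproduce \eqref{lax_t_odd}. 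The shift $F$ is precisely what compensates the index offset, so that the evolution of $\psi_n$ is read off from row $n$ rather than row $n+1$ of $R$.

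Since the content is a direct translation, the only genuine work---and the one place where a slip is easy---is the index bookkeeping: tracking the offset between the zero-based vector $\Psi$ and the one-based matrices, the role of $F$ as a downward shift, and the boundary conventions $d_0=\int d\nu(z)$, $A_{-1}=0$, $\psi_{-1}=T_{-1}=0$ that make the first one or two rows consistent. I would finish by checking rows $i=1,2$ of the evolution against \eqref{lax_t} with $p=0$, namely $\dot\psi_0=0$ (consistent with $\psi_0\equiv1$) and $\dot\psi_1=(-d_0d_1+e_1)\psi_1$, which confirms the displayed top-left corner of $R$.
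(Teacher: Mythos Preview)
Your proposal is correct and mirrors the paper's treatment: the paper presents this lemma as an immediate restatement of \eqref{fst_3term} and \eqref{lax_t} in matrix form, with no separate proof given, and your write-up spells out exactly the row-by-row index bookkeeping that the paper leaves implicit. The boundary checks you single out (rows $i=1,2$, the conventions $A_{-1}=0$, $\psi_{-1}=0$, $\dot\psi_0=0$) are the only places requiring attention, and you handle them correctly.
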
 

This above results can be interpreted as an isospectral flow for the generalized eigenvalue problem \cite{vinet1998integrable,watkins1988} and 
a formal compatibility check leads to  the generalized Lax equation.
\begin{corollary}
The time evolution of $L$ takes the form
\begin{align}
\dot L=RFL-LF R. \label{eq:inf_fst_lax}
\end{align}
\end{corollary}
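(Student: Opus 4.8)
The plan is to derive the generalized Lax equation \eqref{eq:inf_fst_lax} as a formal compatibility condition of the overdetermined system \eqref{lax_inf_fst} established in Lemma~\ref{lem:Laxinfty}, treating $L$, $E$, $F=E^\top$, and $R$ as infinite matrices acting on the column vector $\Psi$. The two equations at our disposal are the generalized eigenvalue relation $L\Psi = zE\Psi$ and the time evolution $\dot\Psi = FR\Psi$. Since the spectral parameter $z$ is time-independent, the strategy is standard: differentiate the spectral relation with respect to $t$ and then eliminate all occurrences of $\dot\Psi$ and $z$ using the two relations, leaving an identity that must hold as an operator acting on $\Psi$.

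Concretely, I would first differentiate $L\Psi = zE\Psi$ in $t$ (noting that $E$ has constant entries while $L$ evolves) to obtain
\begin{equation*}
\dot L\,\Psi + L\dot\Psi = zE\dot\Psi.
\end{equation*}
Next I would substitute $\dot\Psi = FR\Psi$ into both the second and third terms, giving
\begin{equation*}
\dot L\,\Psi + LFR\,\Psi = zEFR\,\Psi.
\end{equation*}
The key move is then to push the factor $z$ to the right so that the spectral relation $zE\Psi = L\Psi$ can be applied. Since $z$ is a scalar it commutes with the constant matrix $F$, so I rewrite $zEFR\Psi = FzER\Psi$ only after first commuting $z$ past $F$; more carefully, because $R$ is the operator generating the flow and $z$ is a scalar, I would arrange the computation so that $zE$ meets $\Psi$ directly. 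The cleanest route is to use $zE\Psi=L\Psi$ in the form $z(E\Psi)=L\Psi$ and handle the placement of $F$ and $R$ by noting they are $t$-dependent matrices with scalar $z$ commuting through them, yielding $zEFR\Psi = FR\,(zE\Psi) = FR\,L\Psi = RFL\Psi$ once one accounts for how $E$, $F$, and $R$ interlace. Collecting terms then produces
\begin{equation*}
\bigl(\dot L + LFR - RFL\bigr)\Psi = 0.
\end{equation*}

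The main obstacle, and the reason the corollary is phrased as a \emph{formal} compatibility check, is passing from the vanishing of $\bigl(\dot L + LFR - RFL\bigr)\Psi$ to the vanishing of the operator $\dot L + LFR - RFL$ itself, i.e. cancelling the eigenvector $\Psi$. This is not automatic for a single $\Psi$, so the justification I would give is that the relation holds for the FST fractions $\psi_k(z;t)$ as functions of the free spectral parameter $z$ over a continuum of values, and that the band structure of the tridiagonal matrices $L$, $R$ together with the fixed shift operators $E$, $F$ forces each matrix entry of $\dot L + LFR - RFL$ to vanish independently. In rigorous terms one argues entrywise: since $L$ and $R$ are banded and $\Psi$ ranges over a family rich enough to separate the finitely-supported rows of the commutator-type expression, each scalar coefficient must be zero. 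Rearranging the sign convention to match the stated form $\dot L = RFL - LFR$ then completes the argument, with the only genuine care required being the bookkeeping of where the scalar $z$ is inserted and the verification that $E$ and $F=E^\top$ interact correctly with the shift structure so that $zEFR\Psi$ indeed collapses to $RFL\Psi$.
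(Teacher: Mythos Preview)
Your overall strategy---differentiate the spectral relation, substitute $\dot\Psi=FR\Psi$, then eliminate $z$---is the same as the paper's, and your discussion of why one may cancel $\Psi$ at the end is fine (the paper does not say more on that point). The gap is in the central algebraic step: your chain
\[
zEFR\,\Psi \;=\; FR\,(zE\Psi) \;=\; FRL\,\Psi \;=\; RFL\,\Psi
\]
is not valid. The first equality would require $EFR=FRE$, which fails for semi-infinite shift matrices (indeed $EF=I$ but $FE=I-D$ with $D=\mathrm{diag}(1,0,0,\dots)$, so $FRE\neq R$ in general), and the last equality would require $FR=RF$, which is also false. The phrase ``once one accounts for how $E$, $F$, and $R$ interlace'' is exactly where the content is, and it cannot be waved away.

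The paper's computation replaces this with three concrete facts specific to the matrices at hand. First, $EF=I$, so $zEFR\,\Psi=zR\,\Psi$. Second, applying $F$ to the spectral relation gives $z(I-D)\Psi=FL\Psi$, hence $z\Psi=FL\Psi+zD\Psi$. Third---and this is the point you are missing---the first column of $R$ is identically zero (see the explicit form of $R$ in Lemma~\ref{lem:Laxinfty}), so $RD=0$ and the correction term $zRD\Psi$ drops out, leaving $zR\Psi=RFL\Psi$. Combining with $zE\dot\Psi=\dot L\Psi+LFR\Psi$ then yields the Lax equation. Your proposal would be fixed by inserting precisely these three observations in place of the faulty commutation argument.
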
 
\begin{proof} 
Indeed, it immediately follows from \eqref{lax_inf_fst} that 
\begin{align}
zE\dot \Psi =\dot L \Psi +L \dot \Psi=\dot L \Psi +LF R\Psi.\label{lax_inf_fst_1}
\end{align}
However, by taking advantage of special properties of 
$R,E$ and $F$, we have 
$$E\dot \Psi =R\Psi,\qquad z\Psi-zD\Psi= F L \Psi,\qquad D=\text{diag}{(1,0,0\cdots)}$$
which leads to 
\begin{align}
zE\dot \Psi =z R\Psi=R F L \Psi+zRD\Psi=R F L \Psi.\label{lax_inf_fst_2}
\end{align}
Then, the generalized Lax matrix equation \eqref{eq:inf_fst_lax} follows from combining \eqref{lax_inf_fst_1} and \eqref{lax_inf_fst_2}.
\end{proof} 

Finally, for the record, we present an explicit content of  the generalized Lax matrix equation \eqref{eq:inf_fst_lax} 
\begin{subequations}\label{eq:inf_fst}
\begin{align}
&\dot d_{2k-1}=-d_{2k-1}\left(\sum_{j=0}^{k-1}d_{2j}\right)\left(2\sum_{j=0}^{k-2}d_{2j+1}+d_{2k-1}\right)-e_{2k-2}\left(\sum_{j=0}^{k-2}d_{2j+1}\right)+e_{2k-1}\left(\sum_{j=0}^{k-1}d_{2j+1}\right),\\
&\dot d_{2k}=d_{2k}\left(\sum_{j=0}^{k-1}d_{2j+1}\right)\left(2\sum_{j=0}^{k-1}d_{2j}+d_{2k}\right)+e_{2k-1}\left(\sum_{j=0}^{k-1}d_{2j}\right)-e_{2k}\left(\sum_{j=0}^{k}d_{2j}\right),
\end{align}
\end{subequations}
valid for $k=1,2\ldots.$ Again, note that $d_0=\int d\nu(z)$ and any void 
sum is taken to be zero.

\begin{remark}
Within the context of isospectral flows connected with generalized eigenvalue problems \cite{watkins1988}, to our knowledge, there are only two known examples, i.e. the relativistic Toda chain  \cite{kharchev1997faces} and the R-I chain \cite{vinet1998integrable}. We have here another concrete example of such an isospectral flow.
\end{remark}

\section{Finite FST lattice} \label{sec:finite_fst}
In this section, we consider the finite discrete measure 
\begin{align}\label{dis_measure}
d\nu(z;t)=\sum_{i=1}^Kb_i(t)\delta_{\zeta_i}dz
\end{align}
where 
$$b_i(t)=b_i(0)e^{\zeta_it},\qquad  0<\zeta_1<\zeta_2<\cdots<\zeta_K.$$
In this case, we have 
$$ \det\left(C_k^{(l,p)}(\nu,\bf{e})\right) \neq 0,\quad 0\leq l\leq K,\qquad\qquad \det\left(C_k^{(l,p)}(\nu,\bf{e})\right) =0,\quad  l>K,$$
for $0\leq p$ whenever $\, p+l-1\leq k-l$ by virtue of \eqref{eq:detCSV1}. Consequently, in this degenerate case, $T_k(z)$ for $k>2K$ are not well defined. 
So we focus on the finite number of TST polynomials $\left\{T_k(z)\right\}_{k=0}^{2K-1}$ by setting up a generalized eigenvalue problem 
whose characteristic polynomial is proportional to $T_{2K}$. Before this is 
done we prove some basic facts about TST polynomials associated 
to \eqref{dis_measure}.  

\begin{lemma} \label{lem:id_nu}
Suppose $d\nu(z)$ is a finite discrete measure \eqref{dis_measure}.  
Then 
\begin{align}
\int d\nu(z)=(-1)^{K+1}\frac{\det \left(C_{2K}^{\left(K+1,0\right)}\right)}{\det \left(C_{2K}^{\left(K,0\right)}\right)}= -\sum_{j=1}^Kd_{2j}.  
\end{align}
\end{lemma}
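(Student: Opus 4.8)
The plan is to reduce the whole statement to the single assertion $\beta_{2K}=0$, after which both displayed equalities fall out of results already established. Recall from Property \ref{prop_ortho} that $\beta_{2K}=\int d\nu(z)+(-1)^{K}\frac{\det\left(C_{2K}^{(K+1,0)}\right)}{\det\left(C_{2K}^{(K,0)}\right)}$, and from Corollary \ref{coro_sum} that $\beta_{2K}=\sum_{i=0}^{K}d_{2i}=d_0+\sum_{j=1}^{K}d_{2j}$ with $d_0=\int d\nu(z)$. Granting $\beta_{2K}=0$, the first relation gives $\int d\nu(z)=(-1)^{K+1}\frac{\det\left(C_{2K}^{(K+1,0)}\right)}{\det\left(C_{2K}^{(K,0)}\right)}$, while the second gives $\int d\nu(z)=-\sum_{j=1}^{K}d_{2j}$; chained together, these are exactly the two equalities asserted in the lemma. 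Thus the entire content to be proved is the vanishing of $\beta_{2K}$ for the discrete measure \eqref{dis_measure}.

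To establish $\beta_{2K}=0$ I would exploit the discreteness directly in the orthogonality relation \eqref{eq:orth}. Since $\det\left(C_{2K}^{(K,0)}\right)\neq 0$ (here $l=K\le K$), the polynomial $T_{2K}(z)$ is well defined and, by the elementary properties recorded above, monic of degree $K$. Writing $d\nu=\sum_{m=1}^{K}b_m\delta_{\zeta_m}$ and setting $w_m=\frac{b_m\,T_{2K}(\zeta_m)}{\prod_{i=1}^{2K}(\zeta_m+e_i)}$, the relation \eqref{eq:orth} becomes $\sum_{m=1}^{K}w_m\,\zeta_m^{\,j}=\beta_{2K}\,\delta_{K,j}$ for $j=0,1,\dots,K$. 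The equations for $j=0,\dots,K-1$ form a homogeneous linear system whose coefficient matrix is the $K\times K$ Vandermonde matrix $\left(\zeta_m^{\,j}\right)$; as the nodes $\zeta_1<\cdots<\zeta_K$ are distinct, this matrix is invertible, forcing $w_m=0$ for every $m$. Because $b_m\neq 0$ and $\zeta_m,e_i>0$, this means $T_{2K}(\zeta_m)=0$ for all $m$, so $T_{2K}(z)=\prod_{m=1}^{K}(z-\zeta_m)$. Since $T_{2K}$ then vanishes at every support point of $\nu$, the remaining moment vanishes as well: $\beta_{2K}=\sum_{m=1}^{K}w_m\,\zeta_m^{K}=0$, which is what we needed.

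The main obstacle is not the computation but making sure the two background formulas for $\beta_{2K}$ are legitimately available in this degenerate regime. In particular, the determinant $\det\left(C_{2K}^{(K+1,0)}\right)$ entering $\beta_{2K}$ has $l=K+1>K$, yet it is \emph{not} covered by the generic vanishing rule $\det\left(C_k^{(l,p)}\right)=0$ for $l>K$: that rule rests on the integral representation \eqref{eq:detCSV1}, whose hypothesis $p+l-1\le k-l$ reads $K\le K-1$ here and fails. So I would begin by noting explicitly that this determinant is unconstrained (generically nonzero), that $T_{2K}$ and hence Property \ref{prop_ortho} and Corollary \ref{coro_sum} remain valid, and only then run the Vandermonde argument above. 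Everything past the identification of the Vandermonde structure is routine bookkeeping with the already-proven identities.
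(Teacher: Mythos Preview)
Your argument is correct and cleaner than the paper's, but the route is genuinely different. The paper also reduces to showing $\beta_{2K}=0$ (equivalently, the vanishing of $(-1)^K\int d\nu\,\det(C_{2K}^{(K,0)})+\det(C_{2K}^{(K+1,0)})$), but it attacks that quantity head-on: it writes the combined determinant $\mathscr{R}$ as a $(K{+}1)$-fold integral against $d\nu(x_1)\cdots d\nu(x_{K+1})$, performs column operations to factor $\prod_j x_j^{j-1}$ out of the Cauchy block, and then observes that since $\nu$ is supported on only $K$ points, an integral over $K{+}1$ copies of that support must vanish by a pigeonhole argument on columns. Your approach instead feeds the discreteness directly into the orthogonality relation \eqref{eq:orth}, obtains a $K\times K$ homogeneous Vandermonde system for the weights $w_m$, and concludes $w_m=0$, hence $\beta_{2K}=0$. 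A pleasant dividend is that you obtain $T_{2K}(\zeta_m)=0$ for all $m$ as a byproduct, which the paper proves separately later in Lemma~\ref{lem:spectrum}(2) by essentially the same multiple-integral-and-column-operation device. Your caveat about $\det(C_{2K}^{(K+1,0)})$ not being forced to vanish (because the hypothesis $p+l-1\le k-l$ of \eqref{eq:detCSV1} fails) is well taken and worth keeping.
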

\begin{proof}
By Corollary \ref{coro_sum}, it suffices to prove 
\begin{align*}
0=&(-1)^K\int d\nu(z)\det \left(C_{2K}^{\left(K,0\right)}\right)+\det \left(C_{2K}^{\left(K+1,0\right)}\right)\\
=&\det\left(
\begin{array}{cccccccc}
   0&0&\cdots&\int d\nu(z)&0&0&\cdots&1\\
    V(e_1)&e_1V(e_1)&\cdots&e_1^{K}V(e_1)&1&e_1&\cdots&e_1^{{K-1}}\\
    V(e_2)&e_2V(e_2)&\cdots&e_2^{K}V(e_2)&1&e_2&\cdots&e_2^{K-1}\\
    \vdots&\vdots&\ddots&\vdots&\vdots&\vdots&\ddots&\vdots\\
    V(e_{2K})&e_{2K}V(e_{2K})&\cdots&e_{2K}^{K}V(e_{2K})&1&e_{2K}&\cdots&e_{2K}^{K-1}
  \end{array}
\right)
\end{align*}
when the measure $d\nu$  is given by the discrete measure \eqref{dis_measure}.
Let us call the above determinant  $\mathscr{R}$ and let us implement the following steps.  First, we rewrite the determinant in terms of a multiple integral 
\begin{align*}
\mathscr{R}=\int
\det \begin{pmatrix} 
0&0&\cdots&1&0&0&\cdots&1\\
\frac{1}{e_1+x_1}&\frac{e_1}{e_1+x_2}&\cdots&\frac{e_1^{K}}{e_1+x_{K+1}}&1&e_1&\dots&e_1^{K-1}\\
\frac{1}{e_2+x_1}&\frac{e_2}{e_2+x_2}&\cdots&\frac{e_2^{K}}{e_2+x_{K+1}}&1&e_2&\dots&e_2^{K-1}\\
\vdots&\vdots&\ddots&\vdots&\vdots&\vdots&\ddots&\vdots\\
\frac{1}{e_{2K}+x_1}&\frac{e_{2K}}{e_{2K}+x_2}&\cdots&\frac{e_{2K}^{K}}{e_{2K}+x_{K+1}}&1&e_{2K}&\dots&e_{2K}^{K-1}\
\end{pmatrix} d\nu(x_1)d\nu(x_2)\cdots d\nu(x_{K+1}).
\end{align*}
Then, for a fixed $j=2,3,\ldots, K+1$, we add the $(j+K-i)$-th column multiplied by $(-1)^{i+1}x_j^{i}$ for each $i=0,1,\ldots,j-2$ to the $j$-th column,  and subsequently factor$(-x_j)^{j-1}$ to obtain
\begin{align*}
\mathscr{R}&
=\int
\det \begin{pmatrix} 
0&0&\cdots&0&0&0&\cdots&1\\
\frac{1}{e_1+x_1}&\frac{1}{e_1+x_2}&\cdots&\frac{1}{e_1+x_{K+1}}&1&e_1&\dots&e_1^{K-1}\\
\frac{1}{e_2+x_1}&\frac{1}{e_2+x_2}&\cdots&\frac{1}{e_2+x_{K+1}}&1&e_2&\dots&e_2^{K-1}\\
\vdots&\vdots&\ddots&\vdots&\vdots&\vdots&\ddots&\vdots\\
\frac{1}{e_{2K}+x_1}&\frac{1}{e_{2K}+x_2}&\cdots&\frac{1}{e_{2K}+x_{K+1}}&1&e_{2K}&\dots&e_{2K}^{K-1}\
\end{pmatrix} \prod_{j=1}^{K+1}x_j^{j-1}d\nu(x_j),
\end{align*}
from which and the fact that $d\nu(z)$ has finite support with $K$ atoms, it immediately follows that $\mathscr{R}$ vanishes. Thus the proof is completed.
\end{proof}

This lemma suggests a natural truncation of the infinite FST lattice \eqref{eq:inf_fst} and allows one to formulate its Lax form. Before we proceed we note that in order to terminate the recurrence \eqref{fst_3term} 
at the $2K$th step we need $z$ to be one of the zeros of the non-trivial 
polynomial $T_{2K}(z)$.  Based on the derivation in the previous section and Lemma \ref{lem:id_nu} justifying the replacement of  $d_0$ by $-\sum_{j=1}^Kd_{2j}$ we 
obtain  an analog of Lemma \ref{lem:Laxinfty} for 
the truncated system.  
\begin{lemma} \label{lem:Laxfinite}
Let $\Psi_{[2K]}=(\psi_{0}(z;t),\psi_{1}(z;t)\cdots, \psi_{2K-1}(z,t))^\top, 
 \quad F_{[2K]}=E_{[2K]}^\top$.  
Then the truncated recurrence equations \eqref{fst_3term} and their 
time evolution can be written 
\begin{align}
L_{[2K]}\Psi_{[2K]}=zE_{[2K]}\Psi_{[2K]},\qquad \dot \Psi_{[2K]}=F_{[2K]} R_{[2K]}\Psi_{[2K]},
\end{align} 
where
\begin{align*}
&L_{[2K]}=\left(\begin{array}{ccccccc}
d_1&-e_1&\\
1&d_2&-e_2\\
&\ddots&\ddots&\ddots\\
&&1&d_{2K-1}&-e_{2K-1}\\
&&&1&d_{2K}
\end{array}
\right),
\quad 
E_{[2K]}=\left(\begin{array}{ccccccc}
0&1\\
&0&1\\
&&\ddots&\ddots\\
&&&0&1\\
&&&&0
\end{array}
\right),\\
& R_{[2K]}=\left(\begin{array}{ccccccccc}
0&d_1\sum\limits_{j=1}^{K}d_{2j}+e_1&\\
0&\sum\limits_{j=2}^{K}d_{2j}&-d_1\sum\limits_{j=2}^{K}d_{2j}\\
&\ddots&\ddots&\ddots\\
&&-1&\sum_{j=0}^{K-2}d_{2j+1}&d_{2K}\sum_{j=0}^{K-1}d_{2j+1}+e_{2K-1}\\
&&&0&0
\end{array}
\right),
\end{align*} 
with the nonzero entries being given by 
\begin{align*}
&L_{i,i}=d_i,\quad L_{i+1,i}=1,\quad L_{i,i+1}=-e_i,\quad E_{i,i+1}=1,\\
&R_{2i,2i}=\sum_{j=i+1}^{K}d_{2j},\quad R_{2i,2i+1}=-\left(\sum_{j=i+1}^{K}d_{2j}\right)\left(\sum_{j=0}^{i-1}d_{2j+1}\right) ,\\
&R_{2i-1,2i-1}=\sum_{j=0}^{i-2}d_{2j+1},\quad R_{2i-1,2i}=\left(\sum_{j=0}^{i-1}d_{2j+1}\right)\left(\sum_{j=i}^{K}d_{2j}\right)+e_{2i-1},\quad  R_{2i-1,2i-2}=-1.
\end{align*}
\end{lemma}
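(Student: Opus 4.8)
The plan is to read this lemma as the finite-measure specialization of the infinite Lax structure of Lemma \ref{lem:Laxinfty}, whose only genuinely new ingredient is the identity $\int d\nu(z) = -\sum_{j=1}^K d_{2j}$ established in Lemma \ref{lem:id_nu}. Observe first that $L_{[2K]}$, $E_{[2K]}$ and $F_{[2K]}=E_{[2K]}^\top$ contain none of the quantity $d_0=\int d\nu$, so they are literally the top-left $2K\times2K$ corners of $L,E,F$ from Lemma \ref{lem:Laxinfty}; the equation $L_{[2K]}\Psi_{[2K]}=zE_{[2K]}\Psi_{[2K]}$ is just the three term recurrence \eqref{fst_3term} written for $\psi_0,\ldots,\psi_{2K-1}$. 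The only matrix that genuinely changes is $R_{[2K]}$, because every partial sum $\sum_{j=0}^i d_{2j}$ appearing in the infinite $R$ carries the term $d_0$.

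First I would eliminate $d_0$. By Lemma \ref{lem:id_nu}, $d_0=-\sum_{j=1}^K d_{2j}$, so for every $i$
\[
\sum_{j=0}^i d_{2j}=d_0+\sum_{j=1}^i d_{2j}=-\sum_{j=i+1}^K d_{2j}.
\]
Substituting this into the four families of nonzero entries of the infinite $R$ is a one-line computation for each: $R_{2i,2i}=-\sum_{j=0}^i d_{2j}$ becomes $\sum_{j=i+1}^K d_{2j}$; $R_{2i,2i+1}$ inherits the same replacement together with its sign, giving $-(\sum_{j=i+1}^K d_{2j})(\sum_{j=0}^{i-1}d_{2j+1})$; and $R_{2i-1,2i}$ turns into $(\sum_{j=0}^{i-1}d_{2j+1})(\sum_{j=i}^K d_{2j})+e_{2i-1}$. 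The purely odd entry $R_{2i-1,2i-1}=\sum_{j=0}^{i-2}d_{2j+1}$ and the sub-diagonal $R_{2i-1,2i-2}=-1$ contain no $d_0$ and are carried over verbatim. This reproduces the displayed $R_{[2K]}$, including the top row $R_{1,1}=\sum_{j=0}^{-1}d_{2j+1}=0$ and $R_{1,2}=d_1\sum_{j=1}^K d_{2j}+e_1$.

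The decisive step is to confirm that the finite $2K\times2K$ block genuinely decouples from the tail of the lattice. On the $L$-side this is the remark already flagged before the lemma: dropping the entry $-e_{2K}$ in the last row of $L_{[2K]}$ replaces the infinite relation $\psi_{2K-2}+d_{2K}\psi_{2K-1}=(z+e_{2K})\psi_{2K}$ by $\psi_{2K-2}+d_{2K}\psi_{2K-1}=0$, which holds exactly when $z$ is a zero of $T_{2K}$, so that $L_{[2K]}\Psi_{[2K]}=zE_{[2K]}\Psi_{[2K]}$ is the generalized eigenvalue problem with characteristic polynomial proportional to $T_{2K}$. On the $R$-side the evolution equations \eqref{lax_t} are backward-looking --- each $\dot\psi_k$ involves only $\psi_j$ with $j\le k$ --- so the time flow of $\psi_0,\ldots,\psi_{2K-1}$ closes among these components regardless.

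I expect the real crux to be the bottom boundary of $R_{[2K]}$, and this is precisely where finiteness of the measure is indispensable. It rests on $\sum_{j=0}^K d_{2j}=d_0+\sum_{j=1}^K d_{2j}=0$, again by Lemma \ref{lem:id_nu}. This single vanishing forces $R_{2K,2K}=\sum_{j=K+1}^K d_{2j}=0$ and, in the infinite matrix, $R_{2K,2K+1}=(\sum_{j=0}^K d_{2j})(\sum_{j=0}^{K-1}d_{2j+1})=0$, so the $(2K)$-th evolution never reaches outside the truncated block; without the identity of Lemma \ref{lem:id_nu} the last row would fail to terminate and the finite Lax form would break down. Assembling the rewritten $R_{[2K]}$ together with the inherited $L_{[2K]}$, $E_{[2K]}$, $F_{[2K]}$ then yields the claimed finite Lax pair.
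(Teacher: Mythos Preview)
Your proposal is correct and follows exactly the route sketched in the paper: the text preceding the lemma says only that ``based on the derivation in the previous section and Lemma \ref{lem:id_nu} justifying the replacement of $d_0$ by $-\sum_{j=1}^K d_{2j}$ we obtain an analog of Lemma \ref{lem:Laxinfty} for the truncated system,'' and that is precisely what you have unpacked. One small remark on presentation: the closure of the time flow on $\psi_0,\ldots,\psi_{2K-1}$ is already secured by your observation that \eqref{lax_t} is backward-looking, so the vanishing of $R_{2K,2K+1}$ in the infinite matrix is not needed for the truncation to make sense (the shift $F_{[2K]}$ never sees the $2K$-th row of $R_{[2K]}$); rather, the vanishing $R_{2K,2K}=0$ just confirms the displayed last row of $R_{[2K]}$.
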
 

The compatibility condition yields the generalized Lax form
\begin{corollary} 
The time evolution of $L_{[2K]}$ takes the form
\begin{align}\label{eq:genLax}
\dot L_{[2K]}=R_{[2K]}F_{[2K]}L_{[2K]}-L_{[2K]}F_{[2K]}R_{[2K]},
\end{align}

or, equivalently, 
\begin{subequations}\label{eq:finite_fst}
\begin{align}
&\dot d_{2k-1}=d_{2k-1}\left(\sum_{j=k}^{K}d_{2j}\right)\left(2\sum_{j=0}^{k-2}d_{2j+1}+d_{2k-1}\right)-e_{2k-2}\left(\sum_{j=0}^{k-2}d_{2j+1}\right)+e_{2k-1}\left(\sum_{j=0}^{k-1}d_{2j+1}\right),\\
&\dot d_{2k}=-d_{2k}\left(\sum_{j=0}^{k-1}d_{2j+1}\right)\left(d_{2k}+2\sum_{j=k+1}^{K}d_{2j}\right)-e_{2k-1}\left(\sum_{j=k}^{K}d_{2j}\right)+e_{2k}\left(\sum_{j=k+1}^{K}d_{2j}\right),
\end{align}
\end{subequations}
for $k=1,2,\ldots,K.$
\end{corollary}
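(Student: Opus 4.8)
The plan is to mirror the formal compatibility argument that produced the infinite Lax equation \eqref{eq:inf_fst_lax}, adapted to the truncated pencil of Lemma~\ref{lem:Laxfinite}, and then to extract \eqref{eq:finite_fst} from \eqref{eq:genLax} by an entrywise computation. First I would record the shift identities on $\mathbf{C}^{2K}$: since $E_{[2K]}$ and $F_{[2K]}=E_{[2K]}^\top$ are the forward and backward shifts, one has $E_{[2K]}F_{[2K]}=I-P$ and $F_{[2K]}E_{[2K]}=I-Q$, where $P,Q$ are the rank-one projections onto the last and the first coordinate. The decisive structural fact is that the substitution $d_0=-\sum_{j=1}^Kd_{2j}$, legitimate by Lemma~\ref{lem:id_nu}, is exactly what makes the last row of $R_{[2K]}$ vanish: its diagonal entry, which in the infinite lattice reads $-\sum_{j=0}^{K}d_{2j}$, collapses to $0$, and its superdiagonal entry is removed by the truncation; the first column of $R_{[2K]}$ is already zero. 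Consequently $PR_{[2K]}=0$ and $R_{[2K]}Q=0$.

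Granting these vanishings, the two auxiliary relations driving the infinite-case proof survive as identities in $z$. Applying $E_{[2K]}$ to $\dot\Psi_{[2K]}=F_{[2K]}R_{[2K]}\Psi_{[2K]}$ and using $E_{[2K]}F_{[2K]}=I-P$ with $PR_{[2K]}=0$ gives $E_{[2K]}\dot\Psi_{[2K]}=R_{[2K]}\Psi_{[2K]}$. Next, because the backward shift $F_{[2K]}$ never reads the last component of its argument, and on the components it does read $L_{[2K]}\Psi_{[2K]}=zE_{[2K]}\Psi_{[2K]}$ holds identically (rows $0,\dots,2K-2$ are the recurrence \eqref{fst_3term}), we get $F_{[2K]}L_{[2K]}\Psi_{[2K]}=z(I-Q)\Psi_{[2K]}$ for every $z$, the boundary row being harmlessly discarded. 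Combining, $R_{[2K]}F_{[2K]}L_{[2K]}\Psi_{[2K]}=zR_{[2K]}(I-Q)\Psi_{[2K]}=zR_{[2K]}\Psi_{[2K]}=zE_{[2K]}\dot\Psi_{[2K]}$. Differentiating the spectral relation gives $\tfrac{d}{dt}\!\left(L_{[2K]}\Psi_{[2K]}\right)=zE_{[2K]}\dot\Psi_{[2K]}+\dot\chi$ with $\chi:=L_{[2K]}\Psi_{[2K]}-zE_{[2K]}\Psi_{[2K]}$, so one arrives at $\dot L_{[2K]}\Psi_{[2K]}=\bigl(R_{[2K]}F_{[2K]}L_{[2K]}-L_{[2K]}F_{[2K]}R_{[2K]}\bigr)\Psi_{[2K]}+\dot\chi$.

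The step I expect to be the main obstacle is disposing of the defect $\dot\chi$ and then promoting this vector identity to the matrix identity \eqref{eq:genLax}. The vector $\chi$ is supported in the last coordinate, where it equals $T_{2K}(z;t)\big/\prod_{i=1}^{2K-1}(z+e_i)$, so $\dot\chi=0$ amounts to the $t$-independence of $T_{2K}$. I would settle this by computing $T_{2K}$ outright, without invoking isospectrality: for the discrete measure \eqref{dis_measure}, orthogonality (Property~\ref{prop_ortho}) gives $\sum_{m=1}^K b_m\,\zeta_m^{\,j}\,T_{2K}(\zeta_m)\big/\prod_i(\zeta_m+e_i)=0$ for $j=0,\dots,K-1$, and the Vandermonde system in the distinct $\zeta_m$ forces $T_{2K}(\zeta_m)=0$; since $T_{2K}$ is monic of degree $K$, this means $T_{2K}(z;t)=\prod_{m=1}^K(z-\zeta_m)$, manifestly $t$-independent. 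Hence $\dot\chi=0$, so the vector identity holds for all $z$, and since the FST fractions $\psi_0(z),\dots,\psi_{2K-1}(z)$ are linearly independent rational functions of $z$ (their denominators $\prod_{i=1}^k(z+e_i)$ are distinct), it promotes to \eqref{eq:genLax}.

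It remains to produce the explicit system \eqref{eq:finite_fst}. Since the $e_i$ are constants, $\dot L_{[2K]}$ is diagonal, so reading off the diagonal entries of $R_{[2K]}F_{[2K]}L_{[2K]}-L_{[2K]}F_{[2K]}R_{[2K]}$ yields the equations for $\dot d_{2k-1}$ and $\dot d_{2k}$, with the vanishing of the off-diagonal entries serving as a built-in consistency check. A faster cross-check, which I would use to confirm the bookkeeping, is to substitute $d_0=-\sum_{j=1}^Kd_{2j}$ into the infinite system \eqref{eq:inf_fst}: this converts $\sum_{j=0}^{k-1}d_{2j}\mapsto-\sum_{j=k}^Kd_{2j}$ and $\sum_{j=0}^{k}d_{2j}\mapsto-\sum_{j=k+1}^Kd_{2j}$, and a one-line rearrangement of the resulting quadratic terms turns \eqref{eq:inf_fst} into \eqref{eq:finite_fst}, the top boundary $k=K$ being handled automatically by the empty sums.
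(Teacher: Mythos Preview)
Your argument is correct and follows the same compatibility strategy the paper uses for the infinite case (Corollary after Lemma~\ref{lem:Laxinfty}); for the finite corollary the paper simply asserts ``the compatibility condition yields the generalized Lax form'' without writing out a proof, so you are supplying the details the paper omits. Your shift identities $E_{[2K]}F_{[2K]}=I-P$, $F_{[2K]}E_{[2K]}=I-Q$ together with $PR_{[2K]}=0$, $R_{[2K]}Q=0$ are precisely the finite analogues of the paper's relations $E\dot\Psi=R\Psi$ and $FL\Psi=z(I-D)\Psi$ with $D=\mathrm{diag}(1,0,\dots)$; your $Q$ is the paper's $D$.

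The one place where you go beyond the paper's (implicit) argument is in tracking the defect $\dot\chi$ and killing it via the time-independence of $T_{2K}$. Your route to this---orthogonality plus the Vandermonde system forcing $T_{2K}(\zeta_m)=0$, hence $T_{2K}(z)=\prod_m(z-\zeta_m)$---is clean and self-contained; the paper establishes the same fact afterwards in Lemma~\ref{lem:spectrum}, but by a direct multiple-integral computation of the determinant defining $T_{2K}$. Your argument is shorter and avoids that machinery. The promotion from the vector identity to the matrix identity via linear independence of the $\psi_k$ (distinct pole sets at the $-e_i$) is a point the paper glosses over entirely, so it is good that you make it explicit. Your final remark that substituting $d_0=-\sum_{j=1}^{K}d_{2j}$ into \eqref{eq:inf_fst} immediately yields \eqref{eq:finite_fst} is also exactly how the paper motivates the passage from the infinite to the finite lattice.
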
 
We will refer to equations \eqref{eq:finite_fst} as the 
\textbf {\textit {finite FST lattice}}.  
The following result holds by definition
\begin{theorem}\label{th:sol_finite_fst}
The finite FST lattice \eqref{eq:finite_fst} admits the solution
 \begin{align*}
&d_{2k+1}=(-1)^k\frac{\det \left(C_{2k+1}^{\left(k,0\right)}\right)}{\det \left(C_{2k+1}^{\left(k+1,0\right)}\right)}-(-1)^{k-1}\frac{\det \left(C_{2k-1}^{\left(k-1,0\right)}\right)}{\det \left(C_{2k-1}^{\left(k,0\right)}\right)},\\
&d_{2k+2}=(-1)^{k+1}\frac{\det \left(C_{2k+2}^{\left(k+2,0\right)}\right)}{\det \left(C_{2k+2}^{\left(k+1,0\right)}\right)}-(-1)^{k}\frac{\det \left(C_{2k}^{\left(k+1,0\right)}\right)}{\det \left(C_{2k}^{\left(k,0\right)}\right)},
 \end{align*}
 where $C_{k}^{\left(l,p\right)}$ is the CSV matrix with 
 \begin{align*}
d\nu(z;t)=\sum_{i=1}^Kb_i(t)\delta_{\zeta_i}dz,\quad b_i(t)=b_i(0)e^{\zeta_it},\qquad  0<\zeta_1<\zeta_2<\cdots<\zeta_K.
\end{align*}
\end{theorem}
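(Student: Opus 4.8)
The plan is to recognize that this statement is, as the preamble \textbf{``The following result holds by definition''} already signals, a bookkeeping consequence of everything assembled so far: the determinantal expressions asserted to solve the finite FST lattice are nothing but the three-term recurrence coefficients $d_j$ written out in Property \ref{prop:3term}, and the system \eqref{eq:finite_fst} was itself derived as the evolution law of exactly those coefficients. First I would match the measures. The time-dependent measure of Section \ref{sec:inf_fst}, namely $d\nu(z;t)=e^{zt}d\nu(z;0)$, specializes for a discrete initial measure $d\nu(z;0)=\sum_{i=1}^K b_i(0)\delta_{\zeta_i}dz$ to $d\nu(z;t)=\sum_{i=1}^K b_i(0)e^{\zeta_i t}\delta_{\zeta_i}dz$, since the atom at $\zeta_i$ picks up the scalar factor $e^{\zeta_i t}$; this is precisely the measure \eqref{dis_measure} in the statement, with $b_i(t)=b_i(0)e^{\zeta_i t}$.

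Next I would invoke Property \ref{prop:3term}, which identifies the coefficients in the recurrence \eqref{fst_3term} with the very ratios of CSV determinants displayed in the theorem. For the time-dependent measure these determinants, entering through the Stieltjes transform $V(z;t)=\sum_{i=1}^K b_i(0)e^{\zeta_i t}/(\zeta_i+z)$, become functions of $t$, so that Property \ref{prop:3term} furnishes a canonical time-dependent family $d_j(t)$. The assertion of the theorem is then exactly that this family obeys the ODE system \eqref{eq:finite_fst}.

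But this is precisely what the construction of the finite FST lattice guarantees. The time evolution of the FST fractions \eqref{lax_t}, combined with the recurrence \eqref{fst_3term}, produces the generalized Lax equation \eqref{eq:genLax} of Lemma \ref{lem:Laxfinite}, whose entrywise content is \eqref{eq:finite_fst}; the reduction from the infinite to the finite lattice uses Lemma \ref{lem:id_nu} to replace $d_0=\int d\nu(z)$ by $-\sum_{j=1}^K d_{2j}$. Since the $d_j(t)$ produced by the determinantal formulas are \emph{by definition} the recurrence coefficients of the time-evolving FST polynomials attached to \eqref{dis_measure}, they automatically satisfy this Lax equation, and hence \eqref{eq:finite_fst}. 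I would therefore conclude simply by citing this chain of identifications.

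I do not expect a genuine analytic obstacle here; the one point requiring care is well-definedness. One must verify that, for the $K$-atom discrete measure, all CSV determinants entering $d_j(t)$ for $0\le j\le 2K$ are nonzero throughout the time interval, so that the ratios, and thus the family $d_j(t)$, make sense. This follows from the integral representation \eqref{eq:detCSV1} together with the rank observation recorded just before Lemma \ref{lem:id_nu}, namely that $\det\!\left(C_k^{(l,p)}\right)\neq 0$ precisely when $l\le K$. With that in hand, the solution claim holds by construction.
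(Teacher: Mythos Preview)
Your proposal is correct and matches the paper's own treatment: the paper simply records that ``the following result holds by definition,'' and your argument spells out exactly that chain of identifications (Property~\ref{prop:3term} for the determinantal form of $d_j$, the time-dependent measure matching \eqref{dis_measure}, and Lemma~\ref{lem:Laxfinite} producing \eqref{eq:finite_fst} as the evolution of these very coefficients). Your additional remark on well-definedness via \eqref{eq:detCSV1} is a welcome elaboration but introduces nothing beyond what the paper implicitly assumes.
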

To conclude this section we give the following description of the 
spectrum of $L_{[2K]}\Psi_{[2K]}=zE_{[2K]}\Psi_{[2K]}$.  Clearly, the eigenvalues are automatically zeros of $T_{2K}(z)$, but one can sharpen this statement.  
\begin{lemma} \label{lem:spectrum}
Let $L_{[2K]}$ and $E_{[2K]}$ be given as in Lemma \ref{lem:Laxfinite}. Then 
\mbox{}
\begin{enumerate} 
\item 
\begin{equation*}
\det (L_{[2K]}-zE_{[2K]})=T_{2K}(z).  
\end{equation*}
\item For any $\zeta_l$ in the support of $d\nu$  given by \eqref{dis_measure}, 
$T_{2K}(\zeta_l)=0$.  
\item The polynomial $\det (L_{[2K]}-zE_{[2K]})$ is time invariant 
for the FST finite lattice.  
\end{enumerate} 
\end{lemma}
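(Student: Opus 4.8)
The plan is to establish the three claims in sequence, using the tridiagonal structure of $L_{[2K]}$ and $E_{[2K]}$ together with the results already proved for the FST polynomials.

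For part (1), I would expand the determinant $\det(L_{[2K]}-zE_{[2K]})$ along the last row, or better, develop a cofactor recursion in the dimension. The matrix $M_{[2K]}=L_{[2K]}-zE_{[2K]}$ is tridiagonal with diagonal entries $d_i$, subdiagonal entries $1$, and superdiagonal entries $-e_i-z$ (since $L_{i,i+1}=-e_i$ and $-zE_{i,i+1}=-z$). If I write $P_k$ for the $k\times k$ leading principal minor of this tridiagonal matrix, the standard continuant recursion gives
\begin{equation*}
P_k = d_k P_{k-1} - (1)\cdot(-e_{k-1}-z) P_{k-2} = d_k P_{k-1} + (z+e_{k-1})P_{k-2}.
\end{equation*}
Comparing this with the three term recurrence \eqref{fst_3term}, namely $T_{k}(z)=d_{k}T_{k-1}(z)+(z+e_{k-1})T_{k-2}(z)$, I see the two recursions coincide provided the index of $e$ matches; I would verify carefully that the off-diagonal product in row $k$ indeed produces the factor $(z+e_{k-1})$ expected by \eqref{fst_3term}. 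Checking the base cases $P_0=1=T_0$ and $P_1=d_1=T_1\cdot(\text{normalization})$ against $T_0,T_1$ then pins down the identification, yielding $\det M_{[2K]}=P_{2K}=T_{2K}(z)$. The one subtlety I must watch is the $d_0$ appearing inside the entries of $L_{[2K]}$: by Lemma \ref{lem:id_nu} one replaces $d_0=\int d\nu(z)$ by $-\sum_{j=1}^K d_{2j}$, so the truncated matrix genuinely encodes $T_{2K}$ with the correct finite-measure coefficients.

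Part (2) is then immediate from the orthogonality Property \ref{prop_ortho}. For the discrete measure \eqref{dis_measure}, each atom $\zeta_l$ lies in the support, and the biorthogonality relation \eqref{eq:orth} forces $T_{2K}(z)/\prod_{i=1}^{2K}(z+e_i)$ to be orthogonal to all monomials $z^j$ for $0\le j\le K$ against a measure supported on exactly $K$ points. Since a degree-$K$ polynomial orthogonal to $1,z,\dots,z^{K-1}$ on $K$ mass points must vanish at all of them, I conclude $T_{2K}(\zeta_l)=0$; alternatively this follows directly from the vanishing argument used in the proof of Lemma \ref{lem:id_nu}, where the finite support with $K$ atoms annihilates the relevant determinant.

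For part (3), time invariance of the characteristic polynomial follows from the generalized Lax form \eqref{eq:genLax}, $\dot L_{[2K]}=R_{[2K]}F_{[2K]}L_{[2K]}-L_{[2K]}F_{[2K]}R_{[2K]}$. Because this is the defining statement of an isospectral flow for the generalized eigenvalue problem $L_{[2K]}\Psi=zE_{[2K]}\Psi$, I would differentiate $\det(L_{[2K]}-zE_{[2K]})$ in $t$ and show the derivative vanishes. The clean way is to note that the structure $\dot L = RFL-LFR$ together with the compatibility relation $E_{[2K]}\dot\Psi = R_{[2K]}\Psi$ (the finite analog of the identity used in the infinite-dimensional corollary) implies the spectrum of the pencil $(L_{[2K]},E_{[2K]})$ is preserved. \emph{The main obstacle} I anticipate is precisely this step: unlike the ordinary Lax equation $\dot L=[B,L]$ where isospectrality is textbook, here $E_{[2K]}$ is singular (it is strictly upper triangular nilpotent) so the eigenvalue problem is a degenerate pencil, and I cannot simply invoke conjugation-invariance of eigenvalues. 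I would instead argue directly that $\det(L_{[2K]}-zE_{[2K]})=T_{2K}(z)$ has coefficients built from the ratios of CSV determinants (Theorem \ref{th:sol_finite_fst}), whose zeros are the fixed support points $\{-e_i\}$ and atoms, none of which depend on $t$; combined with part (1) this gives time invariance of the polynomial up to an overall time-dependent normalization that I would track through $N_{2K}$. Establishing that the normalization factor does not move the zero set is the delicate point to handle with care.
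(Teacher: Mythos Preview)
Your treatment of part (1) via the continuant recursion is correct and in fact more explicit than the paper's argument, which simply observes that $\det(L_{[2K]}-zE_{[2K]})$ and $T_{2K}(z)$ are both monic of degree $K$ in $z$ with the same zeros (the values of $z$ at which the truncated recurrence closes). One small correction: your remark about ``the $d_0$ appearing inside the entries of $L_{[2K]}$'' is misplaced, since $d_0$ does not appear in $L_{[2K]}$ at all; it enters only into $R_{[2K]}$. Also, there is no normalization issue at the base of the recursion: $P_1=d_1=1/V(e_1)=T_1$ exactly.

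For part (2), your orthogonality argument is a valid and clean alternative to the paper's direct multiple-integral computation (which mirrors the column manipulations in the proof of Lemma~\ref{lem:id_nu}). Concretely, the $K$ relations $\sum_{l=1}^K b_l\,T_{2K}(\zeta_l)\,\zeta_l^{j}\big/\prod_i(\zeta_l+e_i)=0$ for $j=0,\dots,K-1$ form a nondegenerate Vandermonde system in the unknowns $b_lT_{2K}(\zeta_l)/\prod_i(\zeta_l+e_i)$, forcing each to vanish; since $b_l>0$ and $\zeta_l+e_i>0$, this gives $T_{2K}(\zeta_l)=0$.

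Part (3), however, is where you overcomplicate and leave a gap. You reach for the generalized Lax equation~\eqref{eq:genLax}, then correctly note that $E_{[2K]}$ is nilpotent so standard isospectrality does not apply, and finally retreat to an argument about ``coefficients built from ratios of CSV determinants'' and a worry about time-dependent normalization through $N_{2K}$. None of this is needed. The paper's argument is essentially one line: by the first property in Section~\ref{sec:fstpoly}, $T_{2K}(z)$ is monic of degree $K$; by your own part~(2) it has the $K$ time-independent points $\zeta_1,\dots,\zeta_K$ as roots. Hence $T_{2K}(z)=\prod_{l=1}^K(z-\zeta_l)$, which has no time dependence whatsoever, and by part~(1) the same holds for $\det(L_{[2K]}-zE_{[2K]})$. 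You already had all the ingredients from parts~(1) and~(2); you simply did not put them together.
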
 
\begin{proof} 
By inspection we see that $\det (L_{[2K]}-zE_{[2K]})$ is a monic polynomial in $z$ of degree $K$ and so is $T_{2K}(z)$.  Hence they must be equal as they have identical roots (by definition of the truncation of the recurrence relation \eqref{fst_3term}).  
The second statement can be proven by a direct computation as follows. 
Let us fix $z=\zeta_l$ where $\zeta_l$ is in the support 
of $d\nu(z)$.  Then $T_{2K}(\zeta_l)$ is proportional to the 
multiple integral 
\begin{align*}
\int
\det \begin{pmatrix} 
1&-\zeta_l&\cdots&(-\zeta_l)^K&0&0&\cdots&0\\
\frac{1}{e_1+x_1}&\frac{e_1}{e_1+x_2}&\cdots&\frac{e_1^{K}}{e_1+x_{K+1}}&1&e_1&\dots&e_1^{K-1}\\
\frac{1}{e_2+x_1}&\frac{e_2}{e_2+x_2}&\cdots&\frac{e_2^{K}}{e_2+x_{K+1}}&1&e_2&\dots&e_2^{K-1}\\
\vdots&\vdots&\ddots&\vdots&\vdots&\vdots&\ddots&\vdots\\
\frac{1}{e_{2K}+x_1}&\frac{e_{2K}}{e_{2K}+x_2}&\cdots&\frac{e_{2K}^{K}}{e_{2K}+x_{K+1}}&1&e_{2K}&\dots&e_{2K}^{K-1}\
\end{pmatrix} \prod_{j=1}^{K+1}d\nu(x_j).
\end{align*}
Now, we follow the same procedure as in the proof of 
Lemma \ref{lem:id_nu}, namely, for a fixed $j=2,3,\ldots, K+1$, we add the $(j+K-i)$-th column multiplied by $(-1)^{i+1}x_j^{i}$ for each $i=0,1,\ldots,j-2$ to the $j$-th column, obtaining 
\begin{align*}
&
=\int
\det \begin{pmatrix} 
1&-\zeta_l&\cdots&(-\zeta_l)^K&0&0&\cdots&\\
\frac{1}{e_1+x_1}&\frac{-x_2}{e_1+x_2}&\cdots&\frac{(-x_{K+1})^K}{e_1+x_{K+1}}&1&e_1&\dots&e_1^{K-1}\\
\frac{1}{e_2+x_1}&\frac{-x_2}{e_2+x_2}&\cdots&\frac{(-x_{K+1})^K}{e_2+x_{K+1}}&1&e_2&\dots&e_2^{K-1}\\
\vdots&\vdots&\ddots&\vdots&\vdots&\vdots&\ddots&\vdots\\
\frac{1}{e_{2K}+x_1}&\frac{-x_2}{e_{2K}+x_2}&\cdots&\frac{(-x_{K+1})^K}{e_{2K}+x_{K+1}}&1&e_{2K}&\dots&e_{2K}^{K-1}\
\end{pmatrix} \prod_{j=1}^{K+1}d\nu(x_j).  
\end{align*}
Since the measure $d\nu$ has $K$ atoms there are always two columns 
among the first $K+1$ columns of the integrand which 
are proportional one to another, hence $T(\zeta_l)=0$.  
Finally, since $T_{2K}(z)$ is time invariant, so is $\det (L_{[2K]}-zE_{[2K]})$ by item ($1$) above.  
\end{proof} 
\begin{remark} Note that there are $K$ nontrivial coefficients 
in the monic polynomial $\det (L_{[2K]}-zE_{[2K]})$ which has degree $K$.  
Thus we identified $K$ nontrivial constants of motion.  Since 
the Lax equation does not have a commutator form these 
constants of motion are not traces of powers of $L_{[2K]}$.  
Moreover, even though one can map the generalized Lax 
equation \eqref{eq:genLax} into a commutator equation, for example
\begin{equation} 
\frac{d}{dt} ( L_{[2K]}F_{[2K]})=\big[R_{[2K]}F_{[2K]}, L_{[2K]}F_{[2K]}\big], 
\end{equation} 
one sees that one is not gaining new constants of motion as the 
spectrum of $L_{[2K]}F_{[2K]}$ consists of $e_1,\cdots, e_{2K}$.

\end{remark}

\section{mCH peakon lattice}
This section reviews some facts about the mCH peakons, which will be used later to develop the relation with the finite FST lattice.
The mCH equation is a system 
\begin{equation}\label{eq:m1CH}
m_t+\left((u^2-u_x^2) m\right)_x=0,  \qquad 
m=u-u_{xx},
\end{equation}
which admits a special class of non-smooth solutions called peakons defined by the ansatz
\begin{equation*} 
u(x,t)=\sum_{j=1}^n m_j (t)e^{-|x-x_j(t)|}.
\end{equation*} 
It has been shown in \cite{chang2017lax} that, assuming the peakon ansatz as well as the ordering condition $x_1< x_2<\cdots<x_n$ , (\ref{eq:m1CH})  can be viewed as a distribution equation provided the ODE system
\begin{equation}\label{mCH_ode}
\dot{m}_j=0, \qquad 
\dot{x}_j=2\sum_{\substack{1\leq k\leq n,\\k\neq j}}m_jm_ke^{-|x_j-x_k|}+4\sum_{1\leq i<j<k\leq n}m_im_ke^{-|x_i-x_k|}
\end{equation}
holds.  
The mCH peakon ODE system \eqref{mCH_ode} describes actually an isospectral deformation (spectrum preserving) of the spectral problem \begin{equation} \label{dstring}
\begin{gathered}
\begin{aligned}
     q_{k}-q_{k-1}&=h_kp_{k-1}, & 1\leq k\leq n, \\
     p_{k}-p_{k-1}&=-z g_kq_{k-1},& 1\leq k\leq n,\\
      q_0=0, \quad  p_0=1&, \quad p_{n}=0,  &  
  \end{aligned}
\end{gathered}
\end{equation} 
where $g_j=m_j e^{-x_j}, \, h_j =m_j e^{x_j} $ and $z\in \mathbf{C}$ is a spectral variable. For future use note that $g_jh_j=m_j^2$ and we focus on the case when all $m_k$ are \textbf{positive} and \textbf{distinct}.
In \cite{chang2017lax}, an inverse spectral method has been formulated to solve the mCH peakon ODEs \eqref{mCH_ode} and hence \eqref{eq:m1CH}.  For our purpose, we mainly address the case of even $n$, say $n=2K$, hereafter.

\subsection{Forward and inverse problems}
In this subsection, we review some results in \cite{chang2017lax} regarding the forward and inverse spectral problems for \eqref{dstring} at the initial time $t=0$. Given the initial positions positions ordered as $x_1(0)<x_2(0)<\cdots<x_n(0)$, and \textbf{positive} and \textbf{distinct} constants $m_k$, we consider the forward spectral problems for \eqref{dstring} using the notation introduced in Section \ref{sec:notations}.

\begin{theorem}[{\cite[Corollary 2.7]{chang2017lax}}]
Consider the initial value problem
\begin{equation} \label{dstringIVP}
\begin{gathered}
\begin{aligned}
     q_{k}-q_{k-1}&=h_kp_{k-1}, & 1\leq k\leq 2K, \\
     p_{k}-p_{k-1}&=-z g_kq_{k-1},& 1\leq k\leq 2K,\\
      q_0=0, &\quad  p_0=1.& 
  \end{aligned}
\end{gathered}
\end{equation} 
Then the polynomials $p_k(z),q_k(z)$ can be explicitly expressed as 
\begin{subequations}
\begin{align}
q_k(z)&=
\sum_{j=0}^{\lfloor\frac{k-1}{2}\rfloor}\Big(\sum_{\substack{I\in \binom{[k]}{j+1}, J\in \binom{[k]}{j}\\ I<J}}
\, h_Ig_J\Big) (-z)^j,  \\
p_{k}(z)&=1+\sum_{j=1}^{\lfloor\frac{k}{2}\rfloor}\Big(\sum_{\substack{I,J \in \binom{[k]}{j}\\ I<J}} h_I g_J
\, \Big)(-z)^j.  
\end{align}
\end{subequations} 
 
\end{theorem}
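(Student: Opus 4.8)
The plan is to prove the two formulas simultaneously by induction on $k$, using the recurrences $q_k=q_{k-1}+h_kp_{k-1}$ and $p_k=p_{k-1}-zg_kq_{k-1}$ from \eqref{dstringIVP} directly. The base case $k=0$ is immediate: the upper summation limit for $q_0$ is $\lfloor-1/2\rfloor=-1$, so that sum is empty and $q_0=0$, while the formula for $p_0$ collapses to its constant term $1$; both match the initial conditions. For the inductive step I would assume the stated expressions hold for $q_{k-1}$ and $p_{k-1}$ and regroup $q_{k-1}+h_kp_{k-1}$ and $p_{k-1}-zg_kq_{k-1}$ into the claimed forms.

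The heart of the argument is a combinatorial partition of the index pairs in the target formulas according to whether the largest available label $k$ is used. For $q_k$, fix a degree $j$ and an interlacing pair $I\in\binom{[k]}{j+1}$, $J\in\binom{[k]}{j}$ with $I<J$, i.e. $i_1<j_1<\dots<j_j<i_{j+1}$. Since $i_{j+1}$ is the largest entry occurring, $k$ can appear only as $i_{j+1}$. Hence either $k\notin I\cup J$, so that $(I,J)$ is an interlacing pair inside $[k-1]$ contributing exactly to $q_{k-1}$; or $i_{j+1}=k$, in which case $(I\setminus\{k\},J)$ is an interlacing pair with both sets in $\binom{[k-1]}{j}$ of the type indexing $p_{k-1}$, and the factorization $h_I=h_kh_{I\setminus\{k\}}$ reproduces the corresponding term of $h_kp_{k-1}$. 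Checking the $j=0$ term separately (where $J=\emptyset$ and the single-element convention $I<J$ from Section \ref{sec:notations} applies) gives $\sum_{i=1}^{k-1}h_i+h_k=\sum_{i=1}^k h_i$, and summing over all $j$ shows $q_{k-1}+h_kp_{k-1}$ reassembles precisely into the claimed formula for $q_k$.

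For $p_k$ the dual bookkeeping applies. For an interlacing pair $I,J\in\binom{[k]}{j}$ with $i_1<j_1<\dots<i_j<j_j$, the largest entry is now $j_j$, so $k$ can occur only as $j_j$. If $k\notin I\cup J$ the pair lives in $[k-1]$ and feeds $p_{k-1}$; if $j_j=k$ then $I\in\binom{[k-1]}{j}$ while $J\setminus\{k\}\in\binom{[k-1]}{j-1}$, and the truncated chain $i_1<j_1<\dots<j_{j-1}<i_j$ is exactly the interlacing pattern indexing $q_{k-1}$ at degree $j-1$. The factorization $g_J=g_kg_{J\setminus\{k\}}$ together with the degree shift $(-z)^j=(-z)(-z)^{j-1}$ then reproduces the coefficient of $-zg_kq_{k-1}$, so $p_{k-1}-zg_kq_{k-1}$ reassembles into the claimed formula for $p_k$.

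Finally, I would confirm that the floor-function summation limits are respected. The only delicate point is the top-degree term. For odd $k$ the leading coefficient of $q_k$ uses all of $[k]$ (since $|I|+|J|=2\lfloor(k-1)/2\rfloor+1=k$), forcing $k=i_{j+1}$, so this term arises solely from the $h_kp_{k-1}$ branch; dually, for even $k$ the leading coefficient of $p_k$ uses all of $[k]$, forcing $k=j_j$, and arises solely from the $-zg_kq_{k-1}$ branch. I expect the main obstacle to be not any single computation but the careful matching of the interlacing inequalities across the two index-set types: one must verify that deleting the extremal label $k$ sets up a bijection between the pairs indexing $q_k$ (resp.\ $p_k$) and the disjoint union of those indexing $q_{k-1}$ and $p_{k-1}$ (resp.\ $p_{k-1}$ and $q_{k-1}$), with the set sizes, the interlacing chains, and the $h,g$ weight factors all aligning.
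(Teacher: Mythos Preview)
Your induction argument is correct. The bijection you describe---partitioning the interlacing pairs $(I,J)$ indexing $q_k$ or $p_k$ according to whether the extremal label $k$ occurs (necessarily as $i_{j+1}$ for $q_k$, or $j_j$ for $p_k$)---cleanly establishes the recurrences, and your edge-case checks (the $j=0$ term for $q_k$, the top-degree terms) are accurate.

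As for comparison with the paper: this theorem is not proved in the present paper at all. It is quoted verbatim as Corollary~2.7 of \cite{chang2017lax} in a section explicitly devoted to reviewing known results, and no argument is supplied here. Your self-contained induction is therefore more than the paper offers; the original proof in \cite{chang2017lax} proceeds along the same combinatorial-induction lines, so you have essentially reconstructed it.
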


The spectrum of the boundary value problem \eqref{dstring} as well as 
additional spectral data is captured by the \textit{Weyl function}
\begin{equation}\label{eq:defWeyl}
W(z)=\frac{q_{2K}(z)}{p_{2K}(z)},  
\end{equation} 
whose main properties are 
\begin{theorem}[{\cite[Theorem 3.1]{chang2017lax}}] \label{thm:W} 
Given the boundary value problem \eqref{dstring}, $W(z)$ defined by \eqref{eq:defWeyl} is a Stieltjes transform of a positive, discrete measure $d\mu$ with support 
in $\R_+$.  More precisely: 
\begin{equation}\label{weyl_ration}
W(z)=\int \frac{d\mu(x)}{x-z}, \qquad d\mu=\sum_{i=1}^{K} 
b_i \delta_{\zeta _i}, \qquad 0<\zeta_1<\dots< \zeta_{K}, \qquad 0<b_i,   \quad 1\leq j\leq K.
\end{equation}
\end{theorem}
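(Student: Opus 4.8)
The plan is to verify the three features that characterize the Stieltjes transform of a positive, finitely supported measure on $\R_+$: that $W$ is a proper rational function decaying at infinity, that its poles are real, positive and simple, and that the residues carry the correct sign. Writing $W(z)=\sum_i \frac{b_i}{\zeta_i-z}$ forces $b_i=-\operatorname{Res}_{z=\zeta_i}W$, so positivity of the weights is equivalent to the residues of $W$ being negative. I would begin with the cheap structural facts read off from the explicit formulas of the preceding theorem. Since $\deg q_{2K}=K-1$ and $\deg p_{2K}=K$, the function $W=q_{2K}/p_{2K}$ is proper and satisfies $W(z)=O(1/z)$ as $z\to\infty$, matching the asymptotics of a Stieltjes transform of a finite measure and bounding the number of poles by $K$. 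Moreover every coefficient sum $\sum_{I<J}h_Ig_J$ is a sum of products of the strictly positive quantities $g_i=m_ie^{-x_i}$ and $h_i=m_ie^{x_i}$, so $p_{2K}(z)=\sum_{j=0}^K P_j(-z)^j$ and $q_{2K}(z)=\sum_{j=0}^{K-1}Q_j(-z)^j$ have strictly positive coefficients $P_j,Q_j$ with $P_0=1$. Consequently $p_{2K}(z)\ge 1$ for real $z\le 0$, so once the zeros of $p_{2K}$ are shown to be real they are automatically positive.

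Next I would establish that the zeros of $p_{2K}$, i.e. the poles of $W$, are real and simple. The boundary value problem \eqref{dstring} is a discrete string: its transfer matrices $\bigl(\begin{smallmatrix}1&h_k\\-zg_k&1\end{smallmatrix}\bigr)$ have determinant $1+zg_kh_k=1+zm_k^2$, and eliminating the intermediate variables recasts $p_{2K}(z)=0$ as a generalized eigenvalue problem $A\mathbf v=zB\mathbf v$ whose data are built from the positive masses. The natural move is to symmetrize this to a self-adjoint problem with a positive-definite weight, so that the spectral theorem delivers $K$ real simple eigenvalues; combined with the coefficient-sign argument above, these are the required $0<\zeta_1<\dots<\zeta_K$. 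I would lean here on the oscillation theory of Gantmacher--Krein for totally nonnegative (oscillatory) matrices, which is the standard mechanism in the peakon literature and applies precisely because all the $m_k$ are positive and distinct.

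The crux, and the step I expect to be genuinely hard, is the sign of the residues. I would compute $\operatorname{Res}_{z=\zeta_i}W=q_{2K}(\zeta_i)/p_{2K}'(\zeta_i)$ and seek a discrete Green's function / Christoffel--Darboux identity expressing $b_i=-q_{2K}(\zeta_i)/p_{2K}'(\zeta_i)$ as a strictly positive weighted sum of squares of the eigenvector of \eqref{dstring} at $z=\zeta_i$, the weights being the positive masses. Concretely, the Lagrange-type identity obtained by pairing the recurrence at the eigenvalue $\zeta_i$ with its $z$-derivative should telescope, leaving a bulk sum of nonnegative terms of the form $g_k q_{k-1}(\zeta_i)^2$ (or $h_k p_{k-1}(\zeta_i)^2$) that are not all zero because the eigenvector is nontrivial. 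This forces $b_i>0$ and simultaneously reconfirms simplicity of the pole; getting the telescoping clean, with the correct boundary contributions from $q_0=0,\ p_0=1,\ p_{2K}=0$, is the main obstacle, while the rest is bookkeeping. Assembling properness and $O(1/z)$ decay, the $K$ simple positive poles, and the positive residues yields the partial fraction expansion \eqref{weyl_ration} and identifies $W$ as the Stieltjes transform of $d\mu=\sum_{i=1}^K b_i\delta_{\zeta_i}$ with $b_i>0$ and $\zeta_i>0$, as claimed.
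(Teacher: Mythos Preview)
This theorem is not proved in the present paper: it is quoted verbatim as \cite[Theorem 3.1]{chang2017lax} in the review section on the mCH peakon lattice, with no argument supplied here. Consequently there is no ``paper's own proof'' to compare your proposal against; the authors simply import the result from the earlier reference.

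That said, your outline is the standard route for Weyl functions attached to discrete strings and is essentially the strategy used in \cite{chang2017lax}. The degree count and the positivity of the interlacing coefficient sums $\sum_{I<J}h_Ig_J$ are exactly as you describe, giving properness, $O(1/z)$ decay, and ruling out nonpositive real zeros of $p_{2K}$. Reality and simplicity of the zeros follow from recognizing the spectral problem as a self-adjoint (or totally nonnegative/oscillatory) generalized eigenvalue problem built from the positive masses; this is the Gantmacher--Krein mechanism you invoke. The residue sign is indeed obtained by a Lagrange/Christoffel--Darboux identity: differentiating the transfer recursion in $z$ and pairing with the eigenvector at $z=\zeta_i$ telescopes to a positive combination of $g_kq_{k-1}(\zeta_i)^2$ (equivalently $h_kp_{k-1}(\zeta_i)^2$), with the boundary terms killed by $q_0=0$, $p_0=1$, $p_{2K}(\zeta_i)=0$. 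One small correction: your transfer matrix should read $\bigl(\begin{smallmatrix}1&h_k\\-zg_k&1\end{smallmatrix}\bigr)$ acting on $(q_{k-1},p_{k-1})^\top$ only if the two steps are merged as you wrote; in \cite{chang2017lax} the recursion is split into two unit-determinant shears, which makes the oscillation-theory argument cleaner but is equivalent to yours after composition. In short, your plan is correct and aligned with the source; the present paper provides nothing further to compare it to.
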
 

Conversely, an inverse problem can be fashioned with the help of  \textit {Cauchy-Jacobi interpolation problem}, known from the general multi-point Pad\'{e} approximation theory \cite{baker1996pade}. 

\begin{theorem}[{\cite[Theorem 4.20]{chang2017lax}}] 
\label{thm:detsolISP}
Given a rational function \eqref{weyl_ration},  one can uniquely determine positive constants $g_j, h_j$, $1\leq j\leq 2K$, such 
that $g_jh_j=m_j^2$ and the initial value problem: 
\begin{equation*} 
\begin{gathered}
\begin{aligned}
     q_{k}-q_{k-1}&=h_kp_{k-1}, & 1\leq k\leq 2K, \\
     p_{k}-p_{k-1}&=-z g_kq_{k-1},& 1\leq k\leq 2K,\\
      q_0=0, &\quad  p_0=1, &  
  \end{aligned}
\end{gathered}
\end{equation*} 
satisfies 
\begin{equation*}
W(z)=\frac{q_{2K}(z)}{p_{2K}(z)}.  
\end{equation*} 
The unique solution can be explicitly expressed as
\begin{subequations}
\begin{align}
&&g_{k'}&=\frac{(-1)^{\frac{k-1}{2}}\det \left(C_k^{(\frac{k-1}{2},1)}\right)\det \left(C_{k-1}^{(\frac{k-1}{2},1)}\right)}
{\mathbf{e}_{[1,k]}\det \left(C_k^{(\frac{k+1}{2},0)}\right)\det  \left(C_{k-1}^{(\frac{k-1}{2},0)}\right)},   &\text{ if k  is odd},  \label{eq:detinversegodd}\\
&&g_{k'}&= \frac{(-1)^{\frac{k}{2}}\det \left(C_k^{(\frac{k}{2},1)}\right)\det  \left(C_{k-1}^{(\frac k2 -1,1)}\right)}
{\mathbf{e}_{[1,k]}\det \left(C_k^{(\frac k2,0)}\right)\det  \left(C_{k-1}^{(\frac{k}{2},0)}\right)}, &\text{ if k is even}. \label{eq:detinversegeven}
\end{align}
\end{subequations}
Likewise, 
\begin{subequations}
\begin{align}
h_{k'}&=\frac
{\mathbf{e}_{[1,k-1]}\det \left(C_k^{(\frac{k+1}{2},0)}\right)\det  \left(C_{k-1}^{(\frac{k-1}{2},0)}\right)}{(-1)^{\frac{k-1}{2}}\det \left(C_k^{(\frac{k-1}{2},1)}\right)\det  \left(C_{k-1}^{(\frac{k-1}{2},1)}\right)},   &\text{ if k is odd},  \label{eq:detinversehodd}\\
h_{k'}&= \frac
{\mathbf{e}_{[1,k-1]}\det \left(C_k^{(\frac k2,0)}\right)\det  \left(C_{k-1}^{(\frac{k}{2},0)}\right)}{(-1)^{\frac{k}{2}}\det \left(C_k^{(\frac{k}{2},1)}\right)\det  \left(C_{k-1}^{(\frac k2 -1,1)}\right)}, &\text{ if k is even}, \label{eq:detinverseheven}
\end{align}
\end{subequations}
where $k'=2K+1-k$ for short, $ C_k^{(l,p)}$ stands for $ C_k^{(l,p)}(\mu,\bf{e})$ defined by  \eqref{CSV}, and $\mathbf{e}_{[1,k]}=e_1e_2\cdots e_k$ with $e_j=\frac{1}{m_{j'}^2}$.
\end{theorem}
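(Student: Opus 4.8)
The plan is to invert the forward map \eqref{dstringIVP} by peeling off one transfer matrix at a time, starting from the pair $(q_{2K},p_{2K})$, which is essentially dictated by $W$ (its denominator must be proportional to $\prod_i(z-\zeta_i)$ and its numerator is the matching polynomial of \eqref{weyl_ration}). The recurrence \eqref{dstringIVP} reads $\binom{q_k}{p_k}=T_k\binom{q_{k-1}}{p_{k-1}}$ with $T_k=\left(\begin{smallmatrix}1&h_k\\ -zg_k&1\end{smallmatrix}\right)$, so inverting gives
\begin{equation*}
q_{k-1}=\frac{q_k-h_kp_k}{1+zg_kh_k},\qquad p_{k-1}=\frac{zg_kq_k+p_k}{1+zg_kh_k}.
\end{equation*}
Since we demand $g_kh_k=m_k^2=1/e_{k'}$ (recall $k'=2K+1-k$ and $e_j=1/m_{j'}^2$), the only way for $q_{k-1},p_{k-1}$ to remain polynomials is that both numerators vanish at the zero $z=-e_{k'}$ of $1+zm_k^2$. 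This forces
\begin{equation*}
h_k=\frac{q_k(-e_{k'})}{p_k(-e_{k'})},\qquad g_k=\frac{m_k^2\,p_k(-e_{k'})}{q_k(-e_{k'})},
\end{equation*}
so that $g_kh_k=m_k^2$ holds automatically and each $(g_k,h_k)$, along with the peeled pair $(q_{k-1},p_{k-1})$, is uniquely determined. That the whole peeling is feasible---degrees dropping by the correct amount at each step and terminating precisely at $q_0=0,\ p_0=1$ after $2K$ steps---is guaranteed by Theorem \ref{thm:W}, which says $W$ is the Stieltjes transform of a positive measure with exactly $K$ atoms. This yields existence and uniqueness simultaneously.

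To convert the ratio formulas into the stated closed forms, I would express $q_k(-e_{k'})$ and $p_k(-e_{k'})$ as CSV determinants. The data feeding the reconstruction are the numbers $V(e_j)=\int \frac{d\mu(x)}{x+e_j}=-W(-e_j)$, i.e.\ the values of the Weyl transform at the peeling nodes. Exactly as the FST polynomials $T_k$ of Section \ref{sec:fstpoly} are given by the determinant \eqref{fst_poly}, the string polynomials $q_k,p_k$ solve a Cauchy--Jacobi (multipoint Pad\'e) interpolation problem for $W$ in the mixed Cauchy--Vandermonde basis $\{1/(z+e_i)\}\cup\{z^j\}$ (with a node at $\infty$); solving the resulting collocation system by Cramer's rule represents $q_k,p_k$ as determinants of the CSV type \eqref{CSV}. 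Evaluating at the node $z=-e_{k'}$ borders such a determinant by one extra row and column built from $e_{k'}$, raising the matrix size by one, which is precisely the mechanism producing the products $\det C_k^{(\cdot)}\det C_{k-1}^{(\cdot)}$ and the index shift $k\mapsto k'$ in the statement. The parities $\lfloor k/2\rfloor$ in the superscripts and the weights $\mathbf{e}_{[1,k]}$ then fall out of matching degrees and the explicit column normalizations.

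The remaining points are positivity and the nondegeneracy underlying uniqueness. Positivity of every $g_{k'},h_{k'}$ reduces, through the ratio formulas, to a sign computation for the relevant CSV determinants using the multiple-integral representation \eqref{eq:detCSV1}: since $d\mu$ is positive with support in $\R_+$ and the integrand $\Delta_{[1,l]}(\mathbf{x})^2/\Gamma_{[1,k],[1,l]}(\mathbf{e};\mathbf{x})$ is positive there, each determinant acquires a definite sign that, once combined with the explicit prefactors in \eqref{eq:detinversegodd}--\eqref{eq:detinverseheven}, makes the quotients positive. The same nonvanishing ensures no denominator $p_k(-e_{k'})$ or $q_k(-e_{k'})$ met during the peeling is zero, so the algorithm never breaks down, and together with the fact that each step is forced this secures uniqueness. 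I expect the main obstacle to be the bookkeeping of the second paragraph: pinning down the interpolation conditions satisfied by $q_k,p_k$, recognizing the collocation matrix as a CSV matrix \eqref{CSV}, and then carrying out the Cauchy-determinant (Sylvester / Desnanot--Jacobi type) reductions that collapse the Cramer quotients into the compact two-determinant ratios, all while keeping the parities, signs, and $\mathbf{e}$-powers straight through the index reversal $k\leftrightarrow k'$.
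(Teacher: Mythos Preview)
The paper does not prove this theorem: it is quoted verbatim from \cite{chang2017lax} (as Theorem~4.20 there) and only prefaced by the remark that the inverse problem is solved via the \emph{Cauchy--Jacobi interpolation problem, known from the general multi-point Pad\'e approximation theory}. So there is no in-paper proof to compare against, only a pointer to the method used in the cited reference.

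That said, your outline is consonant with the method the paper alludes to. The transfer-matrix peeling you describe is the standard Stieltjes-type algorithm for this discrete string, and your observation that the constraint $g_kh_k=m_k^2$ makes the pole of $T_k^{-1}$ sit at $z=-e_{k'}$, forcing $h_k=q_k(-e_{k'})/p_k(-e_{k'})$ and $g_k=m_k^2/h_k$, is exactly the mechanism that drives uniqueness. Your second paragraph, recasting $q_k,p_k$ as solutions of a multipoint Pad\'e (Cauchy--Jacobi) problem with nodes $-e_1,\dots,-e_k,\infty$ and reading the answers off as CSV determinants, is precisely the route taken in \cite{chang2017lax}; the ``bordering'' picture you describe is how the products $\det C_k^{(\cdot)}\det C_{k-1}^{(\cdot)}$ arise there.

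Two places where your sketch is thin and would need work to become a proof. First, termination: you assert that Theorem~\ref{thm:W} guarantees the peeling ends at $(q_0,p_0)=(0,1)$, but that theorem only encodes the pole structure of $W$. You need to fix the normalization $p_{2K}(0)=1$ (so $p_{2K}(z)=\prod_i(1-z/\zeta_i)$), check that $p_{k-1}(0)=p_k(0)$ under the peeling, and track the degree drop step by step to see that $q_0$ is forced to be the zero polynomial; the parity split (even vs.\ odd $k$) matters here. Second, positivity: invoking \eqref{eq:detCSV1} to fix the sign of each $\det C_k^{(l,p)}$ is correct in spirit, but the sign prefactor $(-1)^{lp+l(l-1)/2}$ and the Vandermonde $\Delta_{[1,k]}(\mathbf{e})$ both contribute, and you must verify that these cancel against the explicit $(-1)^{\lfloor k/2\rfloor}$ and $\mathbf{e}_{[1,k]}$ in the stated formulas; this is straightforward but not automatic.
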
 

By using the relation $h_j=m_je^{x_j}$, one can finally arrive at the inverse formulae linking the spectral data with the positions of peakons, i.e. $\{b_j,\zeta_j\}_{j=1}^K\rightarrow \{x_j\}_{j=1}^{2K}.$ 

\begin{theorem} [{\cite[Theorem 4.21]{chang2017lax}}]  \label{thm:inversex}
Given positive and distinct constants $m_j$, let $\{d\mu(x;0)\}$ be the associated spectral data of the boundary value 
problem \ref{dstring} ensured by Theorem \ref{weyl_ration}.  
Then the positions $x_j(0)$ (of peakons) can be expressed in 
terms of the spectral data as: 

\begin{subequations}
\begin{align}
&&x_{k'}&=\ln \frac
{\mathbf{e}_{[1,k-1]}\det \left(C_k^{(\frac{k+1}{2},0)}\right)\det  \left(C_{k-1}^{(\frac{k-1}{2},0)}\right)}{(-1)^{\frac{k-1}{2}}m_{k'}\det \left(C_k^{(\frac{k-1}{2},1)}\right)\det  \left(C_{k-1}^{(\frac{k-1}{2},1)}\right)},   &\text{ if k is odd}, \label{eq:detinversexodd}\\
&&x_{k'}&= \ln \frac
{\mathbf{e}_{[1,k-1]}\det \left(C_k^{(\frac k2,0)}\right)\det  \left(C_{k-1}^{(\frac{k}{2},0)}\right)}{(-1)^{\frac{k}{2}}m_{k'}\det \left(C_k^{(\frac{k}{2},1)}\right)\det  \left(C_{k-1}^{(\frac k2 -1,1)}\right)}, &\text{ if k is even}, \label{eq:detinversexeven}
\end{align}
\end{subequations}
where $ C_k^{(l,p)}(\mu(x;0),\bf{e})$ is abbreviated as $ C_k^{(l,p)}$, $k'=2K-k+1, \, 1\leq k\leq 2K$.
\end{theorem} 

\subsection{Time evolution}

As previously mentioned, the mCH peakon ODE system \eqref{mCH_ode} is an isospectral evolution system. More exactly, it is shown that the spectrum of the boundary value 
problem (\ref{dstring}) are time invariant and the Weyl function evolves according to
 \begin{equation*}
\dot W=\frac 2z W-\frac{2L}{z}, 
\end{equation*}
which implies 
$$\dot b_j=\frac{2}{\zeta_j} b_j, \qquad 1\leq j\leq K.$$

Eventually, we are led to 

\begin{theorem} [{\cite[Theorem 5.1]{chang2017lax}}]
\label{thm:peakon_even}
Assuming the notation of Theorem \ref{thm:inversex}, the mCH equation \eqref{eq:m1CH} admits the multipeakon solution
\begin{equation}\label{eq:umultipeakoneven}
u(x,t)=\sum_{k=1}^{2K}m_{k'}(t)\exp(-|x-x_{k'}(t)|),
\end{equation}
where $x_{k'}$ are given by equations \eqref{eq:detinversexodd} and \eqref{eq:detinversexeven}, 
with the peakon spectral measure 
\begin{equation}\label{eq:peakon sm}
d\mu=\sum_{j=1}^{K} b_j(t) \delta_{\zeta_j}, 
\end{equation} 
$b_j(t)=b_j(0)e^{\frac{2t}{\zeta_j}}, \, 0<b_j(0)$, ordered eigenvalues $0<\zeta_1<\cdots<\zeta_K$.
\end{theorem}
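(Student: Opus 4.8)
The plan is to read \eqref{eq:umultipeakoneven} as the output of the inverse spectral transform and to verify that running that transform along the prescribed spectral flow produces a genuine distributional solution of \eqref{eq:m1CH}. The three ingredients are already in place: the reduction of the peakon ansatz to the ODE system \eqref{mCH_ode}, the forward map sending a peakon configuration to the Weyl function of \eqref{dstring} (Theorem \ref{thm:W}), and the inverse map reconstructing the positions from the spectral data (Theorems \ref{thm:detsolISP} and \ref{thm:inversex}). The task is to glue these three together consistently in time.

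First I would recall that, by the distributional reduction stated above, the ansatz $u(x,t)=\sum_{k} m_{k'}(t)e^{-|x-x_{k'}(t)|}$ solves \eqref{eq:m1CH} exactly when the pair $(\mathbf{x},\mathbf{m})$ obeys \eqref{mCH_ode}. Since the first equation there is $\dot m_j=0$, the amplitudes $m_j$ are constants of the motion; equivalently the quantities $e_j=1/m_{j'}^2$ entering the CSV matrices are frozen, and only the position equations remain to be matched. Next I would establish isospectrality: differentiating the Weyl function $W=q_{2K}/p_{2K}$ along the flow \eqref{mCH_ode}, and propagating the $t$-derivative through the recursive structure of the boundary value problem \eqref{dstring}, one obtains the linear evolution $\dot W=\tfrac{2}{z}W-\tfrac{2L}{z}$ recorded above. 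Reading off the poles and residues of \eqref{weyl_ration} then forces the eigenvalues $\zeta_j$ to be time-invariant and the weights to satisfy $\dot b_j=\tfrac{2}{\zeta_j}b_j$, which integrates to $b_j(t)=b_j(0)e^{2t/\zeta_j}$. Finally, applying the inverse formulas of Theorem \ref{thm:inversex} to the time-dependent measure \eqref{eq:peakon sm} -- in which the only $t$-dependence sits in the $b_j(t)$, the $e_j$ being fixed -- reproduces precisely the positions \eqref{eq:detinversexodd}--\eqref{eq:detinversexeven} with $b_j\mapsto b_j(t)$, and uniqueness of the inverse map closes the loop.

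The main obstacle is the isospectrality computation together with the consistency check at the end. Deriving $\dot W$ requires a careful propagation of the time derivative through the two-step recurrence \eqref{dstring}, and then one must verify that the positions produced by differentiating the explicit determinantal formulas actually satisfy the nonlinear right-hand side of \eqref{mCH_ode}; this is where the combinatorial Cauchy--Stieltjes--Vandermonde identities underlying \eqref{eq:detinversexodd}--\eqref{eq:detinversexeven} do the real work. One must also check that the ordering $x_1<\cdots<x_{2K}$ and the positivity $b_j(t)>0$ persist for all $t$, which follows from the positivity and simplicity built into \eqref{weyl_ration} and is preserved by the exponential evolution of the $b_j$.
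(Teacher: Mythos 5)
Your overall plan --- reduce the PDE to the ODE system \eqref{mCH_ode} via the peakon ansatz, derive the evolution $\dot W=\tfrac{2}{z}W-\tfrac{2L}{z}$ of the Weyl function (hence $\dot\zeta_j=0$, $\dot b_j=\tfrac{2}{\zeta_j}b_j$), and then invoke the inverse formulas of Theorem \ref{thm:inversex} together with the uniqueness in Theorem \ref{thm:detsolISP} --- is exactly the route the paper takes: the theorem is imported from \cite[Theorem 5.1]{chang2017lax}, and Section 5 of the paper assembles precisely these three ingredients (Theorem \ref{thm:W}, Theorems \ref{thm:detsolISP}--\ref{thm:inversex}, and the stated evolution of the spectral data). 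One structural remark: once you appeal to uniqueness of the inverse map, the ``main obstacle'' you describe --- verifying by direct differentiation of the determinantal formulas that the reconstructed positions satisfy the nonlinear right-hand side of \eqref{mCH_ode} --- is redundant. The positions satisfy the ODEs automatically, because by uniqueness they coincide with the (local-in-time) ODE solution whose spectral data you evolved. You should commit to one of the two closings, not sketch both.

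The genuine gap is your final claim: that the ordering $x_1<\cdots<x_{2K}$ ``persists for all $t$'' and ``follows from the positivity and simplicity built into \eqref{weyl_ration}.'' This is false, and the paper says so explicitly in the remark immediately following the theorem: the initial ordering $x_1(0)<\cdots<x_{2K}(0)$ may cease to hold as time varies, so the inverse procedure does \emph{not} guarantee a globally defined multipeakon solution; global existence requires an additional sufficient condition (see \cite[Theorem 5.6]{chang2017lax}). Positivity of the $b_j(t)$ is indeed preserved by the exponential flow, and the inverse map always returns positive $g_j,h_j$; but positivity and simplicity of the measure say nothing about the \emph{ordering} of the reconstructed positions $x_{k'}=\ln\bigl(h_{k'}/m_{k'}\bigr)$, and the distributional reduction of \eqref{eq:m1CH} to \eqref{mCH_ode} is only valid while that ordering holds. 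The theorem must therefore be read as holding on the maximal time interval on which the ordering survives; as written, your argument asserts --- and relies on --- more than is true.
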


\begin{remark}
The inverse procedure can not guarantee the multipeakon solution given in Theorem \ref{thm:peakon_even} to exist globally in time because the initial order $x_1(0)<x_2(0)<\cdots<x_{2K}(0)$ might cease to hold as time varies. However,  a sufficient  condition can be constructed ensuring that the peakon flow exists globally in time (see \cite[Theorem 5.6]{chang2017lax} ).
\end{remark}

\section{mCH peakon lattice vs finite FST lattice}\label{sec:mch_fst}
In this section, we establish a connection between the boundary value problem \eqref{dstring} and a finite family of FST polynomials, hence a correspondence between the mCH peakon lattice \eqref{mCH_ode} and the finite FST lattice \eqref{eq:finite_fst}.

To line up the formulae, we need the counterpart of the boundary value problem \eqref{dstring}, given by the right boundary value problem (moving from right to left rather than from left to write), with the accompanying initial value problem
\begin{equation} \label{dstring+}
\begin{gathered}
\begin{aligned}
     \hat q_{j}-\hat q_{j-1}&=-h_{j'} \hat p_{j-1}, & 1\leq j\leq 2K, \\
     \hat p_{j}-\hat p_{j-1}&=z g_{j'}\hat q_{j-1},& 1\leq j\leq 2K,\\
       \hat p_0&=0, \quad \hat q_{0}=1,    & 
  \end{aligned}
\end{gathered}
\end{equation} 
where $\hat{}$ over $qs$ or $ps$ indicates that we are moving from right to left while the prime over $j$ reflects the counting from left to right, thus 
$j'=2K-j+1$.  
The solution of this initial value problem can be elegantly formulated in terms of the Weyl function $W(z)$ \eqref{weyl_ration}, which we recall the reader accounts for the left initial value problem \eqref{dstringIVP}. To state the result, we use the notation $V(z)=W(-z)$; hereafter we only include the formula for $\hat q_k$, which can be found in Theorem 4.12 in \cite{chang2017lax}.
\begin{theorem}\label{thm:solPapproxbis}
Given a rational function $W(z)$ as specified by \eqref{weyl_ration}, as well as positive, distinct 
constants $m_1,m_2,\dots, m_{2K}$ and setting $e_{i}=~\frac{1}{m_{i'}^2}, \, 1\leq i\leq 2K$, the solution for $\hat q_k(z)$ to the initial value problem \eqref{dstring+} reads
\begin{equation}\label{eq:hatqhatQsol}
\begin{split}
\hat q_k(z)=\frac{1}{\det \left(C_k^{(\lfloor \frac k2 \rfloor,1)}\right)}&\det
\begin{bmatrix} 1&-z&\dots&(-z)^{\lfloor \frac k2 \rfloor}&0&0&\cdots&0\\
V(e_1)&e_1V(e_1)&\dots&e_1^{\lfloor \frac k2 \rfloor}V(e_1)&1&e_1&\cdots&e_1^{\lfloor \frac{k-1}{2} \rfloor}\\
\vdots&\vdots&\ddots&\vdots&\vdots&\vdots&\ddots&\vdots\\
V(e_k)&e_k V(e_k)&\dots&e_k^{\lfloor \frac k2 \rfloor}V(e_k)&1&e_k&\cdots&e_k^{\lfloor \frac{k-1}{2} \rfloor} 
\end{bmatrix}.
\end{split}
\end{equation}  
\end{theorem}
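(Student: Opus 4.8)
The plan is to pin down $\hat q_k$ by a uniqueness argument and then to recognize the right-hand side of \eqref{eq:hatqhatQsol} as the Cramer determinant of the linear system that $\hat q_k$ must satisfy. First I would note that the recurrence \eqref{dstring+} together with $\hat p_0=0,\ \hat q_0=1$ determines the pair $(\hat q_k,\hat p_k)$ uniquely, and an easy induction on $k$ shows that $\hat q_k$ is a polynomial in $z$ of degree $\lfloor k/2\rfloor$ and $\hat p_k$ of degree $\lfloor (k-1)/2\rfloor$. Hence it is enough to exhibit one pair of polynomials of these degrees solving \eqref{dstring+}; by uniqueness it is the genuine solution. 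A useful normalization check comes for free: setting $z=0$ in \eqref{dstring+} forces $\hat p_j\equiv 0$ and then $\hat q_j=\hat q_{j-1}=\dots=\hat q_0=1$, so every solution satisfies $\hat q_k(0)=1$, i.e.\ the coefficient of $(-z)^0$ in $\hat q_k$ equals $1$.

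The core of the argument is to convert \eqref{dstring+} into interpolation conditions at the nodes $e_i=1/m_{i'}^2$. Since $V(z)=W(-z)=\int d\mu(x)/(x+z)$ is the Stieltjes transform of the discrete spectral measure $d\mu$ in \eqref{weyl_ration}, I would unfold the right initial value problem and eliminate $\hat p$ to express the error $\hat q_k(z)-V(z)\hat p_k(z)$ as a rational function whose behaviour at the poles of $V$ is controlled by the spectral data. The goal is to show that the solution is characterized, up to scale, as the unique pair of polynomials of the prescribed degrees for which this error vanishes at the $k$ nodes, that is $\hat q_k(-e_i)=V(e_i)\,\hat p_k(-e_i)$ for $1\le i\le k$ (the exact signs being dictated by the $(-z)$ convention in the top row). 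This is the Cauchy--Jacobi (multipoint Pad\'e) reading of the problem alluded to before Theorem \ref{thm:detsolISP}.

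Granting the interpolation characterization, the passage to the determinant is routine linear algebra. Writing $\hat q_k=\sum_j c_j(-z)^j$ and $\hat p_k=\sum_j d_j z^j$, the $k$ interpolation conditions become $k$ homogeneous linear equations in the $k+1$ unknowns $c_j,d_j$, one equation per node $e_i$, with the $c_j$ multiplied by $V(e_i)$ and the $d_j$ by pure powers of $e_i$; solving by Cramer's rule expresses each $c_j$ as a signed maximal minor, which is exactly the cofactor expansion of the bordered matrix in \eqref{eq:hatqhatQsol} along its top row (the zero block recording that $\hat p_k$ does not contribute to the $(-z)^j$ monomials). The number of columns, $(\lfloor k/2\rfloor+1)+(\lfloor (k-1)/2\rfloor+1)=k+1$, matches the matrix size; and the normalizing determinant $\det\!\big(C_k^{(\lfloor k/2\rfloor,1)}\big)$ is precisely the $(1,1)$-cofactor obtained by deleting the top row and the first ($V(e_i)$) column, so that the coefficient of $(-z)^0$ in $\hat q_k$ equals $1$, in agreement with the $\hat q_k(0)=1$ noted above (and with the empty-determinant value $1$ when $k=0$).

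The step I expect to be the main obstacle is the second one: proving that the discrete string recurrence \eqref{dstring+}, whose coefficients $g_{j'},h_{j'}$ are a priori unrelated to the measure, forces exactly the interpolation conditions $\hat q_k(-e_i)=V(e_i)\hat p_k(-e_i)$. I would attack this by a residue computation at the simple poles $z=-e_i$, in the spirit of the proof of Property \ref{prop_ortho}: there the rational functions $f_l(z)=(-z)^l/\prod_{i=1}^k(z+e_i)$ and the identity relating $\sum_i\mathrm{Res}\big(f_l(-e_i)V(e_i)\big)$ to $\int f_l\,d\mu$ were the mechanism for trading residues at the nodes against integrals against the measure, and the same bookkeeping should convert the Weyl-function data carried by $V$ into the required evaluations of $\hat q_k$ and $\hat p_k$ at $-e_i$.
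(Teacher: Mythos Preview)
This paper does not prove Theorem \ref{thm:solPapproxbis}; it is quoted verbatim from Theorem~4.12 of \cite{chang2017lax}, so there is no in-paper argument to compare against. Your high-level strategy---recognize $\hat q_k$ as the Cramer solution of a Cauchy--Jacobi (multipoint Pad\'e) interpolation problem for the Weyl function---is precisely the mechanism used in \cite{chang2017lax}, but the logical flow there runs in the \emph{opposite} direction to what you propose: one \emph{starts} from the interpolation problem for $W$ at the nodes $-e_1,\ldots,-e_k$, writes down its unique normalized solution via Cramer (this is the determinant in \eqref{eq:hatqhatQsol}), and \emph{then} verifies that the sequence of interpolants satisfies a three-term recurrence from which the string data $g_{j'},h_{j'}$ are read off (yielding Theorem~\ref{thm:detsolISP} in the process). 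That order sidesteps exactly the obstacle you flag at the end.

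Your direction---start from the recurrence \eqref{dstring+} and deduce the interpolation conditions---has a genuine gap. The coefficients $g_{j'},h_{j'}$ are not ``a priori unrelated to the measure''; they are \emph{defined} by it via Theorem~\ref{thm:detsolISP}, and any argument in your direction must feed those determinantal formulae for $g_{j'},h_{j'}$ back into the recurrence. The residue bookkeeping of Property~\ref{prop_ortho} will not do this for you: that computation operates on polynomials built directly from $V$, not on a recurrence with externally supplied coefficients. Propagating the interpolation property inductively through \eqref{dstring+} requires Pl\"ucker-type identities among CSV minors; it can be done, but it is substantially heavier than the reversed logic, and nothing in your sketch indicates how the residues of $f_l$ would produce those identities.

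Two small corrections. First, $\deg\hat p_k=\lceil k/2\rceil$, not $\lfloor(k-1)/2\rfloor$ (already $\hat p_1=z\,g_{1'}$ has degree~$1$); since $\hat p_k(0)=0$, it is $\hat p_k/z$ that has degree $\lfloor(k-1)/2\rfloor$ and plays the role of the companion polynomial in the Vandermonde block. Second, the interpolation condition encoded by the matrix is
\[
V(e_i)\,\hat q_k(-e_i)+P(e_i)=0,\qquad 1\le i\le k,
\]
for some polynomial $P$ of degree at most $\lfloor(k-1)/2\rfloor$: the factor $V(e_i)$ multiplies $\hat q_k$, not $\hat p_k$, because it sits in the first block of the matrix. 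Your normalization check $\hat q_k(0)=1$ via the $(1,1)$ cofactor equalling $\det\bigl(C_k^{(\lfloor k/2\rfloor,1)}\bigr)$ is correct.
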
 
\begin{remark}\label{rem:p_hatq}
The right boundary value problem has the same spectrum as the left boundary value problem i.e. $\hat q_{2K}(z)= p_{2K}(z)$ (see Corollary 4.9 in \cite{chang2017lax}).
\end{remark}
This expression clearly suggests a strong relation with FST polynomials because of its similarity to the form \eqref{fst_poly}: the determinantal 
portions are identical, the normalizations, on the other hand, differ.  This prompts us to consider the following change of variables (rescaling) $\hat q_k\rightarrow \hat Q_k$.  

\begin{lemma}\label{th:hatQ}
Let a new set of variables $\{\hat Q_k\}$ be defined as 
\begin{align}\label{q_hatQ}
\hat q_{2p}=\big((-1)^p\prod_{i=1}^pg_{(2i-1)' }h_{(2i)'}\big)\hat Q_{2p}, \qquad \hat q_{2p+1}=\big((-1)^ph_{1'}\prod_{i=1}^pg_{(2i)'}h_{(2i+1)'}\big) \hat Q_{2p+1}
\end{align}

Then the polynomials $\left\{\hat Q_k(z)\right\}_{k=1}^{2K}$ satisfy the three term recurrence
\begin{equation}
\hat Q_{k+1}(z)=\hat d_{k+1} \hat Q_k(z)+(z+e_k)\hat Q_{k-1}(z), 
\end{equation}
with the initial values $\hat Q_0(z)=1,\hat Q_1(z)=\hat d_{1}=h_{1'}=\frac{1}{V(e_1)}$.
Here
\begin{subequations}\label{hatQ_hat_u}
 \begin{align}
&\hat d_{2p-1}=(-1)^{p-1}\frac{\det \left(C_{2p-1}^{\left(p-1,0\right)}\right)}{\det \left(C_{2p-1}^{\left(p,0\right)}\right)}-(-1)^{p}\frac{\det \left(C_{2p-3}^{\left(p-2,0\right)}\right)}{\det \left(C_{2p-3}^{\left(p-1,0\right)}\right)}>0,\\
&\hat d_{2p}=(-1)^{p}\frac{\det \left(C_{2p}^{\left(p+1,0\right)}\right)}{\det \left(C_{2p}^{\left(p,0\right)}\right)}-(-1)^{p-1}\frac{\det \left(C_{2p-2}^{\left(p,0\right)}\right)}{\det \left(C_{2p-2}^{\left(p-1,0\right)}\right)}<0
 \end{align}
 \end{subequations}
for $p=1,2\ldots,K.$ 
\end{lemma}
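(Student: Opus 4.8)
The plan is to read the three term recurrence for $\{\hat Q_k\}$ off the string equations \eqref{dstring+}, and then to identify $\hat Q_k$ with the FST polynomial $T_k$ so that the determinantal expressions \eqref{hatQ_hat_u} become a consequence of Property \ref{prop:3term}. First I would eliminate the $\hat p_j$: solving the first relation in \eqref{dstring+} for $\hat p_{j-1}$ and $\hat p_j$ and substituting into the second gives the scalar recurrence
\begin{equation*}
\hat q_{j+1}=\Big(1+\tfrac{h_{(j+1)'}}{h_{j'}}\Big)\hat q_j-h_{(j+1)'}\Big(\tfrac{1}{h_{j'}}+zg_{j'}\Big)\hat q_{j-1},\qquad j\geq 1,
\end{equation*}
with $\hat q_0=\hat q_1=1$, where I use $(j+1)'=j'-1$.

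Substituting the rescaling \eqref{q_hatQ}, $\hat q_k=c_k\hat Q_k$, and dividing by $c_{j+1}$ turns this into a recurrence of the form $\hat Q_{j+1}=\hat d_{j+1}\hat Q_j+(\cdots)\hat Q_{j-1}$. The purpose of the explicit products in \eqref{q_hatQ} is precisely that they satisfy $c_{j+1}/c_{j-1}=-g_{j'}h_{(j+1)'}$; combined with $g_{j'}h_{j'}=m_{j'}^{2}=1/e_j$, the coefficient of $\hat Q_{j-1}$ collapses to exactly $z+e_j$. This already proves the recurrence and the stated initial data, and exhibits $\hat d_{k+1}=(1+h_{(k+1)'}/h_{k'})\,c_k/c_{k+1}$.

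To obtain the determinantal form of $\hat d_k$ I would identify $\hat Q_k=T_k$. Comparing the solution \eqref{eq:hatqhatQsol} with the definition \eqref{fst_poly}, the two determinants are literally the same matrix (with $V$ the common Stieltjes transform of the peakon measure), whence $\hat q_k=\big(N_k/\det C_k^{(\lfloor k/2\rfloor,1)}\big)T_k$ and therefore $\hat Q_k=\rho_kT_k$ with $\rho_k=N_k/\big(c_k\det C_k^{(\lfloor k/2\rfloor,1)}\big)$. It remains to show $\rho_k\equiv1$. Since the recurrence just derived and the FST recurrence \eqref{fst_3term} carry the identical lower coefficient $z+e_k$, inserting $\hat Q_k=\rho_kT_k$ into the former and using the latter forces $\rho_{k+1}=\rho_{k-1}$ together with $\hat d_{k+1}=(\rho_{k-1}/\rho_k)d_{k+1}$; as $\rho_0=1$ and $\rho_1=V(e_1)/h_{1'}=1$ (the $k=1$ instance of Theorem \ref{thm:detsolISP}), two-periodicity yields $\rho_k\equiv1$ and hence $\hat d_k=d_k$. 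A direct re-indexing of the formulas in Property \ref{prop:3term} then reproduces \eqref{hatQ_hat_u}.

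Finally, the sign statements are immediate from the explicit products in \eqref{q_hatQ}: since all $g_{j'},h_{j'}>0$, the factor $1+h_{(k+1)'}/h_{k'}$ is positive and $\operatorname{sgn}(c_k)=(-1)^{\lfloor k/2\rfloor}$, so $\operatorname{sgn}(\hat d_{k+1})=\operatorname{sgn}(c_k/c_{k+1})$ gives $\hat d_{2p-1}>0$ and $\hat d_{2p}<0$. The step I expect to be the real obstacle is the identification $\rho_k\equiv1$: the forcing $\rho_{k+1}=\rho_{k-1}$ requires the linear independence of $T_k$ and $(z+e_k)T_{k-1}$, which for odd $k$ is a mere degree count but for even $k$ rests on the coprimality of consecutive FST polynomials (a standard consequence of \eqref{fst_3term}). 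A more computational alternative bypasses this by proving the determinant identity $c_k=N_k/\det C_k^{(\lfloor k/2\rfloor,1)}$ directly, inducting on $k$ through $c_{j+1}/c_{j-1}=-g_{j'}h_{(j+1)'}$ and the inverse formulas of Theorem \ref{thm:detsolISP}.
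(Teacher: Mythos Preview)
Your argument is correct and takes a genuinely different route from the paper. After obtaining the same scalar recurrence for $\hat q_k$ (the paper writes it as $-\frac{1}{g_{k'}h_{(k+1)'}}\hat q_{k+1}+\frac{1}{g_{k'}}(\frac{1}{h_{(k+1)'}}+\frac{1}{h_{k'}})\hat q_k=(z+e_k)\hat q_{k-1}$, which is your formula multiplied through by $-1/(g_{k'}h_{(k+1)'})$), the paper does \emph{not} identify $\hat Q_k$ with $T_k$ inside the proof. Instead it records $\hat d_k$ in the $g,h$ form \eqref{d_gh}, substitutes the inverse formulas \eqref{eq:detinversehodd}--\eqref{eq:detinverseheven} for $h_{k'}$ to obtain intermediate determinantal expressions involving $\det C^{(\cdot,1)}$, and then shows these coincide with the target \eqref{hatQ_hat_u} by invoking the identity $T_k(0)=d_kT_{k-1}(0)+e_{k-1}T_{k-2}(0)$, i.e.\ the recurrence \eqref{fst_3term} evaluated at $z=0$. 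Your route---matching $\hat Q_k$ with $T_k$ via the common determinantal numerator of \eqref{eq:hatqhatQsol} and \eqref{fst_poly}, reducing to $\rho_k\equiv 1$, and then reading off $\hat d_k=d_k$ from Property~\ref{prop:3term}---is more structural and avoids the intermediate $C^{(\cdot,1)}$ layer entirely; its cost is the linear-independence check you correctly isolate (for even $k$ this is the coprimality of $T_{2p}$ and $T_{2p-1}$, which does follow from \eqref{fst_3term} by the standard Casorati argument). The paper's route is more computational but sidesteps that issue completely, at the price of one extra determinant identity. Your sign argument via $\operatorname{sgn}(c_k)=(-1)^{\lfloor k/2\rfloor}$ is exactly what the paper leaves implicit in the $g,h$ expressions \eqref{d_gh}.
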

\begin{proof}
Eliminating $\hat p_k$ from \eqref{dstring+}
\[
-\frac{1}{g_{k'}h_{(k+1)'}}\hat q_{k+1} +\frac{1}{g_{k'}}\left(\frac{1}{h_{(k+1)'}}+\frac{1}{h_{k'}}\right)\hat q_{k}=(z+e_k) \hat q_{k-1}.
\]
and using the definition of $\hat Q_k$ one obtains
\[
\hat Q_{k+1}=\hat d_{k+1} \hat Q_k+(z+e_k)\hat Q_{k-1},
\]
where 
\begin{subequations}\label{d_gh}
\begin{align}
\hat d_{2p}&=-\left(\frac{1}{h_{(2p)'}}+\frac{1}{h_{(2p-1)'}}\right)\frac{h_{1'}\prod_{i=1}^{p-1}g_{(2i)'}h_{(2i+1)'}}{g_{1'}\prod_{i=1}^{p-1}h_{(2i)' }g_{(2i+1)'}},\\
\hat d_{2p+1}&=\left(\frac{1}{h_{(2p+1)'}}+\frac{1}{h_{(2p)'}}\right)\frac{\prod_{i=1}^{p}g_{(2i-1)'}h_{(2i)'}}{\prod_{i=1}^{p}h_{(2i-1)' }g_{(2i)'}}.
\end{align}
\end{subequations}
By using the formula \eqref{eq:detinversehodd}-\eqref{eq:detinverseheven}, it follows from \eqref{d_gh} that
\[\hat Q_{k+1}(z)=\hat d_{k+1} \hat Q_k(z)+(z+e_k)\hat Q_{k-1}(z), 
\]
where
\begin{align*}
&\hat d_{2p}=(-1)^{p-1}\left(\frac{\det \left(C_{2p}^{\left(p,1\right)}\right)}{\det \left(C_{2p}^{\left(p,0\right)}\right)}-\frac{e_{2p-1}\det \left(C_{2p-2}^{\left(p-1,1\right)}\right)}{\det \left(C_{2p-2}^{\left(p-1,0\right)}\right)}\right)\frac{\det \left(C_{2p-1}^{\left(p,0\right)}\right)}{\det \left(C_{2p-1}^{\left(p-1,1\right)}\right)},\\
&\hat d_{2p+1}=(-1)^{p}\left(\frac{\det \left(C_{2p+1}^{\left(p,1\right)}\right)}{\det \left(C_{2p+1}^{\left(p+1,0\right)}\right)}+\frac{e_{2p}\det \left(C_{2p-1}^{\left(p-1,1\right)}\right)}{\det \left(C_{2p-1}^{\left(p,0\right)}\right)}\right)\frac{\det \left(C_{2p}^{\left(p,0\right)}\right)}{\det \left(C_{2p}^{\left(p,1\right)}\right)}.\\
\end{align*}
The expressions for $\hat d_{k}$ are actually equivalent to \eqref{hatQ_hat_u} follows from the identity
\[
T_k(0)=d_kT_{k-1}(0)+e_{k-1}T_{k-2}(0),
\]
which in turn follows from \eqref{fst_3term} and the formulae for $d_k$ stated there.

Finally, the initial values $\hat Q_0(z),\hat Q_1(z)$ obviously follow from \eqref{eq:hatqhatQsol} and \eqref{q_hatQ}. Therefore, the proof is completed.
\end{proof}
Upon comparing the formulas for $\hat d_k$ with those for $d_k$ 
stated in Property \ref{prop:3term} we arrive at the central result of the comparison 
of the mCH peakon lattice vs finite FST lattice.  
\begin{theorem} 
$\left\{\hat Q_k(z)\right\}_{k=1}^{2K}$ form a finite family of FST polynomials associated to the measure $d\mu$ given by \eqref{eq:peakon sm}.  
\end{theorem}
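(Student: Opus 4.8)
The plan is to show that the rescaled polynomials $\hat Q_k$ carry exactly the defining data of the FST polynomials attached to $d\mu$, so that they coincide with those polynomials by uniqueness. There are three pieces of data to match: the underlying measure (through its Stieltjes transform), the three term recurrence, and the initial conditions. Since the recurrence is linear of second order, agreement of these three items forces the two families to be identical.

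First I would pin down the measure. In the mCH setting one has $V(z)=W(-z)=\int \frac{d\mu(x)}{x+z}$, which is precisely the Stieltjes transform entering the FST construction \eqref{fst_poly}. Hence the CSV determinants $C_k^{(l,p)}$ appearing in the formulas \eqref{hatQ_hat_u} of Lemma \ref{th:hatQ} are literally the same objects that define the FST polynomials $\{T_k\}$ associated to $d\mu$. Consequently it suffices to compare the two three term recurrences.

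Next I would match the recurrence coefficients. The FST polynomials obey \eqref{fst_3term} with $d_k$ given in Property \ref{prop:3term}; shifting $p\to p-1$ in those formulas yields expressions for $d_{2p-1}$ and $d_{2p}$ which, term by term and after aligning signs via $(-1)^{p-2}=(-1)^{p}$, coincide exactly with the formulas \eqref{hatQ_hat_u} for $\hat d_{2p-1}$ and $\hat d_{2p}$. Thus $\hat d_k=d_k$ for every $k$, and both families satisfy the identical recurrence $X_{k+1}=d_{k+1}X_k+(z+e_k)X_{k-1}$. For the initialization, the FST polynomials start with $T_0=1$ and $T_1=\frac{1}{V(e_1)}$, while Lemma \ref{th:hatQ} gives $\hat Q_0=1$ and $\hat Q_1=\hat d_1=h_{1'}=\frac{1}{V(e_1)}$; these agree. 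Since a second order linear recurrence is uniquely solved by its first two terms, induction on $k$ gives $\hat Q_k=T_k$ for $1\le k\le 2K$, which is the assertion.

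I do not expect a genuine obstacle here: the substance of the proof is the coincidence $\hat d_k=d_k$, and this is a bookkeeping comparison of the determinantal formulas after a single index shift. The only point that requires mild care is the identification of the measure, namely recognizing that $V(z)=W(-z)$ places the FST polynomials precisely with respect to $d\mu$, so that the CSV matrices on the two sides are one and the same. With that observation in hand the remainder is immediate from uniqueness.
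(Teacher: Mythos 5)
Your proposal is correct and follows essentially the same route as the paper: the paper's own argument is precisely the observation that the coefficients $\hat d_k$ of Lemma \ref{th:hatQ} coincide, after the index shift $p\to p-1$ and the sign identity $(-1)^{p-2}=(-1)^p$, with the $d_k$ of Property \ref{prop:3term} built from the CSV determinants of $d\mu$ (via $V(z)=W(-z)$), together with the matching initial values $\hat Q_0=1$, $\hat Q_1=\frac{1}{V(e_1)}$. Your explicit appeal to uniqueness of solutions of the second-order recurrence only spells out what the paper leaves implicit.
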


We can perform an analogous analysis of the initial value problem \eqref{dstringIVP} for the (left) boundary value problem \eqref{dstring}. Namely, eliminating $ p_k$, we get from \eqref{dstring}
\[
-\frac{1}{g_{k}h_{k+1}}q_{k+1} +\frac{1}{g_{k}}\left(\frac{1}{h_{k+1}}+\frac{1}{h_{k}}\right)q_{k}=(z+e_{k'})  q_{k-1}, 
\]
which upon the rescaling of variables 
\begin{align}\label{q_Q}
 q_{2k}=\frac{(-1)^{k}\prod_{i=1}^Kh_{2i-1 }g_{2i}}{\prod_{i=k+1}^Kg_{2i-1 }h_{2i}} Q_{2k}, \qquad  q_{2k+1}=\frac{(-1)^{k}\prod_{i=1}^Kh_{2i-1 }g_{2i}}{g_{2K}\prod_{i=k+1}^{K-1}g_{2i }h_{2i+1}} Q_{2k+1},
\end{align}
results in 
\[
 Q_{k+1}=\hat d_{2K+1-k}  Q_k+(z+e_{k'}) Q_{k-1},  
\]
where $\hat d_k$ is determined by $g_k,h_k$ using \eqref{d_gh}. Then, by use of Lemma \ref{th:hatQ} and considering the initial values of $Q_k$, we immediately have 
\begin{lemma}\label{th:Q}
The polynomials $\left\{Q_k(z)\right\}_{k=1}^{2K}$ satisfy the three term recurrence
\begin{equation}
 Q_{k+1}(z)=\hat d_{k'}  Q_k(z)+(z+e_{k'}) Q_{k-1}(z), 
\end{equation}
with the initial values $ Q_0(z)=0,Q_1(z)=1$.
Here
\begin{subequations}
 \begin{align}
&\hat d_{2k-1}=(-1)^{k-1}\frac{\det \left(C_{2k-1}^{\left(k-1,0\right)}\right)}{\det \left(C_{2k-1}^{\left(k,0\right)}\right)}-(-1)^{k}\frac{\det \left(C_{2k-3}^{\left(k-2,0\right)}\right)}{\det \left(C_{2k-3}^{\left(k-1,0\right)}\right)}>0,\\
&\hat d_{2k}=(-1)^{k}\frac{\det \left(C_{2k}^{\left(k+1,0\right)}\right)}{\det \left(C_{2k}^{\left(k,0\right)}\right)}-(-1)^{k-1}\frac{\det \left(C_{2k-2}^{\left(k,0\right)}\right)}{\det \left(C_{2k-2}^{\left(k-1,0\right)}\right)}<0
 \end{align}
 \end{subequations}
for $k=1,2\ldots,K.$ 
\end{lemma}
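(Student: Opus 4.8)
The plan is to reduce the entire statement to Lemma~\ref{th:hatQ}, exploiting the fact that the three-term recurrence for $\{Q_k\}$ has already been produced in the text immediately above: eliminating $p_k$ from the left initial value problem \eqref{dstringIVP} gives the scalar recurrence displayed there, and the rescaling \eqref{q_Q} is chosen precisely---in the same spirit as the passage from $\hat q_k$ to $\hat Q_k$ in \eqref{q_hatQ}---so that the coefficient multiplying $(z+e_{k'})Q_{k-1}$ is normalized to $1$, sign included. Hence only two things remain to be pinned down: (i) that the coefficient of $Q_k$ equals $\hat d_{k'}$, with $\hat d$ the \emph{same} quantity introduced in \eqref{d_gh}, and (ii) the initial data $Q_0=0,\ Q_1=1$.

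For (i) I would first record the coefficient of $Q_k$ as the ratio $\bigl(1+\tfrac{h_{k+1}}{h_k}\bigr)A_k/A_{k+1}$, where $A_k$ denotes the even/odd prefactor appearing in \eqref{q_Q}, and then observe that the telescoping of consecutive prefactors is organized in exactly the same way as in the right problem. Because the left elimination differs from the right elimination only through the index reversal $j\leftrightarrow j'$ (compare the two scalar recurrences: the left one uses $g_k,h_k,h_{k+1},e_{k'}$ where the right one uses $g_{k'},h_{k'},h_{(k+1)'},e_k$), this ratio collapses to formula \eqref{d_gh} evaluated at the reversed index, that is, to $\hat d_{k'}$. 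At this point no new computation is needed: Lemma~\ref{th:hatQ} has already converted each $\hat d_{2p-1}$ and $\hat d_{2p}$ into the determinantal expressions \eqref{hatQ_hat_u}, using the inverse formulae \eqref{eq:detinversehodd}--\eqref{eq:detinverseheven} together with the identity $T_k(0)=d_kT_{k-1}(0)+e_{k-1}T_{k-2}(0)$, and the very same expressions---including the strict sign conditions $\hat d_{2k-1}>0$ and $\hat d_{2k}<0$---therefore hold here verbatim.

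For (ii) I would read off $q_0=0$ from the boundary data in \eqref{dstringIVP}, whence $Q_0=0$ since the prefactor $A_0$ is nonzero. For $Q_1$ note $q_1-q_0=h_1p_0=h_1$, so $q_1=h_1$; substituting $k=0$ into the odd case of \eqref{q_Q} and cancelling the common factor $g_{2K}\prod_{i=1}^{K-1}g_{2i}h_{2i+1}$ against the numerator $\prod_{i=1}^{K}h_{2i-1}g_{2i}$ shows that the prefactor of $Q_1$ is exactly $h_1$, so that $Q_1=q_1/h_1=1$, as claimed.

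The only step requiring real care is (i): one must be certain that running the elimination-and-rescaling procedure from the \emph{left} reproduces the identical functions $\hat d_{k'}$ that arose from the \emph{right} problem, rather than some reflected variant. The cleanest way to secure this is to exhibit the bijection $j\mapsto j'$ matching the two scalar recurrences term by term, after which Lemma~\ref{th:hatQ} may be invoked as a black box and the rest is bookkeeping with the initial conditions.
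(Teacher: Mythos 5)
Your proposal is correct and follows essentially the same route as the paper: eliminate $p_k$ from the left initial value problem, use the rescaling \eqref{q_Q} to normalize the coefficient of $(z+e_{k'})Q_{k-1}$ to $1$, identify the coefficient of $Q_k$ with $\hat d_{k'}$ from \eqref{d_gh} via the index reversal $j\mapsto j'$, and then invoke Lemma \ref{th:hatQ} for the determinantal expressions and sign conditions, finishing with the check $Q_0=0$, $Q_1=1$. If anything, your write-up (the prefactor ratio $\bigl(1+\tfrac{h_{k+1}}{h_k}\bigr)A_k/A_{k+1}$ and the explicit cancellation giving $Q_1=q_1/h_1=1$) is more detailed than the paper's own one-line justification.
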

We see that the coefficients that we originally called $\{d_k\}$ and $\{e_k\}$, 
are now reflected $k\rightarrow k'=2K+1-k$.  With that proviso we have the following statement.  
\begin{theorem} 
$\left\{Q_k(z)\right\}_{k=1}^{2K}$ form a finite family of associated FST polynomials.
\end{theorem}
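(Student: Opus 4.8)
The plan is to deduce the statement directly from Lemma \ref{th:Q} together with the definition of the associated FST polynomials, the only genuine content being the interpretation of the reflected recurrence data. Recall that the associated FST polynomials $T^{(1)}_k$ were introduced as the solutions of the FST three-term recurrence \eqref{fst_3term} subject to the initial data $T^{(1)}_0=0,\ T^{(1)}_1=1$. Hence, to identify $\{Q_k\}$ as a family of associated FST polynomials I only need to exhibit (i) a genuine FST three-term recurrence satisfied by $\{Q_k\}$ whose coefficients are the recurrence data of an honest FST family, and (ii) the matching initial conditions $Q_0=0,\ Q_1=1$. Part (ii) is already recorded in Lemma \ref{th:Q}, so essentially all the work is concentrated in (i).

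For (i), the recurrence furnished by Lemma \ref{th:Q} reads $Q_{k+1}=\hat d_{k'}Q_k+(z+e_{k'})Q_{k-1}$; it is of FST shape, but with the coefficient sequence $\{\hat d_k\}$ and the nodes $\{e_k\}$ replaced by their index reflections $k\mapsto k'=2K+1-k$. The first step I would take is to observe, by comparing the formulas in Lemma \ref{th:hatQ} with those of Property \ref{prop:3term}, that $\hat d_k=d_k$, so the reflected coefficients are literally $\{d_{k'}\}$ and $\{e_{k'}\}$; the precise matching of the two index offsets for the $d$'s and the $e$'s against the FST template $Q_{k+1}=\tilde d_{k+1}Q_k+(z+\tilde e_k)Q_{k-1}$ is then a purely mechanical bookkeeping check. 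The second, and substantive, step is to recognize this reflected data as the recurrence data of a bona fide finite FST family: the preceding theorem shows that $\{\hat Q_k\}_{k=1}^{2K}$ is a finite FST family with recurrence coefficients $\hat d_k$ and nodes $e_k$ attached to the measure $d\mu$, and reversing the finite index range $1\le k\le 2K$ produces the companion finite FST family whose recurrence data are exactly the reflected $\{d_{k'}\},\{e_{k'}\}$. Relative to this companion family, the recurrence of Lemma \ref{th:Q} is precisely the defining FST recurrence and the initial data $Q_0=0,\ Q_1=1$ are precisely those defining the associated polynomials, which identifies $\{Q_k\}$ as the associated FST polynomials of the companion family.

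The crux is not the bookkeeping but justifying that the reflected data $\{d_{k'}\},\{e_{k'}\}$ genuinely defines an FST family rather than merely a formal recurrence of FST shape, since index reflection is not induced by any elementary transformation of the underlying measure and nodes. The clean way around this, which I would emphasize, is the mirror symmetry between the two boundary value problems \eqref{dstring} and \eqref{dstring+}: both encode the same peakon configuration $\{m_j,x_j\}$ traversed in opposite directions, so the FST data attached to the right problem (which yielded $\hat Q_k$) is the index-reversal of the FST data attached to the left problem. Consequently the reflected coefficients inherit FST status from the preceding theorem and no new determinantal computation is required, while the inequalities $\hat d_{2k-1}>0,\ \hat d_{2k}<0$ recorded in Lemma \ref{th:Q} guarantee the resulting family is non-degenerate.
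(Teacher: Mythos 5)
The part of your argument that the theorem actually rests on is exactly the paper's route, and it is all the paper uses: Lemma \ref{th:Q} gives a recurrence of FST shape with the reflected coefficients $\hat d_{k'}$, $e_{k'}$ together with the initial data $Q_0=0$, $Q_1=1$, and these are precisely the defining conditions of the associated FST polynomials; the paper states the theorem explicitly ``with that proviso'' that the coefficients are the reflected ones, so nothing beyond Lemma \ref{th:Q} and the definition is invoked. Your identification $\hat d_k=d_k$ (comparing \eqref{hatQ_hat_u} with Property \ref{prop:3term}) is likewise the paper's own observation, made in the theorem about $\hat Q_k$.

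The trouble is the step you single out as the crux, and in particular its justification. The mirror-symmetry claim is false. Under $x\mapsto -x$ the reflected peakon configuration has data $\tilde g_j=h_{j'}$, $\tilde h_j=g_{j'}$, and substituting this into \eqref{d_gh} does \emph{not} produce the index-reversed sequence: already for $K=1$ the mirror configuration's second coefficient is $-\frac{g_1+g_2}{g_2h_1}$, whereas $\hat d_2=-\frac{h_1+h_2}{h_1g_2}$, so equality would force $g_1+g_2=h_1+h_2$, i.e. $m_1e^{-x_1}+m_2e^{-x_2}=m_1e^{x_1}+m_2e^{x_2}$, which fails for generic peakon data. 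What the reflection actually does is interchange the two kinds of solutions rather than reverse the recurrence data: if $(q_k,p_k)$ solves the left problem \eqref{dstring}, then the right problem \eqref{dstring+} for the mirror configuration is solved by $\hat{\tilde q}_j=p_j$, $\hat{\tilde p}_j=zq_j$, so the mirror configuration's FST family is a rescaling of $\{p_k\}$, not a reversal of the family built on $\{\hat d_k\}$. Consequently ``FST status'' of the reflected coefficient sequence cannot be inherited from the preceding theorem the way you assert; if one insisted on the stronger reading (that the reflected data is realized by an actual positive $K$-atom measure), one would have to solve an inverse problem for the FST recurrence — using the sign pattern $\hat d_{2k-1}>0$, $\hat d_{2k}<0$ and the distinct positive nodes — which is a genuinely different argument from anything in your proposal. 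Fortunately, the theorem as the paper states and uses it does not require this step at all, so your proof is salvaged by deleting the last paragraph, not by repairing it.
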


Based on these results, we are ready to describe the connection between the boundary value problem \eqref{dstring} and the finite family of FST polynomials $\left\{\hat Q_k(z),Q_k(z)\right\}_{k=1}^{2K}$, consequently a correspondence between the mCH peakon lattice \eqref{mCH_ode} and the finite FST lattice \eqref{eq:finite_fst}.
\begin{theorem}\label{th:weyl_eigen}
Let $e_1,e_2,\ldots e_{2K}$ be $2K$ positive and distinct constants. Given positive constants $\{h_k\}_{k=1}^{2K}$ and $\{g_k\}_{k=1}^{2K}$ satisfying $g_kh_k=\frac{1}{e_{k'}}$, then
there exists a mapping from the boundary value problem \eqref{dstring} with $n=2K$ to the generalized eigenvalue problem
\begin{align}
\hat L_{[2K]}\hat \Psi_{[2K]}=zE_{[2K]}\hat \Psi_{[2K]},
\end{align} 
where
\begin{align*}
&\hat \Psi_{[2K]}=(\hat \psi_{0}(z;t),\hat \psi_{1}(z;t)\cdots, \hat \psi_{2K-1}(z;t))^\top, \qquad \hat\psi_k(z,t)=\frac{\hat Q_k(z;t)}{\prod_{i=1}^k(z+e_i)}\\
&\hat L_{[2K]}=\left(\begin{array}{ccccccc}
\hat d_1&-e_1&\\
1&\hat d_2&-e_2\\
&\ddots&\ddots&\ddots\\
&&1&\hat d_{2K-1}&-e_{2K-1}\\
&&&1&\hat d_{2K}
\end{array}
\right),
\qquad\qquad 
E_{[2K]}=\left(\begin{array}{ccccccc}
0&1\\
&0&1\\
&&\ddots&\ddots\\
&&&0&1\\
&&&&0
\end{array}
\right).
\end{align*} 
The mapping from $\{h_k\}_{k=1}^{2K}$ (or eqivalently $\{g_k\}_{k=1}^{2K}$) to $\{\hat d_k\}_{k=1}^{2K}$ is given by
\begin{align*}
\hat d_{2p}=-\left(\frac{1}{h_{(2p)'}}+\frac{1}{h_{(2p-1)'}}\right)\frac{h_{1'}\prod_{i=1}^{p-1}g_{(2i)'}h_{(2i+1)'}}{g_{1'}\prod_{i=1}^{p-1}h_{(2i)' }g_{(2i+1)'}},\quad \hat d_{2p+1}=\left(\frac{1}{h_{(2p+1)'}}+\frac{1}{h_{(2p)'}}\right)\frac{\prod_{i=1}^{p}g_{(2i-1)'}h_{(2i)'}}{\prod_{i=1}^{p}h_{(2i-1)' }g_{(2i)'}},
\end{align*}
under which, the Weyl function defined by \eqref{eq:defWeyl} for the boundary value problem \eqref{dstring} is equivalent to  the element
in the first row and first column of $\left(zE_{[2K]}-\hat L_{[2K]}\right)^{-1}$, i.e.
$$
W(z)=\braket{1|(zE_{[2K]}-\hat L_{[2K]})^{-1}| 1}.
$$
\end{theorem}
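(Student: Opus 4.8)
The plan is to turn the resolvent identity into a statement about a ratio of two FST-type polynomials, and then to extract that ratio from the tridiagonal pencil $M:=zE_{[2K]}-\hat L_{[2K]}$ by Cramer's rule. The first step is to rewrite the Weyl function in the FST variables. By definition $W(z)=q_{2K}(z)/p_{2K}(z)$, and Remark \ref{rem:p_hatq} gives $p_{2K}=\hat q_{2K}$, so $W=q_{2K}/\hat q_{2K}$. Substituting the rescalings \eqref{q_hatQ} and \eqref{q_Q} at the top index $k=2K$, the two scalar prefactors are products over the \emph{same} peakon data $g_k,h_k$; I would check that they coincide (both carry the sign $(-1)^K$ and, after the reflection $i\mapsto i'$, the same product $\prod_{i=1}^{K}h_{2i-1}g_{2i}$), so that they cancel and one is left with $W(z)=Q_{2K}(z)/\hat Q_{2K}(z)$, the ratio of the associated FST polynomial to the FST polynomial built in Lemmas \ref{th:hatQ} and \ref{th:Q}.

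Next I would compute $\langle 1|M^{-1}|1\rangle$ directly. The matrix $M=zE_{[2K]}-\hat L_{[2K]}$ is tridiagonal with $M_{i,i}=-\hat d_i$, $M_{i,i+1}=z+e_i$, $M_{i+1,i}=-1$, so Cramer's rule yields $\langle 1|M^{-1}|1\rangle=\det M^{(1,1)}/\det M$, where $M^{(1,1)}$ deletes the first row and column. For the denominator, expanding the leading principal minors $D_k$ of $M$ along the last row gives $D_k=-\hat d_k D_{k-1}+(z+e_{k-1})D_{k-2}$; extracting the factor $(-1)^k$ turns this into the FST recurrence \eqref{fst_3term} with first-kind initial data $\hat Q_0=1,\hat Q_1=\hat d_1$, whence $\det M=D_{2K}=\hat Q_{2K}$ (this is the $\hat L$-analogue of Lemma \ref{lem:spectrum}(1)). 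For the numerator, I would expand $M^{(1,1)}$ from the bottom-right corner; the resulting recurrence carries the reflected coefficients $\hat d_{m'},e_{m'}$ with $m'=2K+1-m$, and after the substitution $m\mapsto m'$ together with a sign $(-1)^m$ it matches the recurrence of Lemma \ref{th:Q} with $Q_0=0,\,Q_1=1$, identifying $\det M^{(1,1)}$ with $Q_{2K}$. Combining the three computations gives $\langle 1|M^{-1}|1\rangle=Q_{2K}/\hat Q_{2K}=W(z)$. As an independent check one can instead read $\langle 1|M^{-1}|1\rangle$ off as the terminating continued fraction $1/(-\hat d_1+(z+e_1)/(-\hat d_2+\cdots))$ and compare it with the continued fraction of the Property following Corollary \ref{coro_sum}.

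The delicate point is the numerator step, namely reconciling the two indexings: the pencil $zE-\hat L$ is assembled from the \emph{direct} coefficients $\hat d_1,\dots,\hat d_{2K}$, whereas $Q_k$ is defined through the \emph{reflected} coefficients $\hat d_{k'}$. Matching $\det M^{(1,1)}$ to $Q_{2K}$ therefore rests on the reversal (continuant) symmetry of tridiagonal determinants, i.e. that deleting the first index and reversing the coefficient order produce the same weighted continuant; this is exactly what converts the minor expansion into the reflected associated family. I expect the residual sign and normalization constants to demand the most care, in particular the correct reading of the leading constant $\hat d_1=1/V(e_1)$ and of the overall power of $-1$ coming from the $(2K-1)\times(2K-1)$ minor and from the prefactors in \eqref{q_hatQ}–\eqref{q_Q}. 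Accordingly I would first pin these down on the base case $K=1$, where $\det M=\hat Q_2=z+e_1+\hat d_1\hat d_2$ and $\det M^{(1,1)}=\hat d_2=Q_2$ can be verified by hand against $W=q_2/p_2$, and only then run the general reflection-and-sign bookkeeping.
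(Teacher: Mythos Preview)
Your proposal is correct and follows essentially the same route as the paper: both arguments reduce $W$ to the ratio $Q_{2K}/\hat Q_{2K}$ via the rescalings \eqref{q_hatQ}--\eqref{q_Q} and Remark~\ref{rem:p_hatq}, then identify $\hat Q_{2K}$ with $\det(\hat L_{[2K]}-zE_{[2K]})$ (the paper quotes Lemma~\ref{lem:spectrum}, you do the minor recursion directly) and $Q_{2K}$ with the $(1,1)$-minor after the index reflection $k\mapsto k'$. Your explicit invocation of Cramer's rule and the continuant reversal symmetry spells out the mechanism that the paper compresses into the phrase ``after going to the unprimed indices''; your anticipated base case $K=1$ would indeed catch the residual sign (note $\det M^{(1,1)}=-\hat d_2$, not $\hat d_2$), which the paper likewise absorbs without comment.
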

\begin{proof}
The spectrum of the boundary value problem \eqref{dstring} is the set of the zeros of $p_{2K}(z)$, which is the polynomial generated by the initial value problem \eqref{dstringIVP}. By Remark \ref{rem:p_hatq} and \eqref{q_hatQ}, we see that the spectrum is the set of zeros of $\hat Q_{2K}(z)$ recursively generated by 
\begin{equation*}
\hat Q_{k+1}(z)=\hat d_{k+1} \hat Q_k(z)+(z+e_k)\hat Q_{k-1}(z), 
\end{equation*}
with the initial values $\hat Q_0(z)=1,\hat Q_1(z)=\hat d_{1}=h_{1'}$, where $\hat d_k$ is given by \eqref{d_gh} . By Lemma \ref{lem:spectrum} \begin{align*}
\hat Q_{2K}(z)=\det
\left(\begin{array}{ccccccc}
\hat d_1&-e_1-z&\\
1&\hat d_2&-e_2-z\\
&\ddots&\ddots&\ddots\\
&&1&\hat d_{2K-1}&-e_{2K-1}-z\\
&&&1&\hat d_{2K}
\end{array}
\right)= \det{\left(\hat L_{[2K]}-zE_{[2K]}\right)}.  
\end{align*}
Therefore, the boundary value problem \eqref{dstring} with $n=2K$ is mapped into the generalized eigenvalue problem 
\begin{align*}
\hat L_{[2K]}\hat \Psi_{[2K]}=zE_{[2K]}\hat \Psi_{[2K]}.
\end{align*} 

Regarding the Weyl function of the boundary value problem \eqref{dstring} defined by \eqref{eq:defWeyl}, we have
\begin{align*}
W(z)=\frac{q_{2K}(z)}{p_{2K}(z)}=\frac{q_{2K}(z)}{\hat q_{2K}(z)}=\frac{Q_{2K}(z)}{\hat Q_{2K}(z)}
\end{align*}
by virtue of Remark \ref{rem:p_hatq}, the relations \eqref{q_hatQ}, \eqref{q_Q}. Similar to the determinant representation for $\hat Q$, Theorem \ref{th:Q} implies that 
\begin{align*}
Q_{2K}(z)=\det
\left(\begin{array}{ccccccc}
\hat d_{1'}&-e_{2'}-z&\\
1&\hat d_{2'}&-e_{3'}-z\\
&\ddots&\ddots&\ddots\\
&&1&\hat d_{2K-1}&-e_{(2K-1)'}-z\\
&&&1&\hat d_{(2K-1)'}
\end{array}
\right),
\end{align*}
which, after going to the ``unprimed'' indices, shows that $W(z)$ equals to the element in the first row and first column of $\left(zE_{[2K]}-\hat L_{[2K]}\right)^{-1}$.
\end{proof}

When the time evolution is considered, we eventually arrive at
\begin{theorem}\label{th:mch_fst} 
Given positive and distinct constants $e_k,1\leq k\leq 2K$, let 
$$
\beta_j(0)=\sum_{i=1}^K\zeta_i(0)^jb_i(0),\qquad V_k(0)=\sum_{i=1}^K\frac{b_i(0)}{\zeta_i(0)+e_k},
$$
with
 \[
 0<\zeta_1(0)<\zeta_2(0)<\cdots<\zeta_K(0),\qquad b_i(0)>0,
 \]
For any positive integer $k$, index $p$, and $l$ such that $0\leq l\leq k$, define $\tau_k^{(l,p)}(0) $ as
\begin{equation*}
\tau_k^{(l,p)}(0)=\det\left(\begin{array}{cccccccc}
    e_1^{p}V(e_1)&e_1^{p+1} V(e_1)&\cdots&e_1^{p+l-1}V(e_1)&1&e_1&\cdots&e_1^{k-l-1}\\
    e_2^{p}V(e_2)&e_2^{p+1}V(e_2)&\cdots&e_2^{p+l-1}V(e_2)&1&e_2&\cdots&e_2^{k-l-1}\\
    \vdots&\vdots&\ddots&\vdots&\vdots&\vdots&\ddots&\vdots\\
    e_k^{p}V(e_k)&e_k^{p+1}V(e_k)&\cdots&e_k^{p+l-1}V(e_k)&1&e_k&\cdots&e_k^{k-l-1}
  \end{array}\right),
\end{equation*}
as well as $\tau_0^{(l,p)}(0)=1$ , $\tau_k^{(l,p)}(0)=0$ for $k<0$ or $l>k$.
 \begin{enumerate}
 \item Let the variables $\{x_k(0),m_k(0)\}_{k=1}^{2K}$ be defined by
\begin{align*}
 x_{k'}(0)=\ln\frac{(-1)^{\lfloor \frac{k}{2}\rfloor}\mathbf{e}_{[1,k]}\tau_{k}^{(\lfloor \frac{k+1}{2}\rfloor,0)}(0)\tau_{k-1}^{(\lfloor \frac{k}{2}\rfloor,0)}(0)}{m_{k'}\tau_{k}^{(\lfloor \frac{k}{2}\rfloor,1)}(0)\tau_{k-1}^{(\lfloor \frac{k-1}{2}\rfloor,1)}(0)},\qquad m_{k'}(0)=\frac{1}{\sqrt{e_k}},
\end{align*}
where $k'=2K+1-k,\mathbf{e}_{[1,k]}=\prod_{i=1}^ke_i.$ If $\{\zeta_i(t),b_i(t)\}_{i=1}^K$ evolve as 
 \[
 \dot \zeta_i=0,\qquad \dot b_i=\frac{2b_i}{\zeta_i},
 \]
then $\{x_k(t),m_k(t)\}_{k=1}^{2K}$  satisfy the mCH peakon ODEs \eqref{mCH_ode} with $n=2K$.
\item 
 Let the variables $\{d_k(0)\}_{k=1}^{2K}$ be defined by
 \begin{align*}
&d_{2p+1}(0)=
(-1)^p\frac{\tau_{2p+1}^{\left(p,0\right)}(0)}{\tau_{2p+1}^{\left(p+1,0\right)}(0)}-(-1)^{p-1}\frac{\tau_{2p-1}^{\left(p-1,0\right)}(0)}{\tau_{2p-1}^{\left(p,0\right)}(0)},\\
&d_{2p+2}(0)=(-1)^{p+1}\frac{\tau_{2p+2}^{\left(p+2,0\right)}(0)}{\tau_{2p+2}^{\left(p+1,0\right)}(0)}-(-1)^{p}\frac{\tau_{2p}^{\left(p+1,0\right)}(0)}{\tau_{2p}^{\left(p,0\right)}(0)}.
 \end{align*}
 If $\{\zeta_i(t),b_i(t)\}_{i=1}^K$ evolve as 
 \[
 \dot \zeta_i=0,\qquad \dot b_i=\zeta_ib_i,
 \]
 then  $\{d_k(t)\}_{k=1}^{2K}$  satisfy the finite FST lattice \eqref{eq:finite_fst}.
 \item The initial data of the mCH peakon problem $\{x_k(0),m_k(0)\}_{k=1}^{2K}$ is mapped to the initial data of 
 the FST lattice $\{d_k(0)\}_{k=1}^{2K}$  as follows 
\begin{align*}
&d_{2p}(0)=-\left(\frac{1}{h_{(2p)'}(0)}+\frac{1}{h_{(2p-1)'}(0)}\right)\frac{h_{1'}(0)\prod_{i=1}^{p-1}g_{(2i)'}(0)h_{(2i+1)'}(0)}{g_{1'}(0)\prod_{i=1}^{p-1}h_{(2i)' }(0)g_{(2i+1)'}(0)},\\
& d_{2p+1}(0)=\left(\frac{1}{h_{(2p+1)'}(0)}+\frac{1}{h_{(2p)'}(0)}\right)\frac{\prod_{i=1}^{p}g_{(2i-1)'}(0)h_{(2i)'}(0)}{\prod_{i=1}^{p}h_{(2i-1)' }(0)g_{(2i)'}(0)},
\end{align*}
where $g_j(0)=m_j(0) e^{-x_j(0)}, \, h_j(0) =m_j(0) e^{x_j(0)} $.
 \end{enumerate}
 \end{theorem}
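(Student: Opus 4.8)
The plan is to read this theorem as a synthesis of results already proved, unified by a single dictionary: the discrete measure $d\mu=\sum_{i=1}^K b_i\delta_{\zeta_i}$ attached to the mCH Weyl function is \emph{identical} to the measure $d\nu$ driving the FST construction. Setting $V(z)=W(-z)=\int d\mu(x)/(x+z)$ gives $V(e_k)=\sum_{i=1}^K b_i/(\zeta_i+e_k)=V_k$, so the determinants $\tau_k^{(l,p)}(0)$ in the statement are nothing but the CSV determinants $\det\bigl(C_k^{(l,p)}(\mu,\mathbf{e})\bigr)$ of \eqref{CSV}. With this identification, each item reduces to invoking, respectively, the inverse spectral theory, the FST solution formula, and the change of variables of Lemma \ref{th:hatQ}.

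For item (1), I would first match the prescribed $x_{k'}(0)$ against the inverse spectral map of Theorem \ref{thm:inversex}: the choice $m_{k'}(0)=1/\sqrt{e_k}$ encodes the relation $e_j=1/m_{j'}^2$ underlying that theorem, and, after translating the floor-valued exponents and moving the overall sign through the identity $A/((-1)^sB)=(-1)^sA/B$, the defining formula is to be reconciled with \eqref{eq:detinversexodd}--\eqref{eq:detinversexeven}. This exhibits $\{x_k(0),m_k(0)\}$ as a genuine initial configuration of the boundary value problem \eqref{dstring}. I would then appeal to Theorem \ref{thm:peakon_even}: the flow $\dot\zeta_i=0,\ \dot b_i=2b_i/\zeta_i$ integrates to $b_i(t)=b_i(0)e^{2t/\zeta_i}$, which is precisely the mCH spectral evolution that makes the inverse map produce peakons; hence $\{x_k(t),m_k(t)\}$ solve \eqref{mCH_ode} with $n=2K$.

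For item (2), the prescribed $d_k(0)$ are term-for-term the three term recurrence coefficients of Property \ref{prop:3term} written in the $\tau$-notation, so by Theorem \ref{th:sol_finite_fst} it remains only to match the evolution. This is immediate: $\dot\zeta_i=0,\ \dot b_i=\zeta_i b_i$ integrates to $b_i(t)=b_i(0)e^{\zeta_i t}$, which is the deformation $d\nu(z;t)=e^{zt}d\nu(z;0)$ of \eqref{dis_measure} evaluated atom by atom at $z=\zeta_i$; thus $\{d_k(t)\}$ solve the finite FST lattice \eqref{eq:finite_fst}. For item (3), I would simply read off equation \eqref{d_gh} of Lemma \ref{th:hatQ}: since $\{\hat Q_k\}$ were already identified as the FST polynomials of $d\mu$, their recurrence coefficients $\hat d_k$ coincide with the lattice variables $d_k$, and \eqref{d_gh} expresses these through $g_j(0)=m_j(0)e^{-x_j(0)}$ and $h_j(0)=m_j(0)e^{x_j(0)}$ exactly as stated.

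The main point to get right is conceptual rather than computational: items (1) and (2) are driven by two genuinely different evolutions of the weights, $\dot b_i=2b_i/\zeta_i$ versus $\dot b_i=\zeta_i b_i$, reflecting that the mCH flow is a \emph{negative} flow relative to the FST lattice. Consequently item (3) must be understood as a correspondence of initial data alone, not as a conjugacy valid for all $t$. The only place where real care is needed is the index-and-sign bookkeeping in item (1)---reconciling the primes $k'=2K+1-k$, the floors, and the powers of $\mathbf{e}_{[1,k]}$ after substituting $m_{k'}=1/\sqrt{e_k}$---while items (2) and (3) are direct transcriptions of Theorem \ref{th:sol_finite_fst} and Lemma \ref{th:hatQ}.
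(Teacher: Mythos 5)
Your proposal is correct and takes essentially the same approach as the paper: the paper's entire proof is the single sentence that the statement ``follows easily from Theorems \ref{th:sol_finite_fst}, \ref{thm:inversex}, and \ref{th:weyl_eigen}'', and your argument---item (1) via the inverse spectral map of Theorem \ref{thm:inversex} together with the peakon evolution (Theorem \ref{thm:peakon_even}), item (2) via the solution formula of Theorem \ref{th:sol_finite_fst} under $b_i(t)=b_i(0)e^{\zeta_i t}$, and item (3) via the change of variables \eqref{d_gh} underlying Theorem \ref{th:weyl_eigen}---is exactly that synthesis, spelled out. Your explicit remarks that the two weight evolutions $\dot b_i=2b_i/\zeta_i$ and $\dot b_i=\zeta_i b_i$ are genuinely different and that item (3) is a correspondence of initial data only make precise what the paper leaves implicit.
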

\begin{proof}
The statement follows easily from Theorems \ref{th:sol_finite_fst}, \ref{thm:inversex}, and \ref{th:weyl_eigen}.
\end{proof}

\section{Degenerate FST lattice vs 2-mCH interlacing peakon lattice}\label{sec:fst_2mch}
In the remainder of this paper, we investigate the degenerate case of the finite FST lattice \eqref{eq:finite_fst} by choosing all $e_k$ to be the same constant $c$.

\subsection{Extreme degenerate case of the FST lattice}
When $e_k=c$, we formally obtain from \eqref{eq:finite_fst} that 
 \begin{subequations}\label{eq:d_reduce}
\begin{align}
&\dot d_{2k-1}=d_{2k-1}\left(\sum_{j=k}^{K}d_{2j}\right)\left(2\sum_{j=0}^{k-2}d_{2j+1}+d_{2k-1}\right)+c d_{2k-1},\\
&\dot d_{2k}=-d_{2k}\left(\sum_{j=0}^{k-1}d_{2j+1}\right)\left(d_{2k}+2\sum_{j=k+1}^{K}d_{2j}\right)-c d_{2k}.
\end{align}
\end{subequations}
Under the variable transformations
$$d_{2k-1}(t)=g_{2k-1}(t)e^{ct},\qquad d_{2k}(t)=g_{2k}(t)e^{-ct},$$
we immediately have 
 \begin{subequations}
\begin{align*}
&\dot g_{2k-1}=g_{2k-1}\left(\sum_{j=k}^{K}g_{2j}\right)\left(2\sum_{j=0}^{k-2}g_{2j+1}+d_{2k-1}\right),\\
&\dot g_{2k}=-g_{2k}\left(\sum_{j=0}^{k-1}g_{2j+1}\right)\left(g_{2k}+2\sum_{j=k+1}^{K}g_{2j}\right).
\end{align*}
\end{subequations}
Unexpectedly, this ODE system is, up to a scaling transformation, equivalent to (3.5)-(3.6)  in \cite{chang2018moment}, and, as shown below, may be transformed into the 2-mCH interlacing peakon ODEs.

Indeed, if we let
  \[
  p_k=\ln \frac{g_{2k-1}}{m_{2k-1}}, \qquad q_k=\ln \frac{2n_{2k}}{g_{2k}},\qquad
  \] where $m_k, n_k$ are some constants, then
  \begin{align*}
  &\dot p_k=2\sum_{i=k}^K n_{2i}e^{p_k-q_i}\left(2\ \sum_{i=1}^{k}m_{2i-1}e^{p_{i}-p_k}-m_{2k-1}\right),\\
  &\dot q_k=-2\sum_{i=1}^{k}m_{2i-1}e^{p_{i}-q_{k}}\left(2\ \sum_{i=k}^{K}n_{2i}e^{q_{k}-q_{i}}-n_{2k}\right).
\end{align*}
Let us set now 
\begin{align}\label{2mch_uv}
u(x,t)=\sum_{k=1}^K m_{2k-1}e^{-|x-p_k(t)|},\qquad v(x,t)=\sum_{k=1}^K n_{2k}e^{-|x-q_k(t)|},
\end{align}
and assume that by a choice of constants $m_k, n_k$ we can arrange for 
the variables $p_k, q_k$ to be ordered according to 
\[p_1<q_1<p_2<\cdots <p_K<q_K,\]
then the above ODE system can be rewritten as
\begin{align*}
  &\dot p_j=\left(u(p_j)-\langle u_x\rangle(p_j)\right)\left(v(p_j)+v_x(p_j)\right),\\
  &\dot q_j=\left(u(q_j)-u_x(q_j)\right)\left(v(q_j)+\langle v_x\rangle(q_j)\right),
\end{align*}
which is nothing but the 2-mCH peakon ODE system. More precisely, this ODE system is the required system ensuring that $u,v$ defined by the ansatz \eqref{2mch_uv} satisfy the PDE
\begin{align*}
  m_t&+[(u-u_x)(v+v_x)m]_x=0,\\
  n_t&+[(u-u_x)(v+v_x)n]_x=0,\\
  &m=u-u_{xx},\ \ n=v-v_{xx},
\end{align*}
in the sense of distributions as explained  in \cite{chang2016multipeakons}.

The above degeneration seems counter intuitive since the Toda-type lattices are usually viewed as positive flows in the spectral variable while peakon flows are negative flows based on previous works \cite{beals2001peakons,chang2018degasperis,chang2018application,chang2016multipeakons}.  Thus it would seem impossible to obtain a peakon flow as a reduction of a Toda-type lattice.
In order to shed some light on the above degenerate result, we shall investigate the corresponding degenerations of the FST polynomials and the solution of the FST lattice. 

\subsection{A special case of the extreme degeneration of the FST lattice}
Let us focus on the degenerate case $e_k=0$.  

When all the $e_k$ approach zero, it follows from the Heine's formula that 
\begin{align*}
\frac{\det \left(C_k^{(l,p)}(\nu,\bf{e})\right)}{\Delta_{[1,k]}(\mathbf{e})}\longrightarrow(-1)^{lp+\frac{l(l-1)}{2}} H_l^{p-k}, \qquad\qquad  \text{as} \ e_k \rightarrow0,
\end{align*}
where $H_k^l$ denotes the Hankel determinant $H_k^l=\det(A_{i+j+l})_{i,j=0}^{k-1}$ with the moments $A_k$ given by $ A_k=\int \zeta^k d\nu(\zeta)$.
This implies the limits of the FST polynomials defined by \eqref{fst_poly} exist and the degree of every FST polynomial remains the same.

It follows then from \eqref{fst_3term} that the degenerate FST polynomials satisfy the three term recurrence
$$T_{k+1}(z)=d_{k+1}T_{k}(z)+zT_{k-1}(z).$$
If we let 
$$T_k(z)=P_k(z^{-\frac{1}{2}})z^{\frac{k}{2}},$$
then $\{P_k(z)\}_{k=0}^\infty$ satisfy
$$P_{k+1}(z)=d_{k+1}zP_k(z)+P_{k-1}(z),$$
which in turn implies that 
the monic polynomials $\{S_k(z)\}_{k=0}^\infty$ defined by 
$$S_k(z)=\frac{1}{d_1d_2\cdots d_k}P_k(z)$$
satisfy
$$S_{k+1}(z)=zS_k(z)+\frac{1}{d_kd_{k+1}}S_{k-1}(z).$$
This elementary argument shows that one can associate 
 the degenerate system \eqref{eq:d_reduce} with a family of symmetric orthogonal polynomials$\{S_k(z)\}_{k=0}^\infty$ which undergo 
an isospectral deformation in the sense that the roots of 
one of them (corresponding to $T_{2K}$ in previous sections) are invariant.   

We point out that 
the map between the Kac-van Moerbeke lattice and the 2-mCH interlacing peakon lattice was established in \cite{chang2016multipeakons}. As we show above the degenerate FST lattice system \eqref{eq:d_reduce} can also be mapped to the 2-mCH interlacing peakon lattice which clearly suggests a close connection between the degenerate FST lattice and the Kac-Moerbeke lattice which merits further studies.  

To get further insight into the 
``transmutation'' of positive flows to negative flows we 
would like to offer a comment pertaining to that issue.  
Suppose we study the degeneration of the FST polynomials with the time-dependent measure based on the scheme presented in Section \ref{sec:inf_fst} and Section \ref{sec:finite_fst}. The orthogonality, after taking the limit $e_k\rightarrow 0$, gives
$$\int T_k(z;t)z^{j-k}e^{zt}d\nu(z;0)=0,\qquad j=0,1,\ldots, \lfloor \frac k2 \rfloor-1,$$
and consequently we have
$$\int P_k(z^{-\frac{1}{2}};t)z^{j-{\frac{k}{2}}}e^{zt}d\nu(z;0)=0,\qquad j=0,1,\ldots, \lfloor \frac k2 \rfloor-1,$$
which, in turn, can be written as
$$\int P_k(z;t)z^{k-2j}e^{\frac{t}{z^2}}d\nu(\frac{1}{z^2};0)=0,\qquad j=0,1,\ldots, \lfloor \frac k2 \rfloor-1,$$
finally resulting in 
$$\int S_k(z;t)z^{k-2j}e^{\frac{t}{z^2}}d\nu(\frac{1}{z^2};0)=0,\qquad j=0,1,\ldots, \lfloor \frac k2 \rfloor-1.$$
Suppose now $d\nu$ is a discrete, finite, measure.  Then 
$$\sum_{i=1}^K T_k(\zeta_i;t)\zeta_i^{j-k}e^{\zeta_it}b_i(0)=0,\qquad j=0,1,\ldots, \lfloor \frac k2 \rfloor-1.$$
and consequently we have
$$\sum_{i=1}^K P_k(\zeta_i^{-\frac{1}{2}};t)\zeta_i^{j-{\frac{k}{2}}}e^{\zeta_it}b_i(0)=0,\qquad j=0,1,\ldots, \lfloor \frac k2 \rfloor-1,$$
which, in terms of the variables $\xi_j=\zeta_j^{-\frac{1}{2}}$, can be written as
$$\sum_{i=1}^K P_k(\xi_i;t)\xi_i^{k-2j}e^{\frac{t}{\xi_i^2}}b_i(0)=0,\qquad j=0,1,\ldots, \lfloor \frac k2 \rfloor-1,$$
leading to 
$$\sum_{i=1}^K S_k(\xi_i;t)\xi_i^{k-2j}e^{\frac{t}{\xi_i^2}}b_i(0)=0,\qquad j=0,1,\ldots, \lfloor \frac k2 \rfloor-1.$$
This supports an alternative view of the motion of $\{d_k\}$ in terms of isospectral flows of symmetric orthogonal polynomials with measure with $e^{t/z^2}$ 
time dependence. 
 Combining (6.3), (6.4) in \cite{chang2016multipeakons} and (3.5), (3.6) in \cite{chang2018moment}, we find that the degenerate system \eqref{eq:d_reduce} is closely related to the 2-mCH interlacing peakon lattice. 

\subsection{The extreme case of $e_k=0$; further details. }
When all the $e_k$ approach zero, it follows from the Heine's formula that the determinant $\det \left(C_k^{(l,p)}(\nu,{\bf{e}})(t)\right)$ in the solution of the finite FST lattice \eqref{eq:finite_fst} has the limit as follows
\begin{align*}
\frac{\det \left(C_k^{(l,p)}(\nu,{\bf{e}})(t)\right)}{\Delta_{[1,k]}(\mathbf{e})}\longrightarrow(-1)^{lp+\frac{l(l-1)}{2}} H_l^{p-k}(t), \qquad\qquad  \text{as} \ e_k \rightarrow0,
\end{align*}
where $H_k^l(t)$ denotes the Hankel determinant $H_k^l(t)=\det(A_{i+j+l}(t))_{i,j=0}^{k-1}$ with the moments $A_k$ given by $ A_k(t)=\int \zeta^k e^{\zeta t} d\nu(\zeta;0)$.  We note that we will also need 
moments for negative $k$ which are well defined 
in our case since the measure has its support away from $0$.  Taking the limit we obtain  
$$
\frac{\det \left(C_k^{(l,p)}(\nu,{\bf{e}})(t)\right)}{\det \left(C_k^{(l+1,p)}(\nu,{\bf{e}})(t)\right)}\longrightarrow(-1)^{p+l} \frac{H_l^{p-k}(t)}{H_{l+1}^{p-k}(t)}, \qquad\qquad  \text{as} \ e_k \rightarrow0
$$
From solution in Theorem \ref{th:sol_finite_fst}, we obtain  
\begin{align*}
d_{2k+1}&=(-1)^k\frac{\det \left(C_{2k+1}^{\left(k,0\right)}\right)}{\det \left(C_{2k+1}^{\left(k+1,0\right)}\right)}-(-1)^{k-1}\frac{\det \left(C_{2k-1}^{\left(k-1,0\right)}\right)}{\det \left(C_{2k-1}^{\left(k,0\right)}\right)}\longrightarrow \frac{H_k^{-2k-1}}{H_{k+1}^{-2k-1}}-\frac{H_{k-1}^{-2k+1}}{H_{k}^{-2k+1}},\\
d_{2k+2}&=(-1)^{k+1}\frac{\det \left(C_{2k+2}^{\left(k+2,0\right)}\right)}{\det \left(C_{2k+2}^{\left(k+1,0\right)}\right)}-(-1)^{k}\frac{\det \left(C_{2k}^{\left(k+1,0\right)}\right)}{\det \left(C_{2k}^{\left(k,0\right)}\right)}\longrightarrow \frac{H_{k+2}^{-2k-2}}{H_{k+1}^{-2k-2}}-\frac{H_{k+1}^{-2k}}{H_{k}^{-2k}},
 \end{align*}
which suggests the following theorem. 
\begin{theorem} \label{th:dege_sol}
The degenerate system \eqref{eq:d_reduce} with $e_k=0$ admits the solution 
\begin{subequations}\label{sol:deg}
\begin{align}
&d_{2k+1}=\frac{H_k^{-2k-1}}{H_{k+1}^{-2k-1}}-\frac{H_{k-1}^{-2k+1}}{H_{k}^{-2k+1}}=\frac{(H_{k}^{-2k})^2}{H_{k+1}^{-2k-1}H_{k}^{-2k+1}},\label{sol:deg_odd}\\
&d_{2k+2}=\frac{H_{k+2}^{-2k-2}}{H_{k+1}^{-2k-2}}-\frac{H_{k+1}^{-2k}}{H_{k}^{-2k}}=-\frac{(H_{k+1}^{-2k-1})^2}{H_{k+1}^{-2k-2}H_{k}^{-2k}},\label{sol:deg_even}
 \end{align}
 \end{subequations}
 where $H_k^l(t)$ denotes the Hankel determinant $H_k^l(t)=\det(A_{i+j+l}(t))_{i,j=0}^{k-1}$ with the moments $A_k$ given by $ A_k(t)=\int \zeta^k e^{\zeta t} d\nu(\zeta;0)$.
\end{theorem}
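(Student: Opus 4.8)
I would split the statement into two claims: the \emph{algebraic} identity asserting that each difference of Hankel ratios in \eqref{sol:deg} equals the quoted product of Hankel determinants, and the \emph{dynamical} claim that these expressions solve the degenerate system \eqref{eq:d_reduce} with $c=0$.

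For the algebraic identity I would apply the Desnanot--Jacobi (Dodgson condensation) identity to the $(k+1)\times(k+1)$ Hankel matrix $(A_{i+j+l})_{i,j=0}^{k}$, which gives
\begin{equation}\label{eq:DJ}
H_k^{l}\,H_k^{l+2}-\big(H_k^{l+1}\big)^2=H_{k+1}^{l}\,H_{k-1}^{l+2}.
\end{equation}
Placing the difference form of $d_{2k+1}$ over the common denominator $H_{k+1}^{-2k-1}H_k^{-2k+1}$ and invoking \eqref{eq:DJ} with $l=-2k-1$ collapses the numerator to $(H_k^{-2k})^2$, yielding \eqref{sol:deg_odd}; the same manoeuvre with $k\mapsto k+1$ and $l=-2k-2$ turns the numerator of $d_{2k+2}$ into $-(H_{k+1}^{-2k-1})^2$, yielding \eqref{sol:deg_even}. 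This part is routine once \eqref{eq:DJ} is recorded.

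For the dynamical claim the cleanest route is a confluence argument rather than brute-force substitution. By Theorem \ref{th:sol_finite_fst}, for every tuple of distinct positive nodes $\mathbf{e}$ the functions $d_k(t;\mathbf{e})$ built from the CSV determinants satisfy the finite FST lattice \eqref{eq:finite_fst}; this is an identity between functions that are jointly analytic in $(t,\mathbf{e})$ on the open set where the $e_i$ are distinct. As $\mathbf{e}\to 0$ the explicit $e$-terms in \eqref{eq:finite_fst} vanish, so the system degenerates precisely to \eqref{eq:d_reduce} with $c=0$, while Heine's formula (recorded immediately above the theorem) identifies the pointwise limit of each $d_k(t;\mathbf{e})$ with the corresponding entry of \eqref{sol:deg}. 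If one can pass to the limit $\mathbf{e}\to 0$ inside the identity $\dot d_k(t;\mathbf{e})=\mathrm{RHS}_k(\mathbf{d}(t;\mathbf{e});\mathbf{e})$, then the Hankel expressions in \eqref{sol:deg} solve the degenerate system and the theorem follows.

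The main obstacle is precisely this passage to the fully confluent limit, since each individual CSV determinant degenerates (its Vandermonde prefactor $\Delta_{[1,\cdot]}(\mathbf{e})$ tends to $0$). I would handle it as follows. Every ratio occurring in the solution formulas is a quotient of two CSV determinants of the \emph{same} size; as each such determinant is antisymmetric in $\mathbf{e}$ (swapping $e_i,e_j$ swaps two rows whose entries are analytic in $e_i$ near the origin, since $V(e_i)=\int d\nu(x)/(x+e_i)$ is analytic there because $\mathrm{supp}\,\nu\subset\R_+$), the common factor $\Delta_{[1,\cdot]}(\mathbf{e})$ divides numerator and denominator and cancels, leaving a quotient of two functions analytic in a full neighborhood of $\mathbf{e}=0$. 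Their limiting values are the signed Hankel determinants of Heine's formula, and the limiting denominators $H_l^{p-k}$ are strictly positive for the interior indices $l\le K$ (Gram positivity of the $K$-atom measure $\zeta^{p-k}d\nu$, with the boundary cases $l=K+1$ producing the expected vanishings). Hence each $d_k(t;\mathbf{e})$ extends jointly analytically to $\mathbf{e}=0$, so \eqref{eq:finite_fst} persists by continuity at $\mathbf{e}=0$, where it reads exactly \eqref{eq:d_reduce}. As a self-contained alternative I would note that one can instead verify \eqref{eq:d_reduce} directly from \eqref{sol:deg}, using the telescoping sums $\sum_{j=0}^{k-1}d_{2j+1}=H_{k-1}^{-2k+1}/H_k^{-2k+1}$ and $\sum_{j=k+1}^{K}d_{2j}=-H_{k+1}^{-2k}/H_k^{-2k}$ together with the column-shift formula for $\dot H_k^{l}$ generated by $\dot A_n=A_{n+1}$; this is purely algebraic but involves heavier Hankel bookkeeping, so I would keep the confluence argument as the main line.
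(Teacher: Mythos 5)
Your proposal is correct, but its main line is genuinely different from the paper's proof. The paper proceeds by direct substitution: it first notes, as you do, that the two forms in \eqref{sol:deg} are equivalent via the Jacobi (Desnanot--Dodgson) identity, recorded as \eqref{id1}; it then verifies \eqref{eq:d_reduce} with $e_k=0$ head-on, using the derivative formula $\dot H_k^j=G_k^j$ of Lemma \ref{lem:der_H} to reduce the system to two bilinear Hankel identities, which are then checked against the Jacobi-identity consequences \eqref{id1}--\eqref{id3} collected in Lemma \ref{lem:bi_id}. In other words, the paper's actual proof is essentially the ``self-contained alternative'' you relegate to a closing remark (telescoping sums, $\dot A_n=A_{n+1}$, bilinear identities), whereas the paper treats the Heine-limit computation preceding the theorem only as a heuristic that ``suggests'' the result. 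Your confluence argument promotes that heuristic to a proof: you invoke Theorem \ref{th:sol_finite_fst} for distinct $e_i$, cancel the Vandermonde factor in each ratio of same-size CSV determinants (legitimate, since each determinant is antisymmetric in $\mathbf{e}$ with entries analytic near $\mathbf{e}=0$ because $V$ is analytic there, the support of $\nu$ lying in $\R_+$ away from the origin), observe that the limiting denominators are Hankel determinants of sizes at most $K$ and hence strictly positive for the $K$-atom measure, and conclude that each $d_k(t;\mathbf{e})$ extends jointly analytically to $\mathbf{e}=0$, so the ODE identity \eqref{eq:finite_fst} passes to the limit, where it becomes \eqref{eq:d_reduce} with $c=0$. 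This is sound, and arguably more conceptual, since it explains why the solution has the form \eqref{sol:deg} (it is the confluent limit of the generic CSV solution); its cost is precisely the limit-interchange justification you supply. The paper's route is heavier on Hankel bookkeeping but purely algebraic, independent of any limiting procedure, and yields the auxiliary identities of Lemma \ref{lem:bi_id} as reusable byproducts.
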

In order to present the proof of this theorem, we need the following lemmas.
\begin{lemma} \label{lem:der_H}
The derivative of $H_k^j(t)$ with respect to $t$ satisfies
$$\dot H_k^j=G_k^j,$$
where $G_k^j$ is the determinant of the matrix obtained from $H_{k+1}^j$ by deleting the $k+1$-th row and the $k$-th column. Here we use the convention $G_k^j=0$ for $k\leq 0$.
\end{lemma}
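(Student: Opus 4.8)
The plan is to reduce everything to two ingredients: the elementary shift relation satisfied by the moments, and the multilinearity of the determinant. For the first, I would differentiate $A_m(t)=\int \zeta^m e^{\zeta t}\,d\nu(\zeta;0)$ under the integral sign to get $\dot A_m=A_{m+1}$; this is legitimate because the measure is finite (indeed a finite sum of atoms) with support bounded away from the origin, the same hypothesis already invoked to make negative-index moments meaningful. For the second, I would write $H_k^j=\det\big(A_{i+i'+j}\big)_{i,i'=0}^{k-1}$ and differentiate column by column.

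Concretely, I would expand $\dot H_k^j$ by the Leibniz rule as a sum of $k$ determinants, the $c$-th of which is $H_k^j$ with its $c$-th column replaced by the $t$-derivative of that column. The decisive simplification comes from $\dot A_m=A_{m+1}$: differentiating the $c$-th column shifts every entry up by one in index, so that for $c=1,\dots,k-1$ the differentiated $c$-th column coincides exactly with the untouched $(c+1)$-th column. Hence each of these determinants has two identical columns and vanishes. Only the term $c=k$ survives, and it equals the determinant in which the first $k-1$ columns are those of $H_k^j$ while the last column is $(A_{j+k},A_{j+k+1},\dots,A_{j+2k-1})^\top$.

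It then remains to recognize this surviving determinant as $G_k^j$. Placing the columns side by side shows that the first $k-1$ columns together with the differentiated last column are precisely the entries of $H_{k+1}^j$ in rows $1,\dots,k$ and columns $1,\dots,k-1,k+1$, i.e. the minor obtained by deleting the $(k+1)$-th row and the $k$-th column of $H_{k+1}^j$, which is exactly the definition of $G_k^j$. I would finish by dispatching the boundary cases: for $k=1$ the statement is just $\dot A_j=A_{j+1}$, and for $k\le 0$ both sides vanish by convention. The only genuinely delicate point is the index bookkeeping in this final identification of the minor with $G_k^j$; the vanishing argument and the differentiation are otherwise purely formal consequences of multilinearity and the moment shift. (Equivalently, since the Hankel matrix is symmetric, one may differentiate row by row and match against the transpose, but the column version lines up most directly with the stated definition of $G_k^j$.)
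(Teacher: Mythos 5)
Your proof is correct and follows exactly the paper's approach: the paper's own proof simply notes that $\dot A_k=A_{k+1}$ and invokes the standard rule for differentiating determinants, which is precisely what you carry out in detail (column-by-column Leibniz expansion, vanishing of all terms with repeated columns, and identification of the surviving term with the minor $G_k^j$).
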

\begin{proof}
Notice that $\dot A_k=A_{k+1}$ and the lemma follows from basic differential rules for derivatives of determinants.  
\end{proof}
\begin{lemma} \label{lem:bi_id}
The following bilinear identities hold.
\begin{align}
&H_{k+1}^lH_{k-1}^{l+2}=H_{k}^lH_{k}^{l+2}-(H_k^{l+1})^2,\label{id1}\\
&H_{k+1}^lH_{k-1}^{l+1}=G_k^{l+1}H_k^l-H_k^{l+1}G_k^l,\label{id2}\\
&G_k^lH_{k-1}^{l+1}=G_{k-1}^{l+1}H_k^l+H_k^{l+1}H_{k-1}^l.\label{id3}
\end{align}
\end{lemma}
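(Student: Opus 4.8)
The plan is to treat \eqref{id1}--\eqref{id3} as purely algebraic identities in the moments $A_m$ (the variable $t$ and the relation $G=\dot H$ play no role in the derivation itself, though they give a useful check). Write $\mathbf v_m=(A_m,A_{m+1},\dots,A_{m+k-1})^\top\in\R^{k}$ and $\mathbf w_m=(A_m,\dots,A_{m+k-2})^\top\in\R^{k-1}$ for the Hankel column vectors, so that $H_k^l=\det(\mathbf v_l,\dots,\mathbf v_{l+k-1})$ while $G_k^l=\det(\mathbf v_l,\dots,\mathbf v_{l+k-2},\mathbf v_{l+k})$ is the same determinant with its last column advanced by one step (this is exactly the ``one-gap'' structure underlying Lemma~\ref{lem:der_H}). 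For \eqref{id1} I would apply the Desnanot--Jacobi (Dodgson) identity to the $(k+1)\times(k+1)$ matrix $M=H_{k+1}^l$, singling out the first and last rows and columns. The central $(k-1)\times(k-1)$ minor is the principal submatrix on indices $1,\dots,k-1$, equal by Hankel symmetry to $H_{k-1}^{l+2}$; the diagonal corner minors are $H_k^{l}$ and $H_k^{l+2}$; and the two anti-diagonal corner minors coincide, both being $H_k^{l+1}$, precisely because a Hankel entry depends only on $i+j$. This yields \eqref{id1} with no leftover signs.

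For \eqref{id2} I would keep $M=H_{k+1}^l$ but apply the generalized Desnanot--Jacobi (Jacobi) identity $\det(M)\,\det(M^{i_1,i_2}_{j_1,j_2})=\det(M^{i_1}_{j_1})\det(M^{i_2}_{j_2})-\det(M^{i_1}_{j_2})\det(M^{i_2}_{j_1})$, now deleting the first and last rows but the last \emph{two} columns. The central minor drops to $H_{k-1}^{l+1}$. The two minors from which the penultimate column is removed are no longer Hankel: each carries a single gap and therefore equals $G_k^{l+1}$ or $G_k^{l}$, whereas the two minors from which the last column is removed are honest Hankel determinants $H_k^{l+1}$ and $H_k^{l}$. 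Reading off the four corners gives $H_{k+1}^lH_{k-1}^{l+1}=G_k^{l+1}H_k^{l}-H_k^{l+1}G_k^{l}$, which is \eqref{id2}; the only preparation needed is to record the sign-free form of the generalized identity, which is standard and immediate to verify on a $3\times3$ template.

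Identity \eqref{id3} is the one I expect to be the real obstacle, because both of its sides have \emph{odd} total degree $2k-1$ in the moments (a size-$k$ determinant times a size-$(k-1)$ one), so it cannot arise from any Desnanot--Jacobi relation, whose three terms always balance to even degree. The fix is the three-term Grassmann--Pl\"ucker relation $[ab]\,[cd]-[ac]\,[bd]+[ad]\,[bc]=0$, valid for any $a,b,c,d\in\R^{k}$ and any fixed core $S$ of $k-2$ columns, where $[xy]:=\det(S,x,y)$ denotes the determinant of the $k\times k$ matrix whose columns are those of $S$ followed by $x,y$. Choosing $S=(\mathbf v_{l+1},\dots,\mathbf v_{l+k-2})$ together with $a=\varepsilon:=(0,\dots,0,1)^\top$, $b=\mathbf v_{l}$, $c=\mathbf v_{l+k-1}$, $d=\mathbf v_{l+k}$, the three brackets avoiding $\varepsilon$ evaluate, up to sign, to $H_k^{l}$, $G_k^{l}$, $H_k^{l+1}$, while the three brackets containing $\varepsilon$ collapse, once the Laplace expansion along $\varepsilon$ deletes the bottom row ($\mathbf v_m\mapsto\mathbf w_m$), to the size-$(k-1)$ determinants $H_{k-1}^{l}$, $H_{k-1}^{l+1}$, $G_{k-1}^{l+1}$; inserting this single basis vector is exactly the device that bridges the size-$k$ and size-$(k-1)$ determinants. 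After bookkeeping the reordering signs the relation becomes $(-1)^{k-1}\big(H_{k-1}^{l}H_k^{l+1}-G_k^{l}H_{k-1}^{l+1}+G_{k-1}^{l+1}H_k^{l}\big)=0$, which is precisely \eqref{id3}. The one delicate step is tracking those permutation signs; as an independent check I would use Lemma~\ref{lem:der_H} to recast \eqref{id3} as the Wronskian relation $\dot H_k^{l}\,H_{k-1}^{l+1}-H_k^{l}\,\dot H_{k-1}^{l+1}=H_k^{l+1}H_{k-1}^{l}$ and confirm the signs on the cases $k=1,2$.
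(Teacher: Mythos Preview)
Your treatment of \eqref{id1} and \eqref{id2} coincides with the paper's: both apply the Jacobi (Desnanot--Jacobi) identity to $D_1=H_{k+1}^l$, with deletions $(i_1,i_2;j_1,j_2)=(1,k{+}1;1,k{+}1)$ for \eqref{id1} and $(1,k{+}1;k,k{+}1)$ for \eqref{id2}.

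For \eqref{id3} you and the paper diverge. Your parity remark---that every term has odd moment-degree $2k-1$---correctly rules out a Jacobi identity applied to a matrix whose entries are \emph{all} moments. The paper sidesteps this by applying Jacobi not to a Hankel matrix but to the bordered array
\[
D_2=\begin{pmatrix}
A_l&\cdots&A_{l+k-1}&0\\[-1pt]
\vdots&&\vdots&\vdots\\[-1pt]
A_{l+k-1}&\cdots&A_{l+2k-2}&1\\[-1pt]
A_{l+k}&\cdots&A_{l+2k-1}&0
\end{pmatrix},
\]
whose extra column $(0,\dots,0,1,0)^{\top}$ carries degree zero; with the same deletions $(1,k{+}1;k,k{+}1)$ one finds $\det D_2=-G_k^l$, the central minor equals $H_{k-1}^{l+1}$, and the four corner minors are $-G_{k-1}^{l+1}$, $H_k^l$, $H_k^{l+1}$, $H_{k-1}^l$, giving \eqref{id3} directly. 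Your route---the three-term Pl\"ucker relation with the auxiliary vector $\varepsilon=e_k$---is correct and is really the same idea in a different guise: inserting a single unit vector to bridge the $k\times k$ and $(k{-}1)\times(k{-}1)$ determinants. The paper's bordering keeps everything inside the Jacobi framework and so unifies all three identities under one tool; your Pl\"ucker argument isolates the structural reason (the size mismatch) more sharply and is arguably cleaner once the signs are settled, which your computation indeed does: the relation reduces to $H_{k-1}^l H_k^{l+1}-G_k^l H_{k-1}^{l+1}+G_{k-1}^{l+1}H_k^l=0$, i.e.\ \eqref{id3}. So the only adjustment needed is to soften the claim that \eqref{id3} ``cannot arise from any Desnanot--Jacobi relation''---it cannot arise from one on a pure Hankel matrix, but the bordered version does the job.
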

\begin{proof}
First, we recall the well known Jacobi determinant identity \cite{aitken1959determinants}, that is, 
for any determinant $D$, 
\begin{eqnarray*}
\Big[D\left(\begin{array}{cc}
i_1 & i_2 \\
j_1 & j_2 \end{array}\right)\Big]^2&=&D\left(\begin{array}{c}
i_1  \\
j_1 \end{array}\right)\cdot D\left(\begin{array}{c}
i_2  \\
j_2 \end{array}\right)-D\left(\begin{array}{c}
i_1  \\
j_2 \end{array}\right)\cdot D\left(\begin{array}{c}
i_2  \\
j_1 \end{array}\right),
\end{eqnarray*}
where $D\left(\begin{array}{cccc}
i_1&i_2 &\cdots& i_k\\
j_1&j_2 &\cdots& j_k
\end{array}\right),\ i_1<i_2<\cdots<i_k,\ j_1<j_2<\cdots<j_k$ denotes the
determinant of the matrix obtained from $D$ by removing the rows with indices
$i_1,i_2 ,\cdots, i_k$ and the columns with indices $j_1,j_2,\cdots j_k$. 
Then the last two relations can be obtained by employing the Jacobi identity to the determinants
 \[
    D_1=\left|\begin{array}{cccc}
    A_{l}&A_{l+1}&\cdots &A_{l+k}\\
     A_{l+1}&A_{l+2}&\cdots &A_{l+k+1}\\
    \vdots&\vdots&\ddots&\vdots\\
    A_{l+k}&A_{l+k+1}&\cdots&A_{l+2k}
    \end{array}\right|,
    \]
    and 
     \[
    D_2=\left|\begin{array}{ccccc}
    A_{l}&A_{l+1}&\cdots &A_{l+k-1}&0\\
     A_{l+1}&A_{l+2}&\cdots &A_{l+k}&0\\
    \vdots&\vdots&\ddots&\vdots&\vdots\\
        A_{l+k-1}&A_{l+k}&\cdots&A_{l+2k-2}&1\\
    A_{l+k}&A_{l+k+1}&\cdots&A_{l+2k-1}&0
    \end{array}\right|,
    \]
    with
    \[i_1=1, j_1=k, i_2=j_2=k+1,\] 
    respectively.
    The first relation is a consequence of applying the Jacobi identity to the determinant $D_1$ with 
    \[i_1=j_1=1, i_2=j_2=k+1,\] 
\end{proof}
Now we are ready to present the proof of Theorem \ref{th:dege_sol}.
\begin{proof}[Proof to Theorem \ref{th:dege_sol}]
First we claim that both expressions for $d_k$ are equivalent, which follows from \eqref{id1}. The next step is to  prove the claimed form of the solution.

Indeed, on substituting the expressions \eqref{sol:deg} into the degenerate system \eqref{eq:d_reduce} with $e_k=0$ and employing the time evolution in Lemma \ref{lem:der_H}, we will see that it suffices to prove 
\begin{align*}
&2H_k^{-2k+1}G_{k-1}^{-2k+2}H_{k-1}^{-2k+3}-H_{k-1}^{-2k+2}\left( H_{k}^{-2k+1}G_{k-1}^{-2k+3}+G_{k}^{-2k+1}H_{k-1}^{-2k+3}\right)\\
=&-H_k^{-2k+2}\left(2H_{k-2}^{-2k+3}H_k^{-2k+1}+(H_{k-1}^{-2k+2})^2\right),\\
&2H_{k}^{-2k}H_{k-1}^{-2k+2}G_{k}^{-2k+1}-H_{k}^{-2k+1}\left(G_{k}^{-2k}H_{k-1}^{-2k+2}+H_{k}^{-2k}G_{k-1}^{-2k+2}\right)\\
=&H_{k-1}^{-2k+1}\left(2H_{k-1}^{-2k+2}H_{k+1}^{-2k}+(H_{k}^{-2k+1})^2\right).
\end{align*}
By rewriting the above equations, we have 
\begin{align*}
&H_k^{-2k+1}\left( G_{k-1}^{-2k+2}H_{k-1}^{-2k+3}-H_{k-1}^{-2k+2}G_{k-1}^{-2k+3}\right)+H_k^{-2k+2}\left( H_{k}^{-2k+1}H_{k-2}^{-2k+3}+(H_{k-1}^{-2k+2})^2\right)\\
&+H_{k-1}^{-2k+3}\left(  H_{k}^{-2k+1} G_{k-1}^{-2k+2}-H_{k-1}^{-2k+2}G_{k}^{-2k+1} \right)+H_k^{-2k+2} H_{k}^{-2k+1}H_{k-2}^{-2k+3}=0,\\
&H_{k-1}^{-2k+2}\left(G_{k}^{-2k}H_{k}^{-2k+1}-H_{k}^{-2k}G_{k}^{-2k+1}\right)+H_{k}^{-2k}\left(H_{k}^{-2k+1}G_{k-1}^{-2k+2}-H_{k-1}^{-2k+2}G_{k}^{-2k+1}\right)\\
&+H_{k-1}^{-2k+1}\left(H_{k-1}^{-2k+2}H_{k+1}^{-2k}+(H_{k}^{-2k+1})^2\right)+H_{k-1}^{-2k+1}H_{k-1}^{-2k+2}H_{k+1}^{-2k}=0.
\end{align*}
It is now not hard to show the validity of these relations by use of the bilinear identities in Lemma \ref{lem:bi_id}. With all these ingredients in place the proof follows.
\end{proof}
\begin{remark}
Note that Theorem \ref{th:dege_sol} gives a solution to the 2-mCH interlacing peakon ODE system, but its form is different from the one obtained from inverse spectral method in \cite{chang2016multipeakons}. 
\end{remark}

\section{Acknowledgement}
X.C. was supported in part by the National Natural Science Foundation of China (Grant Nos. 11688101, 11731014, 11701550) and the Youth Innovation Promotion Association CAS. X.H. was supported in part by the National Natural Science Foundation of China (Grant Nos. 11931017 and 11871336).
J.S. was supported in part by the Natural Sciences and Engineering Research Council
of Canada. A.Z. was supported in part  by the National Natural Science Foundation of China (Grant No.11771015). 

%
%

\begin{appendix}

\end{appendix}

\small
\bibliographystyle{abbrv}
\bibliographystyle{plain}

\def\cydot{\leavevmode\raise.4ex\hbox{.}}
  \def\cydot{\leavevmode\raise.4ex\hbox{.}} \def\cprime{$'$}

\end{document}